\def\namedlabel#1#2{\begingroup
   \def\@currentlabel{#2}%
   \label{#1}\endgroup
}
\crefname{equation}{equation}{equations}%
\newcommand{\diff}{\,\mathrm d}
\newcommand{\indicator}{\mathds{1}}
\newcommand{\EE}{\mathds{E}}
\newcommand{\NN}{\mathds{N}}
\newcommand{\PP}{\mathds{P}}
\newcommand{\QQ}{\mathds{Q}}
\newcommand{\RR}{\mathds{R}}
\newcommand{\FF}{\mathds{F}}
\newcommand{\calS}{\mathscr{S}}
\newcommand{\calY}{\mathscr{Y}}
\def\P{{\mathbb P}}
\def\R{{\mathbb R}}
\def\F{{\mathcal F}}
\def\S{{\mathcal S}}
\def\U{{\mathcal U}}
\def\A{{\mathcal A}}
\def\K{{\mathcal K}}
\def\H{{\mathcal H}}
\def\O{{\mathcal O}}
\def\1{{\mathds 1}}
\DeclareOldFontCommand{\bf}{\normalfont\bfseries}{\mathbf}
\DeclarePairedDelimiterX{\closedStochasticInterval}[1]{[}{]}{\!\delimsize[#1\delimsize]\!}
\DeclarePairedDelimiterX{\leftOpenStochasticInterval}[1]{]}{]}{\!\delimsize]#1\delimsize]\!}
\DeclarePairedDelimiterX{\rightOpenStochasticInterval}[1]{[}{[}{\!\delimsize[#1\delimsize[\!}
\newcommand{\sgn}{\operatorname{sgn}}
\renewcommand{\rho}{(\text{\textcolor{red}{USE \texttt{\textbackslash{}correlationBW} OR \texttt{\textbackslash{}transformThetaToR} INSTEAD OF \texttt{\textbackslash{}rho}!}})}
\newcommand{\assetsProcess}{\Theta}
\newcommand{\baseS}{\bar S}
\newcommand{\assetbeta}{B}  
\newtheoremstyle{boldremark}
	{\topsep}   
	{\topsep}   
	{}          
	{}          
	{\bfseries} 
	{.}         
	{.5em}      
	{}          
\newtheorem{theorem}{Theorem}[section]
\newtheorem{proposition}[theorem]{Proposition}
\newtheorem{lemma}[theorem]{Lemma}
\newtheorem{corollary}[theorem]{Corollary}
\newtheorem{definition}[theorem]{Definition} 
\newtheorem{assumption}[theorem]{Assumption}
\theoremstyle{boldremark}
\newtheorem{remark}[theorem]{Remark}
\newtheorem{example}[theorem]{Example}
\numberwithin{equation}{section}
\newcommand{\ncomment}[1]{}
\newcommand{\revremove}[1]{{\color{red}}} 
\newcommand{\ignore}[1]{}
\title{
Hedging with physical or cash settlement 
	under transient multiplicative price impact
}
\author{%
	Dirk Becherer, 
	Todor Bilarev\footnote{Support by German Science foundation DFG 
is gratefully acknowledged.} 
	\footnote{Email addresses: becherer(at)math.hu-berlin.de, bilarev.todor(at)gmail.com.
Humboldt-Universität zu Berlin, and FactSet Research Systems Inc.. Disclaimer: The opinions expressed in this publication are those of the authors. They do not purport to reflect views of their institutions. 
	}
}
\begin{document}
\maketitle

\begin{abstract}
		We solve the superhedging problem for European options in an illiquid extension of the Black-Scholes model,
		in which transactions have transient 
	price impact and the costs and the strategies for hedging are affected by physical or cash settlement requirements at maturity. Our analysis is based on a convenient choice of reduced effective coordinates of magnitudes at liquidation for geometric dynamic programming. 
		The price impact is transient over time and 
		 multiplicative, ensuring non-negativity of underlying asset prices while maintaining an arbitrage-free model. The basic (log-)linear example is a Black-Scholes model with relative price impact being proportional to the volume of shares traded, where the transience for impact on log-prices is being modelled like in Obizhaeva-Wang \cite{ObizhaevaWang13} for nominal prices. More generally, we allow for non-linear price impact and resilience functions. 
		 The viscosity solutions describing the minimal superhedging price are governed by the transient character of the price impact and by the physical or cash settlement specifications. 
		 Pricing equations under illiquidity extend    no-arbitrage pricing a la Black-Scholes for complete markets in a non-paradoxical way (cf.\ {\c{C}}etin, Soner and Touzi \cite{CetinSonerTouzi10})  even without additional frictions, and can recover it in base cases.
\end{abstract}

{\textbf{Keywords}: Transient price impact, multiplicative impact, hedging, option settlement,
	resilience, 
	viscosity solution, 
	geometric dynamical programming, effective coordinates.
}

{\textbf{MSC2010 subject classifications}: 
49L20, 49L25, 60H30, 91G20, 93E20 
}

{\textbf{JEL subject classifications}: C61, G12, G13}
 
\section{Introduction}\label{sec:Intro}

By using methods of stochastic target problems \cite{SonerTouzi02} and  geometric dynamic programming in suitably choosen reduced effective coordinates of magnitudes at liquidation,  we solve the superhedging problem for European derivatives in a market model with multiplicative transient price impact. 
If the market for the underlying is illiquid or because large volumes are to be traded, there is price impact and feedback effects from hedging 
can affect the minimal superhedging prices \cite{Frey1998,SchoenbucherWilmott2000,FreyPolte11,BankBaum04} and respective hedging strategies, which almost-surely super-replicate the option.
Since trades at maturity can alter the price of the underlying and thereby the derivative payout,  settlement specifications for the option in cash or physical units become relevant and this should show in pricing and hedging equations. 
As our results address hedging in terms of liquidation values, that means ``real'' instead of ``paper'' values \cite{Jarrow92}, we recover such effects, whereas \cite{Frey1998,
	FreyPolte11,BLZ16b} study hedging in terms of book (``paper'') values. The settlement constraints imposed for hedging (in \cref{sec:impact model}) in combination with stability by a suitably chosen notion of value, which depends continuously on trading strategies, moreover help to avoid some known paradoxical effects in price impact modelling (see \cite{BankBaum04,CetinSonerTouzi10}, \cite[Rmk.3.3]{BechererBilarevFrentrup-model-properties} and the comments after \eqref{eq:inst liq value}) and overly excessive opportunities of manipulating derivative payoffs (as in \cite[Sect.4.1]{SchoenbucherWilmott2000}).

The best known model for transient price impact is probably the one due to Obizhaeva and Wang \cite{ObizhaevaWang13}. 
It states that the dynamic holdings $\Theta$ of a large trader 
have additive linear impact (with parameter $\lambda>0$) on the prevailing prices $s$ of the underlying asset, via
\begin{align}\label{eqn:OWadditive}
	\begin{split}
		\textrm{(log-)price:} \qquad &d s_t =  d\bar{s}_t +\lambda dY_t\,, \quad \text{with} \\ 
		\textrm{impact level:} \qquad &dY_t = -\beta Y_t \, dt + d\Theta_t =: -h(Y_t)dt + d\Theta_t\,, 	
	\end{split} 
\end{align}
where $\bar s$ is a given unaffected (fundamental) price evolution for the underlying, while
$Y$ is a market impact level process driven by $\Theta$, whose
mean-reverting dynamics is 
linear in the 
asset holdings $\Theta$ of the large trader and transient over time, recovering at some resilience rate, given by the parameter $\beta>0 $ in the linear resilience function $h$.

Assuming price impact to be additive helps for mathematical tractability (i.p.\  if $\bar{s}$ is a martingale) and can simply serve to approximate multiplicative impact on short horizon. This is  common in the literature on optimal trade execution, as explained by Busseti and Lillo 
in \cite[Sect.6]{BussetiLillo12}, who further describe [in Sect.5] how transient impact is calibrated additively to log-prices, that means multiplicatively in prices. See also the comparison in   \cite[Example~5.5]{BechererBilarevFrentrup2016-deterministic-liquidation} for arguments in favour of impact to be multiplicative if combined with multiplicative price dynamics of Black-Scholes-type. A large strand of literature
investigates (linear-)quadratic control problems in this realm (see e.g.\ \cite{BankSonerVoss16,AckermannEtal22OW}) which are different to the superhedging and -pricing problem.
An undesirable property of additive impact (\refeq{eqn:OWadditive}) in this context is that it can lead to negative asset prices $s$ for the underlying. 
It is plausible that trading a quantity of stocks, that is a fraction of company ownership, should have a relative (hence multiplicative) effect on the price.
Indeed, already Bertsimas and Lo  \cite[Sect.3]{BertsimasLo98} have argued that relative (percentage) 
price impact which is proportional to the traded number of stocks (i.e.\ additive impact with respect to log-price in first order approximation) is more plausible than absolute price impact, and they cite empirical evidence.

A simple way to obtain a multiplicative impact variant is by a log-linear interpretation of the additive Obizhaeva-Wang model \eqref{eqn:OWadditive}, simply by taking $s=\log S$, $\bar s =\log \baseS$ to be {log-prices} instead of nominal prices $S,\baseS$ (affected, respective fundamental). Then impact on $S=\baseS \exp (\lambda Y)$
is multiplicate and log-linear, with the resilience and the (log-)price impact functions from \eqref{eqn:OWadditive} being linear. 
This is the basic log-linear example  (see \cref{example:loglinear}) which is covered by and motivating our transient multiplicative impact model, with unaffected price process $\baseS$ for the underlying asset being of Black-Scholes-Merton type. 
Our analysis moreover admits for non-linear 
and non-parametric
resilience and price impact functions $h$ and $f$ (in \eqref{eq:impact process},
\eqref{eq:price process}).
The model is a multiplicative variant of the (non-linear) additive impact model from \cite{PredoiuShaikhetShreve11}, where  price impact can be interpreted in terms of a limit order book shape that is static with respect to relative price perturbations  with $Y$ being their volume effect process
(see \cite[Sect.2.1]{BechererBilarevFrentrup2016-deterministic-liquidation}).

The contributions of the present paper are threefold: (1) We solve the superhedging problem under transient price impact
which is multiplicative, instead of additive. (2) Our results account for settlement specifications imposed at maturity which require analysis in \emph{liquidation values} instead of book (paper) value, such that physical units of the underlying risky asset and cash matter at maturity (and as well at initial) time, i.e. terminal (initial) price impact cannot be treated as null. Following terminology by \cite{BLZ16b} 
(cf.\ \cref{rem:extension-covered-options}), this means
we solve the hedging problem for non-covered instead of covered options. 
(3) In this realm, the model we study is basically complete,  
with transient price impact being 
the only digression from the friction-less Black-Scholes model assumptions (for $\baseS$), and it yields non-trivial extensions to the classical no-arbitrage pricing and hedging, while avoiding paradoxical effects from illiquidity modelling as mentioned in \cite{CetinSonerTouzi10}, 
without additional further frictions (like transaction costs, or constraints on trading strategies to be ``small'').
In particular, the large trader has neither the ability to ``manipulate'' \cite{Jarrow92,BankBaum04}
the market to achieve unreasonable profits (see \cref{remark:profitable-round-trips}  and subsequent remarks) nor to sidestep liquidity costs entirely and trade in effect like a small trader by exploiting modelling artefacts that occur in lack of sensible continuity properties (cf.\ \cite[Sect.3]{BechererBilarevFrentrup-model-properties}). 

We formulate the superhedging problem 
as a stochastic target problem and 
prove a Dynamic Programming Principle (DPP) along 
reduced coordinates
for \emph{the effective price and impact processes}, 
which represent the price and impact levels that would prevail if the large trader were to unwind her (long or short) position in the underlying risky asset immediately. Along the reduced coordinates, the DPP  provides a way to compare at stopping times the \emph{instantaneous liquidation wealth} and the (minimal) superhedging price, what permits to characterize the  superhedging price as the viscosity solution 
to  a non-linear pricing PDE, which is  a semi-linear extension of the Black-Scholes equation, whose non-linearity involves the (non-parametric) price impact and the resilience functions as well.
If the PDE has a sufficiently regular solution, it yields an optimal strategy which is even \emph{replicating} the option payoff in the required settlement units.
This strategy incorporates the transient nature of impact in that it depends on the effective level of impact.
Our analysis is also motivated by analytical tractability. It shows how effects from transience of price impact
arise in a basically complete model
without other additional frictions from transaction costs or constraints,
with scope for results beyond the ones of the present paper as outlined in \cref{sec:permanent impact}.

While there is a large literature on optimal execution and portfolio optimization problems under transient price impact, mostly for price impact being additive but also for multiplicative impact (see \cite{ObizhaevaWang13,AlfonsiSchiedSlynko12,BussetiLillo12} or \cite{GuoZervos13,BechererBilarevFrentrup2016-stochastic-resilience,BechererBilarevFrentrup-model-properties} and references therein),
the literature on superhedging (or perfect hedging, i.e.\ replication) under price impact, as stated above, is mostly treating permanent and purely instantanuous price impact (transaction costs, possibly non-linear) or a combination of the two \cite{Frey1998,SchoenbucherWilmott2000,BankBaum04,CetinSonerTouzi10,FreyPolte11}, with impact often being taken in multiplicative form. For the implications of option settlement specifications on hedging, only few papers admit for price impact also at maturity. Clearly, it requires some relevant non-zero price impact at maturity to obtain differences between  settlement specifications for options in physical or cash units, as in \cite{BLZ16a}. Yet, most articles \cite{Frey1998,
	CetinSonerTouzi10,
	FreyPolte11} treat another hedging problem which is not posed 
in terms of hedgable units of assets but instead 
in terms of book (that is ``paper'') value, with price impact at maturity (and possibly initiation) in the analysis being effectively taken to be zero. Such relates to a  different hedging problem for ``covered'' options 
(see 
\cref{rem:extension-covered-options}). 
A major difference to the work by Bouchard et al.\ \cite{BLZ16a}, that offered a fresh view to the hedging problem and inspired ours, is that the analysis in \cite{BLZ16a} is for permanent and additive impact. In contrast, our hedging results show non-trivial effects from transience of price impact, that is multiplicative. 
Whereas the basic example to \cite{BLZ16a} is the Bachelier model with additive impact, our basic example is a Black-Scholes-type model with transient multiplicative price impact (see \cref{example:loglinear} and \cref{rem:BLZimpactcomparison}) that is the
log-linear variant of the model by \cite{ObizhaevaWang13}. More detailed comparisons are provided thoughout the paper.

The paper is organised as follows. Sections~\ref{sec:impact model}-\ref{sect:options in illiquid markets}
introduce the model of transient multiplicative price impact and formulate the hedging problem. Effective coordinates for dynamic programming in ``liquidation magnitudes'' are explained in \cref{sec:stoch target problem}.
\cref{sec:The pricing PDEs} identifies hedging prices by viscosity solutions to semilinear PDEs (possibly degenerate, with delta constraints), with technical proofs deferred to \cref{subsec:Proof of viscosity solution properties}.
Results are illustrated by numerical examples in \cref{sec:Numerics}. Finally,
\cref{sec:permanent impact} extends results to combined transient and permanent impact, points out further possible extensions to cross-impact with multiple assets, and comments on related results to the different hedging problem for covered options.


\section{A multiplicative transient price impact model}
\label{sec:impact model}
This section describes the model 
for this paper. An extension with additional permanent impact is described in \cref{sec:permanent impact}. 
Let  $(\Omega,\F,\PP)$ be a complete probability space with countably generated $\F$,  a filtration $\FF = (\F_t)_{t\geq 0}$ satisfying the usual conditions  and an $\FF$-Brownian motion $W$. 
We take semimartingales to have c\`{a}dl\`{a}g paths, 
$\RR_+ = (0, +\infty)$ and $\inf \emptyset = +\infty$.

The \emph{unaffected} price process $\baseS$ of the underlying risky asset evolves, if the large trader (her) is inactive, according to the stochastic differential equation
\begin{equation}
	\diff \baseS_t = \baseS_t (\mu_t \diff t + \sigma \diff W_t), \quad \baseS_0 \in \RR_+,
\end{equation}
with constant $\sigma>0 $ and  bounded progressive process $\mu$. 
The c\`{a}dl\`{a}g adapted process $\assetsProcess$ denotes the evolution of her holdings (in units of shares) in the risky asset, say a stock, which is the underlying for the derivative contingent claim in the hedging problem. The market \emph{impact} process $Y = Y^{\assetsProcess}$ is defined pathwise on the Skorohod space of c\`{a}dl\`{a}g paths, by  
\begin{equation}\label{eq:impact process}
	\diff Y^\assetsProcess_t = -h(Y^\assetsProcess_t)\diff t + \diff \assetsProcess_t, \quad Y_{0-} = y\in \RR,
\end{equation} 
for a \emph{resilience function} $h:\RR\to \RR$ which is Lipschitz continuous function with $\sgn(x)h(x)\geq 0$, as in \cite{BechererBilarevFrentrup2016-deterministic-liquidation,BechererBilarevFrentrup-model-properties}. 
When the large trader  trades dynamically according to strategy $\assetsProcess$, the risky asset price observed on the market,
being the marginal price at which an additional infinitesimal quantity could be traded,
is
\begin{equation}\label{eq:price process}
	S_t:=S^{\assetsProcess}_t := f(Y^\assetsProcess_t) \baseS_t, \quad t\geq 0,
\end{equation} 
where the price \emph{impact function} $f:\RR\to \RR_+$ is increasing and in $C^1$ with $f(0) = 1$.
In particular,  $\lambda:= f'/f$ is a non-negative and locally integrable  $C^0$ function, satisfying 
\begin{equation}\label{eq:impact fct}
	f(x) = \exp\left(\int_0^x \lambda(u)\diff u\right), \quad x\in \RR.
\end{equation}

\revremove{By the monotonicity of $f$, the price impact from her trades is adverse to the large trader.
	During periods where the large trader is inactive, the 
	impact process $Y$ recovers towards its neutral state $0$, so that the relative price  impact $S/\baseS=f(Y)$  w.r.t.~the unaffected (fundamental) price $\baseS$ is persistent but lessens over time, rendering the 
	impact as \emph{transient}.
}

\begin{example}\label{example:loglinear}
	The basic example is transient proportional price impact with unaffected prices $\bar S$ being geometric Brownian motion, as in the Black-Scholes model with $\mu\in \RR$ being constant, for resilience $h(y)=\beta y $   and  log-price impact $\log f(y)=\lambda y$ being linear functions with constants $\beta,\lambda\in \RR_{+}$. 
	Then multiplicative price impact is proportional 
	to the number of shares $\Delta \Theta_t= \Theta_t- \Theta_{t-} $ traded at time $t$, that is linear
	in log-prices 
	\begin{equation}
		\log(S_{t+\delta t})-\log(S_{t-})=\lambda \Delta \Theta_t 
	\end{equation}
	with exponential decay $\log(S_{t+\delta })=\log(S_{t}) \exp(-\beta \delta )$ over time, when there are no further trades within time period $(t,t+\delta]$). For such linear choice of $h$ and $\log f$, log asset prices $\log S_t$ under multiplicative impact evolve like nominal asset prices
	in the seminal model by \cite{ObizhaevaWang13} for  additive transient price impact, as decribed in equations \eqref{eqn:OWadditive}.
	
	Our setting also admits for \emph{resilience rate} $\beta:=0$ (hence $h$) to be zero, what makes price impact permanent (cf.~\cref{sec:permanent impact}) and log-price impact $\log(S_t/S_{0-})$ linear in $Y_t-Y_{0-}=\Theta_t-\Theta_{0-}$.
\end{example}

Next, we specify the large trader's proceeds (negative expenses) $L$, which are the variations of her cash account to fund the dynamic holdings $\assetsProcess$ in the risky asset. For simplicity, we assume zero interest and a riskless asset with 
constant price $1$ as cash, i.e. prices are discounted 
in units of this numeraire asset.
For continuous  strategies $\assetsProcess$ of finite variation,  
\begin{equation}\label{eq:proceeds for fv}
	L(\assetsProcess) = -\int_0^{\cdot} S^{\assetsProcess}\diff \assetsProcess
\end{equation}
are the proceeds. And
there is a unique continuous  extension of the functional $\Theta\mapsto L(\Theta)$ in \eqref{eq:proceeds for fv} to general 
(bounded) 
semimartingale strategies $\assetsProcess$, that is given   
by 
\begin{equation}\label{eq:proceeds for semimart}
	L(\assetsProcess) := \int_0^\cdot F(Y^\assetsProcess_t)\diff \baseS_t - \int_0^\cdot \baseS_t (fh)(Y^\assetsProcess_t)\diff t  - (\baseS F(Y^\assetsProcess) - \baseS_0 F(Y^\assetsProcess_{0-})),
\end{equation}
as shown in \cite[Theorem~3.8]{BechererBilarevFrentrup-model-properties}, with antiderivative
\begin{equation}\label{eq:def of F}
	F(x):= \int_0^x f(u)\diff u, \quad x\in \RR.
\end{equation}
More precisely,  every (c\`{a}dl\`{a}g) semimartingale can be approximated (in probability) in the  Skorokhod space $D([0,T])$ of c\`{a}dl\`{a}g paths with the Skorokhod $M_1$-topology (cf.~\cite[Sect.3.1]{BechererBilarevFrentrup-model-properties}) by a sequence of continuous processes of finite variation;
And for semimartingales $\assetsProcess^n \xrightarrow{\PP} \assetsProcess$ in $(D([0,T]), M_1)$ converging to a semimartingale $\assetsProcess$, then $L(\assetsProcess^n)\xrightarrow{\PP} L(\assetsProcess)$ in $(D([0,T]), M_1)$.
To define $L$ by \eqref{eq:proceeds for semimart} is thus natural as the continuous extension of $L$ from \eqref{eq:proceeds for fv} to all semimartingales.

For other possible applications, it is useful to note 
that, more generally, there is a unique continuous extension even beyond semimartingale strategies, see \cite[Sect.3]{BechererBilarevFrentrup-model-properties}; cf.\   \cite{HorstKivman21,AckermannEtal22OW} for use of related continuity arguments in different applications. But for our hedging problem, semimartingale strategies will suffice, see \eqref{eq:admissible strategies}. 
In section \cref{sec:stoch target problem}, the superhedging problem of \cref{def:hedging of non-covered options} and the superhedging price 
\eqref{eq:def of superhedging price} are defined with respect to a particular set of admissible strategies 
(see \eqref{eq:admissible strategies}). The form of this set (being as in \cite{BLZ16a}) plays a technical role for proofs for the geometric dynamical programming (of \cref{thm:GDPP}). One may ask, to which extent the particular choice of this set affects the superhedging price. We will see that in base cases, the superhedging prices $w$ basically recovers impact-friction-less Black-Scholes prices: See \cref{cor:BSpde} and	\cref{rmk:permanent impact and constant lambda}, and likewise in \cite{BLZ16a} (with respect to the Bachelier model). This indicates, \emph{that} the superhedging price $w$ defined in \eqref{eq:def of superhedging price} should be robust in that is does not depend on particularities of the said set. The almost-sure uniform approximation result by continuous finite variation strategies in \cite[Proposition~3.12]{BechererBilarevFrentrup-model-properties} explains, \emph{why} such robustness indeed holds in the superhedging problem with the notion of liquidation wealth processes, which are given by the unique continuous extension from more elementary to more general trading strategies (than those in \eqref{eq:admissible strategies}).

The proceeds from a block trade of selling $\Delta \assetsProcess_t$ shares at time $t$ are 
\begin{equation}\label{eq:blocktrade}
	-\baseS_t \int_0^{\Delta \assetsProcess_t}f(Y^\assetsProcess_{t-} + x)\diff x,
\end{equation}
showing that the price per share that the large trader pays (resp. obtains) for a block buy (resp. sell) order is between the price before the trade $f(Y^{\assetsProcess}_{t-})\baseS_t$ and the price after the trade $f(Y^{\assetsProcess}_{t})\baseS_t$.
The form of  proceeds and price impact from block trades can be interpreted from the perspective of a  
latent 
limit order book, where a block trade
is 
executed against available orders in the order book for prices between $f(Y^{\assetsProcess}_{t-})\baseS_t$ and $f(Y^{\assetsProcess}_{t-}+\Delta \assetsProcess_t)\baseS_t$, see  \cite[Section~2.1]{BechererBilarevFrentrup2016-deterministic-liquidation}, and 
$Y$ can be understood as a volume effect process in spirit of \cite{PredoiuShaikhetShreve11}.

For a self-financing strategy $(\assetbeta, \assetsProcess)$, in which the dynamic holdings  in cash (the riskless asset, savings account) and in stock (the risky asset)
evolve as  $\assetbeta$ and  $\assetsProcess$, 
the self-financing condition is
\[
\assetbeta = \assetbeta_{0-} + L(\assetsProcess).
\]
In order to define a wealth dynamics for the large trader's strategy, it remains to specify the value of the risky asset position in the portfolio in a suitable way.
If the large trader were forced to liquidate her position of $\assetsProcess_t$ stocks immediately by a single block trade at market prices, her \emph{liquidation wealth}  $V^{\text{liq}}_t = V^{\text{liq}}_t(\assetsProcess)$ at  time $t\ge 0$ 
is 
\begin{equation}\label{eq:wealth process}
	V^{\text{liq}}_t(\assetsProcess) := \assetbeta_t + \baseS_t\int_{0}^{\assetsProcess_t} f(Y^\assetsProcess_t - x)\diff x
	= \assetbeta_{0-} + L(\assetsProcess)_t +\baseS_t\int_{0}^{\assetsProcess_t} f(Y^\assetsProcess_t - x)\diff x.
\end{equation}
This 
wealth 
process is mathematically conveniently tractable, evolving continuously with
\begin{equation}\label{eq:inst liq value}
	\diff V^{\text{liq}}_t = (F(Y_{t-}) - F(Y_{t-}-\assetsProcess_{t-}))\diff \baseS_t - \baseS_t (f(Y_{t-}) - f(Y_{t-}-\assetsProcess_{t-}))h(Y_t)\diff t,
\end{equation}
and $V^{\text{liq}}_0=B_0-$, 
and it inherits from the proceeds \eqref{eq:proceeds for semimart} the stable continuous dependence properties (on $\assetsProcess$), mentioned above.
The notion of liquidation wealth $V^{\text{liq}}(\assetsProcess)$ is relevant for the hedging application of \cref{sect:options in illiquid markets} and it is different from the so-called book wealth process
\begin{equation}\label{eq:def book value}
	V^{\text{book}}(\assetsProcess) := \assetbeta + \assetsProcess S =\assetbeta_{0-} + L(\assetsProcess) + \assetsProcess S,
\end{equation}
in which risky assets are evaluated at the current marginal market price $S$. Because of price impact (monotonicity of $f$, positivity of $f,\baseS,S$), clearly 
$V^{\text{liq}}_t\le V^{\text{book}}_t$. 
In the terminology of Jarrow~\cite[cf. Sect.IV]{Jarrow92}, $V^{\text{liq}}$ is real wealth whereas $V^{\text{book}}$ is paper wealth. Recently, Kolm and Webster \cite{KolmWebster23} have given theoretical and practical reasons that accounting for value (respectively P\&L, that means changes in value) of a risky asset position based on current market prices $S$ () can be misleading and needs to be adjusted for price impact; in their terminology $V^{\text{liq}}$ corresponds to fundamental wealth whereas $V^{\text{book}}$ is accounting wealth, also referred to as mark-to-market wealth.

One obtains
from \eqref{eq:inst liq value} 
absence of arbitrage
within the following set of admissible
strategies
\begin{align*}
	\mathcal{A}^{\text{NA}} := \big\{(\assetsProcess_t)_{t\geq 0} \mid {}&\text{bounded semimartingale,  with $\assetsProcess_{0-} = 0$}
	\\  		&\text{and $\assetsProcess_t = 0$ on $t\in[ T,\infty)$ for some $T\in (0,\infty)$} 
	\big\}.
\end{align*}

\begin{proposition}\label{thm:absence of arbitrage}
	The market is free of arbitrage up to any finite time horizon $T \in [0,\infty)$ in the sense that there exists no 
	$\assetsProcess \in \mathcal{A}^{\text{NA}}$ with $\assetsProcess_t=0$ on $t\in[T,\infty)$ such that for the self-financing strategy $(\assetbeta, \assetsProcess)$ with $V^{\text{liq}}_{0-}:=\assetbeta_{0-} = 0$ we have $\PP[ V^{\text{liq}}_T \geq 0 ] = 1$ and  $\PP[V^{\text{liq}}_T > 0 ] > 0$.
	Moreover, for any such $(\assetbeta, \assetsProcess)$ there exists a probability measure $\QQ^\assetsProcess\sim \PP$ equivalent to $\PP$ (on $\F_T$) 
	such that $ V^{\text{liq}}$ is 	
	a 
	$\QQ^\assetsProcess$-martingale. 
\end{proposition}

In the terminology of \cite[Sect.IV, eqn.(13)]{Jarrow92}, the no-arbitrage result of \cref{thm:absence of arbitrage} states that there exist no \emph{market manipulation trading strategies}. Note that, 
in contrast, there is no reason to expect no-arbitrage in terms of book wealth $V^{\text{book}}$; there are simple counterexamples, see \cref{ex:na-realvspaper} for implications on (super-)hedging prices.

\begin{remark}\label{remark:profitable-round-trips} 	
	In the seminal article by \cite{HubermanStanzl04}, a notion of \emph{no profitable round-trips} (stronger than no-arbitrage) is defined, which (in our notation) requires that there exists no (self-financing) strategy $(B_{0-}, \Theta)$ as in \cref{thm:absence of arbitrage} with $V^{\text{liq}}_{0}= 0$
	and $E[V^{\text{liq}}_{T}]>0$. This means that there is no (self-financing) strategy from zero initial holdings (i.e. $V^{\text{liq}}_{0}=0$), that achieves a terminal liquidation wealth $E[V^{\text{liq}}_{T}]>0$ which is positive in expectation, within a compact time interval  $[0,T]$. Recall that, by definition, liquidation wealth $V^{\text{liq}}$ is the value of an only-cash position held after all stock holdings are liquidated.  
	
	A much cited result from \cite{HubermanStanzl04} states that 
	price impact needs to be linear to exclude profitable round trips.
	This is not in conflict with our modelling, as the proof in \cite{HubermanStanzl04} relies, of course, on some assumptions. They include permanent and additive impact. 
	For comparison, under multiplicative permanent price impact, a linear log-price impact function $\log f$ is sufficient to conclude that $V^{\text{liq}}$ is a martingale under $\PP$ (by \eqref{eqn:log-linear-permanent-impact-wealth}
	in \cref{rmk:permanent impact and constant lambda}), if $\baseS$ is a $\PP$-martingale (e.g.\  geometric Brownian motion, like in the Black-Scholes model under risk-neutral measure). This implies  $\EE[V^{\mathrm{liq}}_{T}]=\EE[ V^{\mathrm{liq}}_{0-}]$, excluding profitable round-trips.
\end{remark}

\begin{example}\label{ex:na-realvspaper}
	To explain why
	absence of arbitrage  in terms of liquidation values as  above (which is, notably, not available in book values) is a relevant property for almost-sure hedging problems, 
	let us compare hedging in liquidation (``real'') and book (``paper'') values in a simple example within the basic setting of  \cref{example:loglinear}: Consider the European option whose derivative payout (in cash, say) at maturity $T$ is $h(S_T):=S_T(1- \exp(-\lambda))$, being strictly positive almost surely. This payoff cannot be super-replicated in terms of 
	liquidation value from non-positive initial wealth, i.e. there exists no 
	$\assetsProcess \in \mathcal{A}^{\text{NA}}$ with 
	$V^{\text{liq}}_{0-}\le 0$ and $V^{\text{liq}}_T \geq h(S_T)$.
	In contrast, in book values, a basic computation shows that the self-financing strategy $(\assetsProcess,B)$ with 
	$\mathcal{A}^{\text{NA}}\ni \assetsProcess:= 1_{[T,\infty)}$ and
	$\assetbeta_{0-} := 0$ satisfies 
	$V^{\text{book}}_{0-}=V^{\text{liq}}_{0-}=0$ but $V^{\text{book}}_{T}=h(S_T)>0$ (whereas $V^{\text{liq}}_{T}=0$). This illustrates a major distinction between (super-)hedging problems posed in liquidation values and those posed in book values, see \cref{rem:extension-covered-options}.
\end{example}

\begin{proof}(of \cref{thm:absence of arbitrage})
	The idea of proof is as in 
	\cite[Sect.4]{BechererBilarevFrentrup-model-properties}, where it was additionally required  for admissible strategies that $V^{\text{liq}}$ is bounded from below.  The latter condition however 
	can be omitted
	in the present setup of bounded strategies. To see this, observe that for any $ \assetsProcess\in \mathcal{A}^{\text{NA}}$  there exists an equivalent measure $\QQ^\assetsProcess\sim \PP$ (on $\F_T$),  
	constructed as in \cite[proof~of~Thm.4.3]{BechererBilarevFrentrup-model-properties}, 
	under which the  wealth process $ V^{\text{liq}}$ is  
	a 
	martingale.
\end{proof}

\begin{remark}\label{rem:BLZimpactcomparison}
	To highlight key differences to \cite{BLZ16a}, let us explain in 
	detail why the basic (log)-linear example for  our setup is the Black-Scholes model for $\baseS$ (geometric Brownian motion) with multiplicative (proportional, relative) price impact (see \cref{example:loglinear}), whereas for  \cite{BLZ16a} the basic example is the Bachelier model (additive Brownian motion) with additive linear price impact.
	Note first that \cite[see equation (2.1)]{BLZ16a} study a general model where price impact is permanent and
	additive, in the sense that (using notation as in our paper)  resilience  $h=0$ is zero, thus $Y=\Theta$ for $Y_{0-}=\Theta_{0-}:=0$, and the stock price  after a small (infinitesimal) 
	trade of size $\delta$ becomes $s(\theta+\delta )\approx s(\theta)+\delta \mathfrak{f}(s(\theta))$ 
	where $\mathfrak{f}:\mathbb{R}\to (0,\infty)$ is a smooth function of the current stock price  $s(\theta)$ which prevails if the large trader holds $\theta$ stocks  just before the trade. That means,  more precisely, 
	$\frac{d }{d\theta}s(\theta)=\mathfrak{f}(s(\theta))$. 
	For comparison, it is instructive to pretend, just formally, that one could choose a `multiplicative' form $\mathfrak{f}(x):=\lambda(x) x$. With $\bar{s}:=s(0)$ one then would get
	$s(\theta)=\left(\exp(\int_0^\theta \lambda(x)dx)\right) \bar{s}$, being reminiscent to (\refeq{eq:price process})--(\refeq{eq:impact fct}),
	and taking  $\lambda$ to be constant would give $s(\theta)=\exp(\lambda \theta) \bar{s}$, what is the permanent impact 
	variant of the basic
	case for multiplicative (transient) impact that is studied in \cref{subsec:The case of constant lambda}. 
	However, a choice like $\mathfrak{f}(x)= \lambda x$ in linear (multiplicative) form with $\lambda> 0$ does not fit with assumptions (H1) and (H2) in \cite{BLZ16a}: Neither is $x\mapsto \lambda x$  (strictly) positive on $\mathbb{R}$,
	nor is $x\mapsto \exp(\lambda x)$ a surjective function from $\mathbb{R} \to \mathbb{R}$.  Observe that asset prices
	in \cite{BLZ16a} take values $x$ in $\mathbb{R}$ (instead of $(0,\infty)$). The instructive basic example for their setting is
	the case of fixed (constant) impact with $\mathfrak{f}(x):=\lambda>0$ where $s(\theta)=\bar{s}+\lambda \theta$, 
	and with  the unaffected asset price $\bar{s}$ evolving as in the Bachelier model (say), see~\cite[Section~3.4]{BLZ16a} with additive permanent price impact. In contrast, the basic  example to our setup is transient proportional impact (being additive and linear in terms of log-prices) with respect to a Black-Scholes-type geometric Brownian motion for $\bar S$, see \cref{example:loglinear}. 
	
\end{remark}

\section{
	Hedging under transient price impact
}
\label{sect:options in illiquid markets}
We solve in Sections~\ref{sect:options in illiquid markets}-\ref{sec:Numerics} the common problem of dynamic hedging,  
where the issuer who wants to hedge the option receives at time $t=0$ the option premium in cash. 
In an illiquid market setting with price impact, it is relevant to distinguish between \emph{cash settlement} and \emph{physical settlement} of an option payoff because, in contrast to frictionless models with unlimited liquidity, moving funds 
between 
the bank account and the risky asset account 
not only induces trading costs from price impact but also affects the price evolution of the underlying, what induces feedback effects \cite{Frey1998,SchoenbucherWilmott2000}. Depending on the option's settlement specifications, a terminal block trade at maturity could affect an option's payoff in different ways (see \cref{sec:Numerics}).  
We  consider contingent claims of the following type.
\begin{definition}
	A European option with maturity $T\ge 0 $ is specified by a measurable map
	$$ (g_0,g_1): (s,y) \in \RR_{+}\times \RR\mapsto (g_0(s,y), g_1(s,y))  \in   \RR\times \RR$$  
	representing the payoff, with cash-settlement part $g_0$ and physical-delivery part $g_1$ at maturity. It entitles its holder to receive  $g_0(S_T, Y_T)$ in cash and $g_1(S_T, Y_T)$ in units of the underlying risky asset, when $(S_T, Y_T)$ is the risky asset price and the level of market impact at maturity.
\end{definition}

Henceforth, $T$ is a fixed maturity time. The optimization task for the seller, that is the issuer, of the option with payoff $(g_0, g_1)$ is to do dynamic hedging at minimal cost to avoid potential losses from her obligation to deliver the payoff at maturity.

Among her admissible trading strategies $\Gamma$ (to be specified precisely in Section~\ref{subsec:targetformu}), she is going to look for the 
cheapest strategies to super-replicate the option's payout in the following sense. 
\begin{definition}[Hedging of an option]\label{def:hedging of non-covered options}
	A superhedging strategy is a self-financing strategy $(\assetbeta, \assetsProcess)$ with $\assetsProcess\in \Gamma$, $\assetsProcess_{0-} = 0$, and  
	\[
	\assetbeta_T \geq g_0 (S_T, Y_T) \quad \text{and}\quad \assetsProcess_T = g_1(S_T, Y_T).
	\]
\end{definition}
We emphasize that a hedging strategy has to deliver  the physical component $g_1(S_T, Y_T)$ at maturity exactly, and that any further (long or short) position in the underlying has to be unwound before options are settled at the resulting price $S_T$  and impact level  $Y_T$. 
In particular, a hedging strategy for a payoff with pure cash delivery part is a so-called round trip, i.e.~it begins and ends with zero shares in the underlying, while the hedging strategy for a payoff with non-trivial physical delivery part should be such that the amount of risky assets held at maturity will meet exactly the physical delivery requirement. Thus, hedging strategies for European contingent claims with physical delivery can be different from those with pure cash delivery part, and we will see, that their respective prices can also differ.

The (minimal) superhedging price of 
an option with payoff $(g_0, g_1)$
is the minimal (infimum of) initial capital $\assetbeta_{0-}$ for which such  a superhedging strategy $(\assetbeta, \assetsProcess)$ exists.
Note that by the impact process $Y$,  the hedging strategy $\assetsProcess$ clearly affects the volatility of the price process $S$ underlying the option payout, because price impact in \eqref{eq:price process} is multiplicative.

Options with pure cash settlement are described by $g_1 = 0$.  Every (reasonable) option could be represented by a payoff with pure cash settlement. Indeed, if the $\Gamma$ set is stable under adding additional jump at maturity time, meaning that $\assetsProcess\in \Gamma$ implies that $\assetsProcess + \Delta\indicator_{\{T\}}\in \Gamma$ for every $\F_T$-measurable $\Delta$, then any European option can be represented by an option with pure cash settlement. To see this for an option with payoff $(g_0, g_1)$, let for $(s,y)\in \RR_+\times \RR$
\begin{equation}\label{eq:pure cash equivalent non-cov optn}
	H(s,y):= \inf \big\{g_0\big(s\tfrac{f(y+\theta)}{f(y)}, y+\theta \big) + s \tfrac{F(y+\theta) - F(y)}{f(y)} \mid \theta = g_1\big(s\tfrac{f(y+\theta)}{f(y)}, y+\theta\big)\big\}.
\end{equation}
The value $H(s,y)$ is the minimal 
amount of cash (riskless assets) needed  to hedge the payoff $(g_0, g_1)$ with a single (instant) block trade at maturity, when just before that trade (at time $T-$) the level of impact is $y$ and there are no holdings in the risky asset whose price is $s$. Indeed, a block trade of size $\theta$ will result in the new price $\tilde{s}= sf(y+\theta)/f(y)$ and impact $\tilde{y} = y+\theta$,  it will incur the cost $s(F(y+\theta) - F(y))/f(y)$.  Thus, it will hedge the claim $(g_0, g_1)$ if  $\theta = g_1(\tilde{s}, \tilde{y})$ and we have enough capital to pay for the block trade and to cover the cash-delivery part that after the block trade equals $g_0\big(\tilde{s}, \tilde{y})$, see \cref{def:hedging of non-covered options}.

\begin{example}
	1. A cash-settled European call option with strike $K$  is specified by the payoff $(g_0(s,y),g_1(s,y))=((s - K)^{+}, 0)$.
	
	2. In comparison, a European call option with strike $K$ and physical settlement has the payoff $(-K\indicator_{\{s\geq K\}}, \indicator_{\{s\geq K\}})$. Although the payoff profile $(g_0,g_1)$ does not directly depend on the level of impact $y$, the equivalent pure cash settlement profile $H$ from \eqref{eq:pure cash equivalent non-cov optn} typically will depend on it, if the function $\lambda$ is not constant. Indeed, the effect on the relative price change $f(y+\theta)/f(y)$ from a block trade $\theta$ can depend on the level $y$ of impact before the trade in general, unless $f(x) = \exp(\lambda x)$ for  $\lambda$ being constant (linear log-price impact). 
	
\end{example}

\begin{remark}\label{rmk:permanent impact and constant lambda}
	We discuss an example to show how the hedging problem for the large trader could be related to hedging in a market with perfect liquidity 
	but with portfolio constraints,  if $F$ from \eqref{eq:def of F} is not surjective onto $\RR$.
	In particular, in this case our market model will not be complete in the sense that not every contingent claim can be perfectly replicated. A prototypical example is the special case of  
	purely permanent impact, i.e. $h\equiv 0$, with constant $\lambda$ and log-linear price impact $\log f(x)= \lambda x$ (as in \cref{example:loglinear}), and an option whose payoff $(H,0)$ specifies settlement in cash only. Hence, we are in the setup of \cite{BankBaum04} with the smooth family of semimartingales $P(x,t) := \exp(\lambda x) \baseS_t$. If $Y_{0-} = 0$ and $\lambda = 1$, \eqref{eq:inst liq value} takes the form 
	\begin{equation}\label{eqn:log-linear-permanent-impact-wealth}
		\diff V^{\text{liq}}_t = (\exp( \assetsProcess_{t}) - 1) \diff \baseS_t.
	\end{equation}
	By the conditions from \cref{def:hedging of non-covered options}, any hedging strategy $\assetsProcess$ satisfies $\assetsProcess_T = 0$, and hence at maturity $S_T = \baseS_T$ and $Y_T = Y_{0-} = 0$. Thus, the superreplication condition becomes $ V^{\text{liq}}_T(\assetsProcess) \geq H(\baseS_T, 0)$. This means that, after a reparametrization $\assetsProcess\mapsto \exp(\assetsProcess) - 1$ of strategies, the superreplication problem in this large investor model  becomes equivalent to the 
	problem in the respective frictionless model (for instance, from Black-Scholes) with price process $\baseS$ for a small investor and with constraints on \emph{the delta} (to be greater than -1 ), that is on the number of risky assets that a hedging strategy might hold. In particular, one should expect that in such situations (where $F$ is not invertible) the pricing equation should contain gradient constraints. Note that this is different from \cite{BankBaum04} because for this particular $f$ the crucial Assumption 5 there is violated, and also different from \cite{BLZ16a} because their assumption (H2) would not hold in this case. 
	
	In the presence of resilience for the market impact ($h\not \equiv 0$),  the situation becomes more complex, however, since the evolution of the price and impact processes depend on the entire history of the trading strategy, and thus a simplification as above is not applicable. But we  will see later in \cref{subsec:The case of constant lambda} that in the case $f = \exp(\lambda\cdot)$ a lower bound on the delta will also emerge naturally in order to make sense of the pricing equation.
\end{remark}

\section{Superhedging by geometric dynamic programming}
\label{sec:stoch target problem}
We formulate the superhedging problem 
as a stochastic target problem and prove a geometric Dynamic Programming Principle (DPP) 
for the control problem whose value function will be characterized.
Notably, it will show that a  DPP (\cref{thm:GDPP}) holds with respect to suitably chosen coordinates, which correspond to 
modified state processes describing the evolution of \textit{effective} price and impact levels that would  
result from an 
immediate unwinding
of the risky asset holdings by the large trader. 
With respect to these new effective coordinates, we will 
characterize the value function of the 
control problem  as a viscosity solution to a partial differential equation, cf.~\eqref{eq:pricing pde bounded f} and \eqref{eq:pde const lambda} in \cref{sec:The pricing PDEs}, that is the pricing PDE
generalizing the (frictionless)  Black-Scholes equation.

\subsection{Stochastic target formulation}\label{subsec:targetformu}

We consider strategies that take values in the constraint set $\K\subseteq \RR$, for one of the two cases    
\begin{align}
	&\K = [-K, +\infty)\text{  for some } K > 0, \text{ or} \label{eq:short-selling constraints}
	\\
	&\K = \RR. \label{eq:no delta constraints}
\end{align}
The short-selling constraints \eqref{eq:short-selling constraints} will be needed when $F$ is not surjective onto $\RR$, see \cref{rmk:permanent impact and constant lambda}, in which case we will consider in \cref{subsec:The case of constant lambda} $f(x) = \exp(\lambda x)$ for some $\lambda > 0$, while $\K = \RR$ will be in force when $f$ is bounded away from 0 and $+\infty$, meaning that the (relative) change of the price from a block trade cannot be arbitrarily big.

For the analysis, we need to allow for jumps in the admissible trading strategies in order to obtain a DPP, following \cite{BLZ16a}. For $k\in \NN$, let $\U_k$ denote the set of random $\{0,\ldots, k\}$-valued measures $\nu$ supported on $[-k, k]\times [0,T]$ that are adapted in the following sense: for every $A\in \mathcal{B}([-k,k])$, the process $t\mapsto \nu(A, [0,t])$ is adapted to the underlying filtration. Note that the elements of $\U_k$ have the representation
\[
\nu(A, [0,t]) = \sum_{i=0}^k \1_{\{(\delta_i, \tau_i) \in A\times [0,t]\}},
\]
where $0\leq \tau_1< \cdots < \tau_k \leq T$ are stopping times and $\delta_i$ is a $[-k,k]$-valued $\F_{\tau_i}$-random variable (might take values 0 as well). Consider also $\U := \bigcup_{k\geq 1} \U_k$.

The admissible trading strategies $\assetsProcess$ that we consider are bounded, take values in $\K$ and have the representation
\begin{equation}\label{eq:admissible strategies}
	\assetsProcess_t = \assetsProcess_{0-} + \int_0^t a_s\diff s + \int_0^t b_s \diff W_s + \int_0^t \int_\RR \delta \nu(\diff \delta, \diff s), \quad t\in [0,T],
\end{equation} 
in which $\assetsProcess_{0-}\in \mathcal{K}$,  $\nu \in \U$ and   $(a, b)\in \A:= \bigcup_{k\geq 1} \A_k$, where for $k\ge 1$ we define
\begin{align*}
	\A_k := \left\{ (a,b)\mid \text{$a$ and $b$ are predictable with } |a|\vee |b|\le k, \ \mathrm d t\otimes \mathrm d \PP\text{-a.e.} \right\}.
\end{align*}
In this sense, we identify the trading strategies by triplets $(a, b, \nu)\in \A\times\U$. For $k\in \NN$ set
$$\Gamma_k := \{(a,b,\nu)\in \A\times \U_k\ :\ \assetsProcess \hbox{ from }\eqref{eq:admissible strategies} \hbox{ takes values in }\mathcal{K}\cap [-k, k]\}$$
and let
$\Gamma := \bigcup_{k\geq 1}\Gamma_k$.
To reformulate the superhedging problem in our price impact model as a stochastic target problem, consider for $(t,z) = (t, s, y, \theta, v)\in [0,T]\times \RR_+\times \RR \times \K \times \RR$ and $\gamma\in \Gamma$ the 
\emph{ state process}
\[
(Z^{t, z, \gamma}_u)_{u\in [t,T]} = (S^{t, z, \gamma}_u, Y^{t, z, \gamma}_u, \assetsProcess^{t, z, \gamma}_u, V^{\text{liq},t, z, \gamma}_u)_{u\in [t,T]},
\]
where the  processes $S^{t, z, \gamma}, Y^{t, z, \gamma}, \assetsProcess^{t, z, \gamma}$ and  $V^{\text{liq},t, z, \gamma}$ correspond to the price, impact, risky asset position and instantaneous liquidation wealth 
processes on $[t,T]$ for the control $\assetsProcess^{t, z, \gamma}$ associated with $\gamma$ (from the  decomposition like \eqref{eq:admissible strategies} but on $[t, T]$), when started at time $t-$ at $s,y,\theta$ and  $v$, respectively. 

Following the discussion in \cref{sect:options in illiquid markets}, for an 
European option whose payoff in cash- and physical units at maturity $T$ is described by a measurable map $(s,y) \in \RR_{+}\times \RR\mapsto (g_0(s,y), g_1(s,y))$, 
$\gamma\in \Gamma$ is a dynamic superhedging strategy if its state process is a.s.\ at maturity $T$ within the set
\[
\mathfrak{G}:= \big\{ (s, y, \theta, v)\in \RR_+\times \RR \times\K \times \RR \ :\ \theta = g_1(s,y),\ v- s\big(F(y) - F(y-\theta)\big)/f(y) \geq g_0(s, y)\big\}
\] 
which we call the \emph{target set}. The superhedging strategies are
\[
\mathcal{G}(t,s,y,\theta,v) := \bigcup_{k\geq 1}\mathcal{G}_k(t,s,y,\theta,v)
\]
for $\theta$ denoting the initial position in the risky asset, and with 
\[
\mathcal{G}_k(t,s,y,\theta,v) := \{\gamma \in \Gamma_k\ :\ Z_T^{t, s,y,\theta,v, \gamma}\in \mathfrak{G}\}.
\]
Our aim is to derive the 
(minimal) 
superhedging price 
\begin{equation}\label{eq:def of superhedging price}
	w(t, s, y) := \inf_{k\geq 1}w_k(t, s, y),  \quad\hbox{where}\quad w_k(t, s,y) := \inf\{v\ :\ \mathcal{G}_k(t,s, y, 0, v) \neq \emptyset\},
\end{equation}
in the case where the hedger holds no assets of the underlying initially.
Let us note that the value function depends on the constraint set $\K$ (via the target set $\mathfrak{G}$). Note also that the set of admissible superhedging strategies (identified with $\mathcal{G}(t,s,y,0,v)$) is a subset of $\A^{\text{NA}}$, meaning that the 
superhedging price of a non-negative payoff $H$ (considered  as pure cash delivery equivalent of $(g_0,g_1)$ in \eqref{eq:pure cash equivalent non-cov optn} that is positive with non-zero probability, is strictly positive. 



\subsection{Effective coordinates and dynamic programming principle}\label{sec:effectice}

For stochastic target problems usually a form of the dynamic programming principle holds and plays a crucial role in deriving a PDE that characterizes the value function (in a viscosity sense). The aim of this section is to  provide a suitable DPP.

Let us first note that the formulation for the superhedging problem above looks not time-consistent, because in the definition  \eqref{eq:def of superhedging price} of the
superhedging price $w$ it is assumed that the initial position in risky assets is zero, whereas at later times it  typically will not be. To obtain a time-consistent formulation, the first naive  idea could be to make the risky asset position a new variable, that means to work with the function $\bar{w}$ defined on $[0,T]\times \RR_+ \times \RR \times \K$ by
\begin{equation}\label{eq:time consistent w}
	\hspace{-0.3cm}	\bar{w}(t, s, y,\theta) := \inf_{k\geq 1}\bar{w}_k(t, s, y,\theta) \ \hbox{with }\  \bar{w}_k(t, s,y,\theta) := \inf\{v\ :\ \mathcal{G}_k(t,s, y, \theta, v) \neq \emptyset\}.
\end{equation}
But the function  $\bar w(t,\cdot,\cdot,\cdot)$ would have to respect  a functional relation  along suitable orbits of the coordinates $(s,y,\theta)$ at any time $t$, because of the equations (\refeq{eq:price process}) and (\refeq{eq:blocktrade}),
namely 
$$\bar{w}(t,s,y,\theta)=\bar{w}\big(t,s f(y-\Delta)/f(y),y-\Delta,\theta-\Delta\big)+ (s/f(y))\int_0^{\Delta}f(y-x)dx,\quad \Delta\in \mathbb{R} .$$  
This suggests that one coordinate dimension is redundant and a `PDE on curves'  may be required to describe $\bar w$.  
Indeed, for our transient price impact problem we show how the state space can be reduced 
to make the analysis more transparent, by studying the problem in suitably reduced coordinates which can be interpreted as quantities (for price and impact $s,y$) at liquidation (of $\theta$)',  and 
with respect to which a DPP and a viscosity characterization is proven for the function $w$. Otherwise, we can follow arguments by \cite{BLZ16a}.%

To derive a dynamic programming principle for the function $w$, we want to compare it (evaluated at suitable coordinate processes) over time with the wealth process. 
Since by definition $w$ assumes zero initial risky asset holding, it is natural to consider the (fictitious) state processes that would prevail 
if the trader would be forced to liquidate her position in the risky asset immediately (with a block trade).
To this end, let
\begin{align}
	\calS(S_t,Y^\assetsProcess_t, \assetsProcess_t)  &:= \baseS_t f(Y^\assetsProcess_t- \assetsProcess_t) 
	= S_t f(Y^\assetsProcess_t- \assetsProcess_t)/f(Y^\assetsProcess_t)
	, \nonumber
	\\ \calY(Y^\assetsProcess_t, \assetsProcess_t) &:= Y^\assetsProcess_t -\assetsProcess_t.  \label{eqn:effective coord processes}
\end{align}
The process  $\calS(s,y,\theta)$ is interpreted as the price of the asset that would prevail after $\theta$ assets were liquidated, when $s$ and $y$ are the price of the risky asset and the market impact just before the trade, while  $\calY(y,\theta)$ would be the level of the market impact after this trade. 
In this sense, we refer to the processes $\calS(S_t,Y^\assetsProcess_t, \assetsProcess_t)$ and $\calY(Y^\assetsProcess_t,  \assetsProcess_t)$ as the \emph{effective price and impact processes}, respectively, 
for a self-financing trading strategy $\assetsProcess$. 
Observe that both processes are continuous, even though the trading strategy $\assetsProcess$ may have jumps.

For the dynamic programming principle in \cref{thm:GDPP}, 
we are going to compare the
\emph{liquidation wealth} $V^{\text{liq}}$, defined in \eqref
{eq:wealth process}, with the value function $w$ along evolutions of
the \emph{effective} price and effective impact processes $(\calS(S,Y^\assetsProcess, \assetsProcess), \calY(Y^\assetsProcess, \assetsProcess))$.

\begin{theorem}[Geometric DPP]\label{thm:GDPP}
	Fix $(t,s, y, v)\in [0,T]\times \RR_{+}\times \RR \times \RR$. 
	\begin{enumerate}[(i)]
		\item If $v > w(t, s, y)$, then there exists $\gamma \in \Gamma$ and $\theta\in \K$ such that 
		\[	
		V^{\text{liq}, t, z, \gamma }_\tau \geq w (\tau, \calS(S^{t,z,\gamma}_\tau, Y^{t,z,\gamma}_\tau, \assetsProcess^{t,z,\gamma}_\tau), Y^{t,z,\gamma}_\tau-\assetsProcess^{t,z,\gamma}_\tau)
		\]
		for all stopping times $\tau \geq t$, where $z = (\calS(s, y,-\theta), y+\theta, \theta, v)$.
		\item \label{part2thm}  Let $k\geq 1$. If $v< w_{2k+2}(t,s,y)$, then for every $\gamma\in \Gamma_k$, $\theta \in \K\cap [-k, k]$ and stopping time $\tau\geq t$ we have, with $z = (\calS(s,y, -\theta), y+\theta, \theta, v)$, that
		\[
		\PP \left[\  V^{\text{liq}, t, z, \gamma }_\tau > w_k(\tau, \calS(S^{t,z,\gamma}_\tau, Y^{t,z,\gamma}_\tau,  \assetsProcess^{t,z,\gamma}_\tau), Y^{t,z,\gamma}_\tau-\assetsProcess^{t,z,\gamma}_\tau)  \ \right] < 1.
		\]
		
	\end{enumerate}
\end{theorem}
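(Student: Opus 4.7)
The plan rests on two \emph{invariance} properties of the reduced coordinates: both the liquidation wealth $V^{\text{liq}}$ and the effective state $(\calS,\calY)$ are continuous across block trades. Indeed, at a trade time $t$ with jump size $\delta$ one has $\baseS_t=\baseS_{t-}$ while $Y$ and $\assetsProcess$ jump by the same increment $\delta$, so $\calY=Y-\assetsProcess$ is unchanged; a short calculation from \eqref{eq:price process} and the definition of $\calS$ likewise shows $\calS_t=\calS_{t-}$; and combining \eqref{eq:proceeds for semimart} with \eqref{eq:wealth process} yields $V^{\text{liq}}_t=V^{\text{liq}}_{t-}$ (the two integrals in $\int_0^{\assetsProcess}f(Y-x)\d x$ before and after the jump differ by exactly the cost of the block trade). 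Consequently, a single instantaneous trade transports the system between the reduced state $(\calS_\tau,\calY_\tau,0,V^{\text{liq}}_\tau)$ and the actual state $(S_\tau,Y_\tau,\assetsProcess_\tau,V^{\text{liq}}_\tau)$ at no cost in wealth, which is what allows one to concatenate a zero-position continuation strategy with an arbitrary prior trajectory.

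For part (i), fix $v>w(t,s,y)$. By definition there exists $\gamma'\in\Gamma$ that superhedges the claim from $(t,s,y,0,v)$; let $\theta$ denote its initial jump at $t$, so that just after the trade the state equals $z=(\calS(s,y,-\theta),y+\theta,\theta,v)$, and let $\gamma\in\Gamma$ be the tail of $\gamma'$ on $[t,T]$. For any stopping time $\tau\geq t$, I argue pathwise: from the reduced state $(\tau,\calS_\tau,\calY_\tau,0,V^{\text{liq}}_\tau)$ one may first insert a block trade of size $\assetsProcess^{t,z,\gamma}_\tau$, which by the invariances lands in the actual state $(S^{t,z,\gamma}_\tau,Y^{t,z,\gamma}_\tau,\assetsProcess^{t,z,\gamma}_\tau,V^{\text{liq}}_\tau)$, and then run the tail of $\gamma$ from $\tau$, which superhedges by construction. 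Hence $V^{\text{liq}}_\tau$ is a feasible wealth for the zero-position superhedging problem started at $(\tau,\calS_\tau,\calY_\tau)$, giving $V^{\text{liq}, t, z, \gamma}_\tau \geq w(\tau,\calS_\tau,\calY_\tau)$.

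Part (ii) is proved by contradiction. Suppose some $\gamma\in\Gamma_k$, $\theta\in\K\cap[-k,k]$ and stopping time $\tau\geq t$ satisfy $V^{\text{liq},t,z,\gamma}_\tau>w_k(\tau,\calS_\tau,\calY_\tau)$ almost surely, where $z$ is as in the statement. For an $\F_\tau$-measurable slack $\varepsilon>0$, a measurable selection argument produces an $\F_\tau$-measurable family of continuation strategies $\tilde\gamma\in\Gamma_k$ that superhedge the claim from the reduced state $(\tau,\calS_\tau,\calY_\tau,0,V^{\text{liq}}_\tau)$. I then concatenate four pieces: an initial block trade of size $\theta$ at $t$ (one jump); the strategy $\gamma$ on $(t,\tau]$ (at most $k$ jumps); a liquidating block trade of size $-\assetsProcess^{t,z,\gamma}_\tau$ at $\tau$ (one jump), which by invariance leaves $V^{\text{liq}}$ untouched and zeroes out the position; and $\tilde\gamma$ on $[\tau,T]$ (at most $k$ jumps). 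The resulting element of $\Gamma_{2k+2}$ superhedges the claim from initial capital $v$, contradicting $v<w_{2k+2}(t,s,y)$; the budget $2k+2$ is precisely $2k$ `interior' jumps plus the two gluing trades.

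The main obstacle is the measurable selection step in part~(ii). One must endow the set of control triplets $(a,b,\nu)\in\Gamma_k$ with a suitable Polish topology so that the graph $\{(\omega,\tilde\gamma)\colon \tilde\gamma\in\Gamma_k\text{ superhedges from the reduced state at }\tau(\omega)\text{ with capital }V^{\text{liq}}_\tau(\omega)\}$ has enough measurable structure for a Jankov--von Neumann / Aumann-type theorem to yield an $\F_\tau$-adapted selection with a uniform jump budget $k$. I would follow the template of \cite[Proposition~3.3]{BLZ16a}, adjusting for the enlarged state space (the additional impact coordinate $y$) and exploiting that here $V^{\text{liq}}$---rather than $V^{\text{book}}$---is the invariant wealth coordinate, which replaces their computations for the gluing trades by the simple identity $\Delta V^{\text{liq}}=0$.
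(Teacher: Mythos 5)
Your proposal is correct and follows essentially the same route as the paper: the block-trade invariance of $V^{\text{liq}}$, $\calS$ and $\calY$ that you establish is exactly what underlies the paper's inequalities between $\bar w_k$ and $w_{k\pm1}$, and your four-piece concatenation with the $2k+2$ jump budget and the deferred measurable selection reproduces the paper's passage through the Soner--Touzi DPP for $\bar w$. The only difference is presentational: the paper makes the auxiliary value function $\bar w$ in full coordinates explicit and cites \cite{SonerTouzi02} for its DPP, whereas you carry out the same gluing directly on strategies.
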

\begin{proof}
	There are similarities and differences to  \cite[proof~of~Prop.3.3]{BLZ16a}, who treat the case for permanent additive impact, so we present the proof in full detail.  As explained in \cref{rmk:permanent impact and constant lambda}, the assumptions in \cite{BLZ16a} do not admit to cover multiplicative price impact. And transience of impact naturally requires a further dimension in the DDP.  
	The proof 
	uses general ideas on dynamic programming for stochastic target problems and geometric flows \cite{SonerTouzi02}.
	We emphasize that for showing the DPP, our proof develops mathematical arguments 
	in terms of effective coordinates and liquidation wealth
	$V^{\text{liq}}$,  
	what simplifies mathematical analysis and makes it more transparent. Such shows also in the ability for extensions, described in \cref{sec:permanent impact}.
	
	It is easy to see that for all $k\geq 2$ and  $(t,s,y,\theta)\in [0,T]\times \RR_+\times \RR \times (\K\cap [-k, k])$
	\begin{align}
		\bar{w}_{k}(t,s,y,\theta) &\geq  w_{k+1}(t, \calS(s,y,\theta), \calY(y,\theta)), \label{eq:ineq bar w and w in DPP 1}
		\\ w_{k-1}(t, \calS(s,y,\theta), \calY(y,\theta)) &\geq \bar{w}_{k}(t,s,y,\theta). \label{eq:ineq bar w and w in DPP 2}
	\end{align}
	Now suppose that $v > w(t,s,y)$. Then by definition of $w$ there exists $\theta\in \K$ and some $\gamma\in \mathcal{G}(t,z)$ for $z = (\calS(s,y, -\theta), y+\theta, \theta, v)$. As in \cite[proof~of~Thm.3.1, Step~1]{SonerTouzi02}, we have for all stopping times $\tau \geq t$  (the first part of) the DPP for $\bar{w}$:  $V_{\tau}^{\text{liq}, t, z, \gamma} \geq \bar{w}(\tau, S_\tau^{t,z,\gamma}, Y_\tau^{t,z,\gamma}, \assetsProcess_\tau^{t,z,\gamma})$. Then (i) follows from \eqref{eq:ineq bar w and w in DPP 1} by taking $k\rightarrow \infty$. 
	
	To prove (ii), let $v< w_{2k+2}(t,s,y)$ and suppose that there exists $\gamma\in \Gamma_k$, $\assetsProcess\in \K\cap [-k, k]$ and a stopping time $\tau \geq t$ such that 
	$V_{\tau}^{\text{liq}, t, z, \gamma} > w_k(\tau, \S(S^{t,z,\gamma}_\tau, Y^{t,z,\gamma}_\tau,  \assetsProcess^{t,z,\gamma}_\tau), Y^{t,z,\gamma}_\tau-\assetsProcess^{t,z,\gamma}_\tau)$ for $z = (\calS(s,y, -\theta), y+\theta, \theta, v)$. 
	Then by \eqref{eq:ineq bar w and w in DPP 2}, $V_{\tau}^{\text{liq}, t, z, \gamma} > \bar{w}_{k+1}(S^{t,z,\gamma}_\tau, Y^{t,z,\gamma}_\tau,  \assetsProcess^{t,z,\gamma}_\tau)$ and thus, by \cite[proof~of~Thm.3.1, Step~2]{SonerTouzi02}, we get that $v \geq \bar{w}_{2k+1}(t, \calS(s,y, -\theta), y+\theta, \theta)$. In particular, by \eqref{eq:ineq bar w and w in DPP 1} we conclude that $v\geq w_{2k+2}(t, s, y)$, hence a contradiction.
\end{proof}

\begin{remark}
	Part (\ref{part2thm}) of the theorem is stated in terms of $w_k$ instead of $w$ 
	because of a measurable-selection argument 
	employed in the proof, cf.\  \cite[Remark~3.2]{BLZ16a}.
\end{remark}

To derive the pricing PDE from the dynamic programming principle in \cref{thm:GDPP}, we need the dynamics of the continuous processes
\begin{equation}\label{eq:dpp process}
	t \mapsto  V^{\text{liq}}_t - \varphi(t, \calS(S_t, Y^\assetsProcess_t, \assetsProcess_t), \calY(Y^\assetsProcess_t, \assetsProcess_t))
\end{equation}
for sufficiently smooth functions $\varphi: [0,T]\times\RR_+ \times \RR$, $(t,s,y)\mapsto \varphi(t,s,y)$, that will later serve as test functions when characterizing value functions by viscosity solutions.

\begin{lemma}\label{lemma:dynamics of DPP}
	For every $\gamma = (a,b,\nu)\in \Gamma$ and every $\varphi\in C^{1,2,1}([0,T]\times\RR_+ \times \RR)$, we have for $\assetsProcess = \assetsProcess^\gamma$
	\begin{align*}
		\diff ( V^{\text{liq}}_t &- \varphi(t, \calS_t, \calY_t)) = \\
		&\calS_t \left( \frac{F(\calY_t +\assetsProcess_t) - F(\calY_t)}{f(\calY_t)} -\varphi_s \right) \left\{\left((\mu_t - \lambda(\calY_t) h(\calY_t+\assetsProcess_t)\right)\diff t + \sigma \diff W_t\right\}
		\\&+ \left\{ -\varphi_t  - 1/2\sigma^2 \calS_t^2 \varphi_{ss}  +  h(\calY_t+\assetsProcess_t)\varphi_y + \mathfrak{F}(\calS_t, \calY_t, \assetsProcess_t)\right\}\diff t,
	\end{align*}
	with 
	\[
	\mathfrak{F}(s, y, \theta) = s \,h(y+\theta)\left( \lambda(y) \frac{F(y+\theta) - F(y)}{f(y)} - \frac{f(y+\theta) - f(y)}{f(y)}\right),
	\]
	where $\calS_t = \calS(S_t, Y^\assetsProcess_t, \assetsProcess_t)$, $\calY_t = \calY(Y^\assetsProcess_t, \assetsProcess_t)$ and the derivatives of $\varphi$ are evaluated at $(t, \calS_t, \calY_t)$.
\end{lemma}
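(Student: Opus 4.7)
\medskip
\noindent\textbf{Proof plan.} The plan is to compute the three ingredients separately—the dynamics of $\calY$, of $\calS$, and of $\varphi(t,\calS,\calY)$ via It\^o—then subtract them from the dynamics of $V^{\text{liq}}$ given in (\ref{eq:inst liq value}) and reorganize the resulting drift and martingale parts. The crucial observation at the outset is that, even though the control $\Theta=(a,b,\nu)$ may have jumps coming from $\nu$, the processes $\calY$, $\calS$ and $V^{\text{liq}}$ are all continuous semimartingales. For $\calY_t=Y^\Theta_t-\Theta_t$ this follows by subtracting \eqref{eq:impact process} from $\mathrm d\Theta_t$, which yields the finite-variation, continuous dynamics
\[
  \mathrm d\calY_t = -h(Y^\Theta_t)\,\mathrm dt = -h(\calY_t+\Theta_t)\,\mathrm dt.
\]
Continuity of $\calS_t=\baseS_t f(\calY_t)$ then follows from continuity of $\baseS$ and $\calY$. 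Continuity of $V^{\text{liq}}$ is already built into \eqref{eq:inst liq value}, but can also be confirmed directly from \eqref{eq:wealth process} by checking that the jump of $\baseS\!\int_0^\Theta f(Y-x)\mathrm dx$ at a block trade exactly cancels the proceeds \eqref{eq:blocktrade}.

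\medskip
\noindent Next, since $\calY$ is of finite variation, an application of It\^o's formula to the product $\baseS f(\calY)$ gives
\[
  \mathrm d\calS_t = f(\calY_t)\,\mathrm d\baseS_t + \baseS_t f'(\calY_t)\,\mathrm d\calY_t = \calS_t\bigl[(\mu_t-\lambda(\calY_t)h(\calY_t+\Theta_t))\,\mathrm dt + \sigma\,\mathrm dW_t\bigr],
\]
using $\lambda = f'/f$ and $\baseS_t = \calS_t/f(\calY_t)$. Applying It\^o to $\varphi(t,\calS_t,\calY_t)$ then produces
\[
  \mathrm d\varphi = \varphi_t\,\mathrm dt + \varphi_S\,\mathrm d\calS_t + \varphi_Y\,\mathrm d\calY_t + \tfrac12\sigma^2\calS_t^2\varphi_{SS}\,\mathrm dt,
\]
with no cross-variation term (again because $\calY$ is of finite variation). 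Rewriting \eqref{eq:inst liq value} in the effective coordinates via $F(Y_t)-F(Y_t-\Theta_t)=F(\calY_t+\Theta_t)-F(\calY_t)$ and analogously for $f$, then substituting $\mathrm d\baseS_t=\baseS_t(\mu_t\,\mathrm dt+\sigma\,\mathrm dW_t)$ with $\baseS_t=\calS_t/f(\calY_t)$, brings everything onto a common footing.

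\medskip
\noindent The final step is to subtract $\mathrm d\varphi$ from $\mathrm dV^{\text{liq}}$ and collect terms. The $\sigma\,\mathrm dW_t$ and $\mu_t\,\mathrm dt$ pieces combine immediately into the factor $\calS_t\bigl((F(\calY_t+\Theta_t)-F(\calY_t))/f(\calY_t)-\varphi_S\bigr)$ multiplying $\{(\mu_t-\lambda(\calY_t)h(\calY_t+\Theta_t))\,\mathrm dt+\sigma\,\mathrm dW_t\}$, once one adds and subtracts the $\lambda h$ drift correction coming from the dynamics of $\calS$. The $-\varphi_t - \tfrac12\sigma^2\calS_t^2\varphi_{SS}$ and $\varphi_Y h(\calY_t+\Theta_t)$ contributions appear directly. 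What remains, after these cancellations, is the pure ``$h$-drift''
\[
  \calS_t \lambda(\calY_t)h(\calY_t+\Theta_t)\tfrac{F(\calY_t+\Theta_t)-F(\calY_t)}{f(\calY_t)} \;-\; \baseS_t\bigl(f(\calY_t+\Theta_t)-f(\calY_t)\bigr)h(\calY_t+\Theta_t),
\]
which, upon replacing $\baseS_t$ by $\calS_t/f(\calY_t)$ and factoring, is precisely $\mathfrak{F}(\calS_t,\calY_t,\Theta_t)$.

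\medskip
\noindent The main obstacle is the algebraic bookkeeping in this last step: matching the coefficient of $\mu_t\,\mathrm dt+\sigma\,\mathrm dW_t$ to the claimed form forces one to split the $\calS$-drift into its $\mu_t$-part and its resilience-correction $-\lambda(\calY_t)h(\calY_t+\Theta_t)$-part, and it is the latter's interplay with the $f$-difference drift from \eqref{eq:inst liq value} that produces the somewhat unexpected resilience functional $\mathfrak{F}$. Apart from this rearrangement, no probabilistic subtlety is required beyond the continuity of $\calS$, $\calY$ and $V^{\text{liq}}$.
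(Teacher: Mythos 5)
Your proposal is correct and follows essentially the same route as the paper: compute $\mathrm d\calS_t$ via the product rule and $f'=\lambda f$, apply It\^o's formula to $\varphi(t,\calS_t,\calY_t)$, rewrite the $V^{\text{liq}}$-dynamics \eqref{eq:inst liq value} in effective coordinates with $\baseS_t=\calS_t/f(\calY_t)$, and then rearrange by adding and subtracting the $\lambda h$-drift term to produce the stated factorization with residual $\mathfrak F$. Your extra remarks on the continuity of $\calY$, $\calS$ and $V^{\text{liq}}$ despite jumps in $\Theta$, and on the finite-variation nature of $\calY$ (killing cross-variation terms), make explicit what the paper uses implicitly, but the argument is the same.
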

\begin{proof}
	Since $\calS_t= \calS(S_t, Y^\assetsProcess_t, \assetsProcess_t)$ equals $ \baseS_t f(Y^\assetsProcess_t - \assetsProcess_t)$, the product rule and $f' = \lambda f$ imply 
	\begin{align}
		\diff \calS_t 
		&=  \calS_t \big\{ (\mu_t - \lambda(Y^\assetsProcess_t - \assetsProcess_t) h(Y^\assetsProcess_t))\diff t +\sigma \diff W_t\big\} .  \label{eq:dyn of calS}
	\end{align}
	By It\^{o}'s formula, we obtain
	\begin{align}
		\diff \varphi(t, &\calS_t, Y^\assetsProcess_t - \assetsProcess_t) = \varphi_t \diff t + \varphi_s \diff \calS_t + \varphi_y \diff(Y^\assetsProcess_t - \assetsProcess_t) + 1/2 \varphi_{ss}\diff[\calS]_t \nonumber\\
		&= \left\{ \varphi_t  -\lambda(Y^\assetsProcess_t - \assetsProcess_t) h(Y^\assetsProcess_t) \calS_t\varphi_s - h(Y^\assetsProcess_t)\varphi_y + 1/2\sigma^2 \calS_t^2 \varphi_{ss}\right\} \diff t \nonumber \\
		&\qquad \qquad +  \mu_t \calS_t\varphi_s \diff t+ \sigma \calS_t\varphi_s \diff W_t. \label{eq:key lemma eq 1}
	\end{align}
	With reference to \eqref{eq:inst liq value}, we have 
	\begin{align}
		\diff &V^{\text{liq}}_t = -h(Y^\assetsProcess_t)\calS_t \frac{f(Y^\assetsProcess_t) - f(Y^\assetsProcess_t - \assetsProcess_t)}{f(Y^\assetsProcess_t - \assetsProcess_t)} \diff t \nonumber
		\\&+ \mu_t \calS_t \frac{F(Y^\assetsProcess_t) - F(Y^\assetsProcess_t - \assetsProcess_t)}{f(Y^\assetsProcess_t - \assetsProcess_t)} \diff t+\sigma \calS_t \frac{F(Y^\assetsProcess_t) - F(Y^\assetsProcess_t - \assetsProcess_t)}{f(Y^\assetsProcess_t - \assetsProcess_t)} \diff W_t . \label{eq:key lemma eq 2}
	\end{align}
	Combining \eqref{eq:key lemma eq 1} and \eqref{eq:key lemma eq 2} and rearranging the terms completes the proof.
\end{proof}

\begin{remark}
	Consider the case when $\lambda$ is constant, i.e.~$f = \exp(\lambda \cdot)$. Then we simply have $\mathfrak{F}\equiv 0$
	and the dynamics of $V^{\text{liq}}$ can be stated in a surprisingly simplified form, namely
	\[
	\diff V^{\text{liq}}_t = F(\assetsProcess_t)\diff \calS_t,
	\] 
	where $\calS_t = \calS(S_t, Y^\assetsProcess_t, \assetsProcess_t)$ has the dynamics \eqref{eq:dyn of calS}. As a consequence, the 
	superhedging price (for the large investor) of an option with maturity $T$ and \emph{pure cash settlement} $H(S_T)$ is at least the small investor's price of $H$,  in absence of the large trader, when the price process is $\baseS$ instead.
	Indeed, for each (bounded) superhedging strategy  $\assetsProcess$   (by the large investor) with initial capital $v$ there exists $\PP^\assetsProcess \sim \PP$ (on $\F_T$) such that $\calS = S_{0-}\mathcal{E}(\sigma \widetilde{W})$ under $\PP^{\assetsProcess}$ for a $\PP^\assetsProcess$-Brownian motion  $\widetilde{W}$. Hence, $V^{\text{liq}}(\assetsProcess)$ is a $\PP^\assetsProcess$-martingale and thus $v \geq \EE^{\PP^{\assetsProcess}}[H(S_T)] = \EE^{\PP^{\assetsProcess}}[H(\calS_T)]$ (recall that $\assetsProcess_T = 0$, implying $S_T = \calS_T$). On the other hand, a Feynman-Kac argument shows that $\EE^{\PP^{\assetsProcess}}[H(\calS_T)]$ is just the classical Black-Scholes price for a small investor in a frictionless market with risky asset process $\baseS$. As $\assetsProcess$ was an arbitrary superhedging strategy with initial capital $v$, taking the infimum yields the claim.

	The above observation shows a notable difference to the model in \cite[Thm.~5.3]{BankBaum04}, where the price for the large investor would be typically smaller. This is mainly due to a different specification of superhedging strategies with less stringent settlement constraints, according to which a large trader 
	may be able to reduce at maturity the payoff of the option to a larger extend, by exploiting her price impact on the underlying at maturity. That means,~she can vary at maturity her risky asset position 
	in order to minimize the payoff with less constraints, and immediately afterwards could unwind 
	any residual risky asset position
	at no additional cost (by absence of bid-ask spread). In contrast, our setup is more restrictive by imposing as settlement constraint on the strategies that they have to replicate the physical delivery part exactly, i.e. after settlement the hedging strategy
	has to hold a non-negative cash position without residual holdings in the risky asset.
	
	We note that an argument as above does  not apply  in the general case with non-constant $\lambda$ for our price impact model.  In fact, examples 
	in 
	\cref{sec:Numerics} 
	also reveal situations where superhedging 
	could be 
	cheaper for the large trader, 
	cf.\ \cref{example:superhedging BS is bigger}.
\end{remark}

\section{The pricing PDEs and main results}
\label{sec:The pricing PDEs}

Next, we determine the terminal value for the function $w$ at maturity date $T$, that will serve as a boundary condition for the pricing PDE. 
Recall that $\K$ is the (constraint) set in which trading strategies take values and set $\K_n = \K \cap [-n,n]$ for $n\in\mathbb{N}$.
\begin{lemma}[Boundary condition]\label{lemma:BC}
	For $n\in \mathbb{N}$, let
	\[
	H_n (s,y):= \inf \big\{g_0\big(s\tfrac{f(y+\theta)}{f(y)}, y+\theta \big) + s \tfrac{F(y+\theta) - F(y)}{f(y)} \mid \theta\in \K_n,\ \theta = g_1\big(s\tfrac{f(y+\theta)}{f(y)}, y+\theta\big)\big\}.
	\]
	Then we have $w_n(T,\cdot) = H_n(\cdot)$ and  $w(T,\cdot) = H(\cdot)$, where the function $H$ is given by
	\begin{equation}\label{eq:def of H}
		H := \inf_{n\geq 0} H_n. \tag{\text{\bf{BC}}}
	\end{equation}
\end{lemma}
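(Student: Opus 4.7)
The plan is to exploit that at maturity $T$, a hedger starting at $T-$ with zero risky assets has only one meaningful degree of freedom: a single block trade at time $T$ carried by the jump part of $\nu$, since the absolutely continuous and Brownian components $(a,b)$ contribute nothing on the degenerate interval $[T,T]$. Writing $\theta := \Delta\assetsProcess_T$ for the net jump, continuity of $V^{\text{liq}}$ together with $\assetsProcess_{T-}=0$ yields
\[
S_T = s\,f(y+\theta)/f(y),\quad Y_T = y+\theta,\quad V^{\text{liq}}_T = v,\quad \assetsProcess_T = \theta,
\]
so the target-set condition $Z_T \in \mathfrak{G}$ reduces precisely to the two pointwise constraints appearing in the definition of $H_n(s,y)$.

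First I would establish $w_n(T,s,y) \leq H_n(s,y)$. For each $\varepsilon>0$, the infimum defining $H_n$ provides a deterministic $\theta \in \K_n$ satisfying $\theta = g_1(sf(y+\theta)/f(y),\,y+\theta)$ and
\[
g_0\bigl(s f(y+\theta)/f(y),\, y+\theta\bigr) + s\tfrac{F(y+\theta)-F(y)}{f(y)} \leq H_n(s,y) + \varepsilon.
\]
The control with $(a,b)=(0,0)$ and a single atom at $(\theta,T)$ in its jump measure belongs to $\Gamma_n$ and, by the reduction above, superhedges from initial capital $v = H_n(s,y)+\varepsilon$ and zero initial risky assets. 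Hence $w_n(T,s,y) \leq H_n(s,y)+\varepsilon$, and $\varepsilon \downarrow 0$ gives the inequality.

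For the reverse direction, take $v > w_n(T,s,y)$ and any $\gamma \in \mathcal{G}_n(T,s,y,0,v)$. Setting $\theta := \Delta\assetsProcess_T$, an $\F_T$-measurable random variable in $\K_n$, the target-set condition gives, on a $\PP$-full event,
\[
\theta = g_1\bigl(sf(y+\theta)/f(y),\, y+\theta\bigr) \ \ \text{and}\ \  v \geq g_0\bigl(sf(y+\theta)/f(y),\, y+\theta\bigr) + s\tfrac{F(y+\theta)-F(y)}{f(y)}.
\]
For every $\omega$ in this event, $\theta(\omega)$ is admissible in the infimum defining $H_n(s,y)$, so the right-hand side of the second inequality is bounded below by $H_n(s,y)$; since $v$, $s$, $y$ are deterministic, we conclude $v \geq H_n(s,y)$, whence $w_n(T,\cdot) \geq H_n(\cdot)$.

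Combining both inequalities gives $w_n(T,\cdot) = H_n(\cdot)$ for every $n$, and passing to the infimum over $n$ via the definitions $w = \inf_n w_n$ and $H = \inf_n H_n$ yields $w(T,\cdot) = H(\cdot)$. The only delicate point, such as it is, lies in the previous paragraph: an optimizer for the lower bound is inherently an $\F_T$-measurable random selection, yet the conclusion $v \geq H_n(s,y)$ must be deterministic; this is legitimate because $v$, $s$, $y$ are constants and the inequality holds almost surely, so any single sample path in the full-measure event already witnesses the bound.
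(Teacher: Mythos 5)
Your proof is correct and follows the same idea as the paper's (much terser, essentially verbal) argument: at maturity the only available action is a single block trade, continuity of $V^{\text{liq}}$ across block trades identifies the post-trade state with the expressions in $H_n$, and the target-set condition then reduces exactly to the two constraints defining $H_n$. You merely make explicit the two inequalities and the measurable-selection/a.s.-to-deterministic step that the paper leaves implicit.
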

\begin{proof}
	At maturity time $T$, the hedger of the option has to do a block trade of size $\theta$ in order to meet the physical delivery part specified by $g_1$, thereby moving the price of the underlying from $s$ to $s\tfrac{f(y+\theta)}{f(y)}$ and the impact level from $y$ to $y+\theta$. 
	Such a block trade incurs costs of size $s \tfrac{F(y+\theta) - F(y)}{f(y)}$ and hence it superreplicates the payoff $(g_0, g_1)$ if the hedger can cover this costs and the required  cash delivery part, which after the block trade is $g_0\big(s\tfrac{f(y+\theta)}{f(y)}, y+\theta \big)$.
\end{proof}
\begin{remark}\label{rmk:boundary condition}
	Note that $H(s,y) = +\infty$ holds  if the equation $\theta = g_1\big(s\tfrac{f(y+\theta)}{f(y)}, y+\theta\big)$ does not have a solution  $\theta $ in $ \K$.
\end{remark}

As  we do not  know at this point whether the value function $w$ is continuous, we need to work with discontinuous viscosity solutions  and hence to consider the relaxed semi-limits 
\begin{equation}\label{eq:def of lower semilimit}
	w_*(t, s, y):= \liminf_{(t',s',y',k)\rightarrow(t,s,y,\infty)}w_k(t',s',y'),
\end{equation}
\begin{equation}\label{eq:def of upper semilimit}
	w^*(t, s, y):= \limsup_{(t',s',y',k)\rightarrow(t,s,y,\infty)}w_k(t',s',y'),
\end{equation}
where the limits are taken over $t' < T$.
Recall that $w$ is a (discontinuous) viscosity solution (of our pricing equations, see \cref{subsec:bounded impact function,subsec:The case of constant lambda}) if $w_*$ (resp.~$w^*$) is a supersolution (resp.~subsolution). 
For proving the viscosity property
we make the following assumption.
%
%
\begin{assumption}\label{as:semicontinuous limits are finite}
	
	\begin{description}
		\item[(Boundness):] The functions  $w_*$ and $w^*$ are bounded on $[0,T]\times \RR_+\times \RR$; 
		\item[(Regular payoff):] $H$ from \eqref{eq:def of H} is continuous, bounded, and $H_n \downarrow H$ uniformly on compacts.
	\end{description}
	
\end{assumption}

In particular, \cref{as:semicontinuous limits are finite} implies that $w(T,\cdot)$ is finite. This means that the payoff is well-behaved in terms of the physical delivery part, i.e.~if the trader was supposed to fulfill his obligation from selling the option immediately, he would be able to do so in any situation (in any state of $(s,y)$) with an admissible trade, provided that he has enough capital.

\subsection{Case study for a general bounded price impact function $f$}
\label{subsec:bounded impact function}

In this section, the following assumption is supposed to hold.
\begin{assumption}\label{assumption:bounded f}
	The resilience function $h$ is Lipschitz and bounded, the price impact function $f$ is bounded away from 0 and $\infty$, i.e.~$\inf_\RR f > 0$ and $\sup_{\RR} f < +\infty$, $\lambda$ is bounded and continuously differentiable with bounded derivative, and $\K = \RR$ (no delta constraints).
\end{assumption}
Under \cref{assumption:bounded f}, the antiderivative $F$ from \eqref{eq:def of F} and its inverse $F^{-1}$ are   bijections $\RR\to \RR$ and 
Lipschitz continuous with Lipschitz constants $\sup_{\RR} f < +\infty$ and $1/\inf_\RR f$, respectively. 

To derive the pricing PDE just formally (at first, to be justified later) in this case, let $(t, s, y)\in [0,T)\times \RR_+\times \RR$ and apply formally part (i) of DPP in \cref{thm:GDPP} to $v = w(t,s,y)$ (assuming that the infimum in the definition of $w$ is attained) and $\tau = t+$, together with \cref{lemma:dynamics of DPP} for $\varphi = w$, assuming that $w$ is smooth enough. Thus we get the existence of $\theta^{*}$ such that
\begin{align*}
	0\leq &s \big(\tfrac{F(y+\theta^{*}) - F(y)}{f(y)} - w_s(t,s,y)\big) \big\{(\mu_t - \lambda(y)h(y+\theta^{*})) \diff t + \sigma \diff W_t\big\} \\
	&+ \big\{-w_t(t,s,y) - \tfrac{1}{2} \sigma^2 s^2 w_{ss}(t,s,y) + h(y+\theta^{*})w_y(t,s,y) + \mathfrak{F}(s,y,\theta^{*})\big\} \diff t.
\end{align*}
Still arguing just at a formal level, this cannot hold unless 
\begin{align}
	&F(y+\theta^{*}) = f(y)w_s(t,s,y) + F(y) \qquad \text{and}\nonumber\\
	-w_t(t,s,y) - &1/2 \sigma^2 s^2 w_{ss}(t,s,y) + h(y+\theta^{*})w_y(t,s,y) + \mathfrak{F}(s,y,\theta^{*})\geq 0.\label{eq:formal pde gen f}
\end{align}
In particular, $\theta^{*} = \theta^{*}(t,y,s) = F^{-1}\big( f(y)w_s(t,s,y) + F(y)\big) - y$. The second part of DPP in \cref{thm:GDPP} will actually give that the drift term must be 0, i.e.~we should have equality in \eqref{eq:formal pde gen f}. This formally motivates that the form of the pricing PDE for $w$ should be
\begin{equation}\label{eq:pricing pde bounded f}
	-w_t -\frac{1}{2}\sigma^2 s^2w_{ss} + \tilde{h}(t,s,y) (w_y +s\lambda(y)w_s) + s \tilde{h}(t,s,y) (1 - \tfrac{\tilde{f}(t,s,y)}{f(y)}) = 0, \tag{\text{\bf{PDE}}}
\end{equation}
where for $(t,s,y)\in [0,T)\times \RR_+\times \RR$
\begin{align*}
	\tilde{h}(t,s,y)&:= h\circ F^{-1}\big(f(y)w_s(t,s,y) + F(y)\big), \\
	\tilde{f}(t,s,y)&:= f\circ F^{-1}\big(f(y)w_s(t,s,y) + F(y)\big).
\end{align*}
Observe that the PDE is semilinear and degenerate (since not containing  second order derivatives involving the  $y$-variable).
Our main result is as follows.
\begin{theorem}\label{thm:pde for general f}
	Under \Cref{as:semicontinuous limits are finite} and \Cref{assumption:bounded f}, the
	value function 
	$w$ 
	of the superhedging problem
	is continuous and 
	is 
	the unique bounded viscosity solution to \eqref{eq:pricing pde bounded f} with the boundary condition $w(T,\cdot) = H(\cdot)$, where $H$ is defined in \eqref{eq:def of H}.
\end{theorem}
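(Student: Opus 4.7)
My proof follows the standard stochastic target recipe: derive the viscosity sub- and super-solution properties of $w^*$ and $w_*$ from the two halves of the geometric DPP (\Cref{thm:GDPP}), identify the terminal condition via \Cref{lemma:BC}, and then close the loop by a comparison principle that forces $w^* \leq w_*$. Combined with the trivial $w_* \leq w^*$, this yields $w=w_*=w^*$ continuous, and the uniqueness statement, in one stroke. The key ingredient is \Cref{lemma:dynamics of DPP}: along an arbitrary admissible control, the semimartingale decomposition of $V^{\text{liq}}_t - \varphi(t,\calS_t,\calY_t)$ has a \emph{martingale coefficient} $\calS_t(F(\calY_t+\assetsProcess_t)-F(\calY_t))/f(\calY_t) - \varphi_S$ that can be made to vanish by the choice $\theta^* := F^{-1}(f(y)\varphi_S + F(y)) - y$, which is well-defined and unique since $F$ is a bijection of $\RR$ under \Cref{assumption:bounded f}.

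\textbf{Viscosity inequalities.} For the supersolution property, I fix a smooth test $\varphi$ at which $w_* - \varphi$ has a strict local minimum with equality at $(t_0,s_0,y_0) \in [0,T)\times \RR_+\times \RR$. From \eqref{eq:def of lower semilimit} I extract $(t_n,s_n,y_n,k_n) \to (t_0,s_0,y_0,\infty)$ realising the $\liminf$, set $v_n := w(t_n,s_n,y_n) + 1/n$, and invoke part (i) of \Cref{thm:GDPP} with initial risky position $0$ to obtain a strategy $\gamma_n$ keeping $V^{\text{liq}}_\tau \geq w(\tau,\calS_\tau,\calY_\tau) \geq \varphi(\tau,\calS_\tau,\calY_\tau)$ up to the first exit from a small ball around $(t_0,s_0,y_0)$ (the second inequality holds because $w \geq w_* \geq \varphi$ near the strict minimum). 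Feeding this into \Cref{lemma:dynamics of DPP} and running the classical localisation-and-martingale argument, the diffusion coefficient must vanish at the contact point, which fixes $\theta^*$ through the martingale-killing identity, and the residual drift must be non-negative; expanding $\mathfrak{F}$ using $F(y_0+\theta^*)-F(y_0)=f(y_0)\varphi_S$ produces exactly the supersolution inequality of \eqref{eq:pricing pde bounded f}. The subsolution property for $w^*$ follows symmetrically from part (ii) of \Cref{thm:GDPP}: assuming for contradiction that the PDE operator at $\varphi$ is strictly negative at a local maximum point of $w^* - \varphi$, a measurable-selection argument picks an initial block trade $\theta^* \in \K = \RR$ (no delta constraint under \Cref{assumption:bounded f}) and a subsequent smooth control so that, by \Cref{lemma:dynamics of DPP}, the resulting $V^{\text{liq}}_\tau$ strictly dominates $w_k(\tau,\calS_\tau,\calY_\tau)$ almost surely at a suitably chosen exit time, in contradiction with part (ii).

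\textbf{Boundary condition and comparison.} At $t = T$ the uniform convergence $H_n \downarrow H$ on compacts, together with continuity of $H$ from \Cref{as:semicontinuous limits are finite} and \Cref{lemma:BC}, yields $w_*(T,\cdot) = w^*(T,\cdot) = H$. The main obstacle, and the technical bulk of the proof (deferred in the paper to \Cref{subsec:Proof of viscosity solution properties}), is a comparison principle for bounded sub- and super-solutions of the semilinear degenerate PDE \eqref{eq:pricing pde bounded f}. The PDE is degenerate in $y$ (no second-order $y$-derivative); the nonlinearity enters only via $\tilde h$ and $\tilde f$, both depending on $w_S$ through the globally Lipschitz map $F^{-1}$ (Lipschitz constant $1/\inf f < \infty$), so the dependence on the gradient is uniformly Lipschitz and the coefficients $h,\lambda,f,1/f$ are bounded. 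I would carry out the doubling-of-variables procedure after the logarithmic change $s=e^x$ that removes the multiplicative $s$-factors, with a quadratic penalty $|s-s'|^2/\varepsilon + |y-y'|^2/\varepsilon$ and an $e^{-\alpha t}$ rescaling absorbing the lower-order linear terms; the degeneracy in $y$ causes no additional difficulty since the $y$-transport is first-order with bounded Lipschitz coefficient. Once comparison holds, $w^* \leq w_*$, whence $w=w^*=w_*$ is continuous, and uniqueness among bounded viscosity solutions with the prescribed terminal condition follows.
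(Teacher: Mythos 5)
Your architecture matches the paper's exactly: viscosity sub-/supersolution properties of $w^*$, $w_*$ from the two halves of \Cref{thm:GDPP} combined with \Cref{lemma:dynamics of DPP}, terminal condition from \Cref{lemma:BC} and the regularity of $H$, and then a comparison principle to force $w^*\le w_*$, continuity and uniqueness. The viscosity-property and boundary parts of your sketch are at the same level of detail as the paper (which itself defers the full argument to the constant-$\lambda$ case) and are fine.

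The gap is in the comparison step, precisely where you wave at "standard doubling after $s=e^x$." The nonlinearity of \eqref{eq:pricing pde bounded f} enters through coefficients of the form $B(y, f(y)\partial_s\varphi)$ — e.g.\ $\tilde h = h\circ F^{-1}(f(y)w_S+F(y))$ multiplying the first-order transport $\partial_y\varphi$ — so the spatial variable $y$ and the gradient are coupled multiplicatively inside the argument of $B$. In the doubling of variables with penalty $\tfrac n2|x_1-x_2|^2$, both jets carry the same gradient $p=n(x_1-x_2)$, and the difference of the $y$-transport terms contains
\[
\bigl|B\bigl(y_1, f(y_1)p_s\bigr)-B\bigl(y_2, f(y_2)p_s\bigr)\bigr|\,|p_y|
\;\lesssim\; |y_1-y_2|\,|p_y| \;+\; |y_1-y_2|\,|p_s|\,|p_y|,
\]
with $p_s=n(s_1-s_2)$, $p_y=n(y_1-y_2)$. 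The first term is $n|y_1-y_2|^2\to 0$, but the second is of order $n^2|y_1-y_2|^2|s_1-s_2|$, which is \emph{not} controlled by $n|x_1-x_2|^2\to 0$ (it can be as large as $\epsilon_n^{3/2}\sqrt n$). This violates the standard structure condition $|F(x,p,\cdot)-F(x',p,\cdot)|\le\omega(|x-x'|(1+|p|))$, and the logarithmic substitution $s=e^x$ does nothing to decouple $y$ from the gradient slot (it just replaces $f(y)\partial_s\varphi$ by $f(y)e^{-x}\partial_x\varphi$). The paper's resolution is the specific change of unknown $\tilde u(t,s,y)=e^{\kappa t}u(t,sf(y),y)$ of \Cref{lemma:comparison reformulation}, after which the coefficients in \eqref{eq:modified pde comparison} depend on the gradient only through $e^{-\kappa t}\partial_s\varphi$ — identical at the doubled maximum for sub- and supersolution — so that only genuinely Lipschitz-in-$y$ differences survive and the estimates in \Cref{thm:comparison multipl general} close. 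Without this (or an equivalent) decoupling device, your comparison argument does not go through, and since continuity and uniqueness of $w$ both rest on comparison, this is the missing idea you need to supply.
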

\begin{proof}
	The viscosity property, i.e.~that $w_*$ (respectively~$w^*$) is a viscosity supersolution (respectively~subsolution), follows by the dynamic programming principle in \cref{thm:GDPP} together with \cref{lemma:dynamics of DPP}. The key arguments are presented in \Cref{subsec:Proof of viscosity solution properties} in detail for the case where $\lambda$ is constant, which actually leads to a slightly more involved pricing PDE \eqref{eq:pde const lambda} (including gradient constraints) requiring additional justifications.
	
	The comparison result of \cref{thm:comparison multipl general} proves uniqueness and continuity, cf.~\cref{rmk:proof of uniqueness}.
\end{proof}

Let us conclude this section by commenting on some consequences from \cref{thm:pde for general f} for the
superhedging price and the existence of a 
respective  hedging strategy. A numerical example is presented in \cref{sec:Numerics}.

\begin{remark}
	Like in the classical case of liquid markets (without price impact), the 
	superhedging price does not depend on the drift in the unperturbed price process. This may be seen more directly by working under the equivalent martingale measure for $\baseS$ from the beginning. On the other hand, the 
	superhedging price   depends non-trivially on the initial level of impact $y$ and the resilience function $h$, and can do so
	even for option payoffs of the form $(g_0(s), 0)$, i.e.~not depending on the level of impact. So it turns out that for the pricing and hedging (cf.~\cref{rmk:replicating strategy general f}) the deviation of the market price from the `unaffected' value, determined by the impact level $y$, is a relevant state variable.
	
\end{remark}

\begin{remark}
	Observe that for only permanent impact, that means~for $h\equiv 0$, \eqref{eq:pricing pde bounded f} simplifies to the  classical (frictionless) Black-Scholes pricing equation and hence the
	superhedging price for the large trader equals the  Black-Scholes price for the option with payoff $H$.
\end{remark}

\begin{remark}
	\label{rmk:replicating strategy general f}
	Under sufficient regularity, it turns out  that a strategy can be constructed that is perfectly replicating the option payout from 
	the (minimal) superhedging price. This means, we have dynamic hedging in the sense of replication, like in the frictionless complete   Black-Scholes model.
	
	To this end, suppose that a function $w\in C^{1,3,1}_b([0,T]\times \RR_+ \times \RR)$ solves the pricing PDE \eqref{eq:pricing pde bounded f} with the boundary condition $w(T,\cdot) = H(\cdot)$.
	Then for any $\varepsilon>0$ a superhedging strategy with an initial cost of $w(0,s,y)+\varepsilon$ can be constructed as follows. Consider the self-financing strategy $(\assetbeta, \assetsProcess)$ with $\assetbeta_{0-} = w(0,s,y)+\varepsilon$, $\assetsProcess_0  = F^{-1}(f(y)w_s(0,s,y) + F(y)) - y$, meaning that a block trade of size $\Delta\assetsProcess_0 = \assetsProcess_0$ is performed at time 0, and
	\begin{align}
		&\assetsProcess_t =  F^{-1}\left(f(\calY^\assetsProcess_t)w_s(t,\calS(S_t, Y^\assetsProcess_t, \assetsProcess_t),\calY^\assetsProcess_t) + F(\calY^\assetsProcess_t)\right) - \calY^\assetsProcess_t \quad \text{for }t\in [0,T),\label{eq:replicating strategy gen case}\\
		&\assetsProcess_T = 0, \quad \text{i.e.}\quad \Delta \assetsProcess_T = \assetsProcess_{T-}, \label{eq:replicating strategy gen case BC}
	\end{align}
	where $\calY^\assetsProcess = Y^\assetsProcess - \assetsProcess$. Then by \cref{lemma:dynamics of DPP}, together with \eqref{eq:replicating strategy gen case} and \eqref{eq:pricing pde bounded f} we conclude that  
	\begin{align*}
		\varepsilon&= V^{\text{liq}}_0(\assetsProcess) - w(0, s, y) = V^{\text{liq}}_T(\assetsProcess) - w(T, \calS(S_T, Y^\assetsProcess_T, \assetsProcess_T), \calY^\assetsProcess_T) \\
		&= V^{\text{liq}}_T(\assetsProcess) - H(\calS(S_T, Y^\assetsProcess_T, \assetsProcess_T), \calY^\assetsProcess_T)\\
		&= V^{\text{liq}}_T(\assetsProcess) - H(S_T, Y^\assetsProcess_T),\qquad \assetsProcess_T = 0,
	\end{align*}
	where the last line follows from \eqref{eq:replicating strategy gen case BC}.
	By definition of $H$, having $H+\varepsilon$ in cash  at time $T$  will be enough to superreplicate the European claim with payoff $(g_0, g_1)$ by doing a possible additional final block trade of size $\Delta^\varepsilon$. Note that such a block trade would not affect $V^{\text{liq}}_T$. Hence,  the strategy $\assetsProcess + \indicator_{\{T\}}\Delta^\varepsilon$ will be superreplicating for the European claim. Note that one could take $\varepsilon = 0$ if the constructed strategy is bounded and the infimum in the definition of $H_n$ is attained (cf.~\cref{lemma:BC}), i.e.~we have a replicating strategy in this case.
	
	An application of It\^{o}'s formula gives that a strategy $\assetsProcess$ satisfying the fixed-point problem \eqref{eq:replicating strategy gen case}  can be obtained, under suitable regularity, by solving the 
	following system of SDEs
	\begin{equation}\label{eq:repl strategy system}
		\begin{aligned}
			&\diff \calS_t = \calS_t[ (\mu_t - \lambda(\calY^\assetsProcess_t) h(\calY^\assetsProcess_t + \assetsProcess_t)) \diff t + \sigma \diff W_t], \\
			&\diff \assetsProcess_t = a(t, \calS_t, \calY^\assetsProcess_t, \assetsProcess_t) \diff t +  b(t, \calS_t,  \calY^\assetsProcess_t) \diff W_t,\\
			&\diff \calY^\assetsProcess_t = -h(\calY^\assetsProcess_t + \assetsProcess_t) \diff t,
		\end{aligned}
	\end{equation}
	with initial conditions $\calS_0 = s$, $\calY^\assetsProcess_0 = y$ and $\assetsProcess_0 = F^{-1}(f(y)w_s(0,s,y) + F(y)) - y$, where
	\begin{align*}
		&a(t, s, y, \theta) := h(y+\theta)\left(1- \frac{\lambda f w_s - f - w_{sy} - \lambda s w_{ss}}{f(F^{-1}(f w_s + F))}\right) + \frac{w_{ts} + s\mu_t w_{ss} + 1/2\sigma^2 s^2 w_{sss}}{f(F^{-1}(f w_s + F))},\\
		&b(t,s,y):= \frac{\sigma s w_{ss}}{f(F^{-1}(f w_s + F))},
	\end{align*}
	and where we write $f = f(y), \lambda = \lambda(y) $, etc., when arguments of functions have not been specified, to  ease the notation. Thus, an optimal superhedging strategy accounts for the transient nature of price impact.
	%
	%
\end{remark}

\begin{remark} To describe how replicating hedging strategies in our model are decribed by coupled Forward-Backward SDEs,
	suppose that $\assetsProcess$ is a replicating strategy for an option with cash-equivalent payoff $H$ and let $(\calY, \calS)$ be the effective impact and price processes. By a change of measure argument, we can assume w.l.o.g.~that $\mu = 0$.
	Setting $Z_t:= \sigma \calS_t \tfrac{F(\calY_t+\assetsProcess_t) - F(\calY_t)}{f(\calY_t)}$, giving $\assetsProcess_t = F^{-1}\left( \sigma^{-1} \calS_t^{-1} f(\calY_t)Z_t+F(\calY_t)\right) - \calY_t$, and using \eqref{eq:key lemma eq 2} leads to the following coupled FBSDE:
	\begin{align*}
		&\diff \calY_t = -(h\circ  F^{-1})\left( \sigma^{-1} \calS_t^{-1} f(\calY_t)Z_t+F(\calY_t)\right)\diff t,  \\
		&\diff \calS_t = \calS_t[ - \lambda(\calY_t) (h\circ  F^{-1})\left( \sigma^{-1} \calS_t^{-1} f(\calY_t)Z_t+F(\calY_t)\right) \diff t + \sigma \diff W_t],\\
		&\diff V^{\text{liq}}_t =\mathfrak{g}(\calY_t, \calS_t, Z_t) \diff t + Z_t \diff W_t, \quad V^{\text{liq}}_T = H(\calS_T, \calY_T),
	\end{align*}
	where the driver of the FBSDE  $\mathfrak{g}:\RR\times \RR_+\times \RR\to \RR$ is given by 
	\[
	\mathfrak{g}(y, s, z) =  -s (h\circ F^{-1})\left( \sigma^{-1} s^{-1} f(y)z+F(y)\right)  \frac{(f\circ F^{-1})\left( \sigma^{-1} s^{-1} f(y)z+F(y)\right)  - f(y)}{f(y)}.
	\]
	
\end{remark}

\begin{example}
	As instructive example, consider an option with maturity $T>0$ whose payout at maturity is the spot price of the asset, i.e.~$H(s,y) = s$. In the frictionless Black-Scholes model its arbitrage-free price is $v^{\text{BS}}(s) = s$ and a (minimal) replicating strategy is to buy one share at initiation and hold it until maturity, where it is liquidated at the spot price.
	For the solution in our price impact model, let us consider the classical solution to \eqref{eq:pricing pde bounded f} with the boundary condition $H$ being given by the function 
	\begin{equation}\label{eq:val fn buy and hold}
		w(t,s,y) = \tfrac{F(y+c(t,y)) - F(y)}{f(y)}s,
	\end{equation} 
	where $c:[0,T]\times \RR\to \RR$ is a solution to the following backwards transport equation 
	\[
	\begin{cases}
		-c_t +h(y+c)c_y &= 0 \qquad \text{on }[0,T)\times \RR,\\
		c(T,y) &= F^{-1}(f(y) + F(y))-y \qquad \text{on }\RR.
	\end{cases}
	\]
	In particular, by the dynamics of $c$ it then holds for any strategy $\assetsProcess$ than $c(t,\mathcal{Y}^\assetsProcess_t) = c(0,\mathcal{Y}^\assetsProcess_0)$ for $t\in [0, T]$, where $\mathcal{Y}^\assetsProcess$ is the effective impact process corresponding to $\assetsProcess$.
	In particular, by \eqref{eq:replicating strategy gen case} a minimal replicating strategy satisfies on $[0,T)$ the equation
	\[
	\assetsProcess^*_t = c(t, \mathcal{Y}^{\assetsProcess^*}_t) = c(0, \mathcal{Y}^{\assetsProcess^*}_0)  =  c(0,Y_{0-}).
	\]
	Hence, a buy-and-hold strategy is also optimal for the large trader. We can observe:

	1.) Purely permanent impact (that means $h = 0$) would yield the Black-Scholes price $w(t,s,y) = s$ and the buy-and-hold strategy of $c(0,y) = c(T,y) = F^{-1}(f(y) + F(y))-y$ shares, that does not depend on the maturity $T$.

	2.) In comparison, if price impact is not permanent 
	but transient ($h\neq 0$), the price \eqref{eq:val fn buy and hold} depends non-trivially on the maturity $T$, in addition to the price impact and resilience functions $f$ and $h$ respectively.

	3.) The large trader's price  $w(t,s,y)$ dominates the Black-Scholes price $v^{\text{BS}}(t,s) = s$, if and only if $c(t, y) > c(T, y)$. Moreover, there are situations where this condition holds and situations where it is violated. The reason is that there are two counterbalancing effects: At initiation where the large trader buys shares to set up the initial delta hedge,  moving prices in an unfavourable direction, and at maturity when she liquidates the delta and could move prices in a direction favourable to her. Which of these two effects dominates depends non-trivially on the level of liquidity initially and at maturity, and the settlement specificaions of the option; See discussion in \cref{example:superhedging BS is bigger}.
	
\end{example}

Let us comment here on \cref{assumption:bounded f} that implies bijectivity of $F$ on $\RR$. Observe that its inverse $F^{-1}$
is used to describe the optimal control $\theta^*$. Similar conditions are also crucial for the results in \cite{BankBaum04}  and  \cite{BLZ16a}: See the surjectivity assumption	(A5) in \cite{BankBaum04} and the invertibility assumption (H2) in \cite{BLZ16a}. The next section shows how departing from this assumption leads naturally to singularity in the pricing PDE with respect to the gradient. Indeed, the lack of invertibility of $F$ requires  conditions on $w_s$ so that $\theta^*$ could be derived. Therefore, the analysis there will involve constraints on the `delta', that means on the holdings in the risky asset, what in PDE terms translates to constraints on the spacial gradient $w_s$.

\subsection{Case study for price impact of exponential form}
\label{subsec:The case of constant lambda}

We extend the analysis to a natural case where the antiderivative of the price impact function is 
not assumed to be surjective. To this end, the price impact function is taken to be of exponential form $f(x) = \exp(\lambda x)$ 
with $\lambda$ being a constant (i.e.\  $\log f$ is linear),
meaning that the relative marginal price impact function  $\lambda=f'/f > 0$ is constant. 
A distinctive feature of this case is that at any time 
$t$, knowing the (marginal) stock price $S_t$ 
is sufficient to determine the impact from an instant block trade, since after a block trade of size $\Delta$ the price would be $\baseS_t f(Y_t+\Delta) = S_t \exp(\lambda \Delta)$. Hence, the relative displacement $f(Y^\assetsProcess)$ of $S$ from the fundamental price $\baseS$ is immaterial to determine the 
price impact from a block trade, in difference to the situation of \cref{subsec:bounded impact function}. Motivated by \cref{rmk:permanent impact and constant lambda}, we impose
short-selling constraints, by requiring
trading strategies to evolve in $\K = [-K,\infty)$ for some $K > 0$. 

To derive (only heuristically at first, we will justify it rigorously later) the pricing PDE, 
let us apply formally \cref{thm:GDPP} for $v = w(t,s,y)$ at $t,s,y,$ $\tau = t+$, provided that $w$ is smooth enough,  to get the existence of $\theta^*\in \K$ such that, using \cref{lemma:dynamics of DPP}, we have
\begin{equation}\label{eq:const lambda heuristic pde}
	\mathcal{L}^{\theta^*}w(t, s, y) \diff t - s ( w_s(t,s,y) -e^{\lambda \theta^*}/\lambda + 1/\lambda) (\sigma\diff W_t + \eta_t \diff t) \geq 0,
\end{equation}
where $\eta_t = \mu_t - \lambda h(y+\theta^*)$ and 
\[
\mathcal{L}^{\theta^*}w(t,s,y) := -w_t(t,s,y) + h(y+\theta^*)w_y(t,s,y) - \frac{1}{2}\sigma^2 s^2 w_{ss}(t,s,y).
\]
As in \cref{subsec:bounded impact function}, the diffusion part in \eqref{eq:const lambda heuristic pde} should vanish, giving the optimal control
\[
\theta^* = \frac{1}{\lambda} \log \big(\lambda w_{s}(t,s,y) + 1\big),
\]
and from the drift part we identify the pricing PDE 
$ \mathcal{L}^{\theta^*}w(t,s,y) =0.$  The constraint $\theta^*\in \K$ is now  equivalent to $\mathcal{H}_\K w(t,s,y) \geq 0$, where for a smooth function $\varphi$ we set
\[
\mathcal{H}_\K \varphi(t,s,y) := \lambda \varphi_s(t,s,y) + 1 - e^{-\lambda K}.
\] 
Thus we conclude, just  formally, that $w$ should be a solution to the variational inequality
\begin{equation}\label{eq:pde const lambda}
	\mathcal{F_\K}[w] := \min \{\mathcal{L}^{\theta[w]}w\ ,\ \mathcal{H}_{\K}w \} = 0 \quad \text{on }[0,T)\times \RR_+ \times \RR,  \tag{$\textbf{PDE}^\delta$}
\end{equation}
where 
\begin{equation}\label{eq:opt strategy const lambda}
	\theta[w](t,s,y) :=1/\lambda\cdot  \log \big(\lambda w_s(t,s,y) +1\big).
\end{equation}
As usual, the gradient constraints propagate to the boundary, meaning that the 
boundary condition for \eqref{eq:pde const lambda} 
should be 
\begin{equation}\label{eq:pde const lambda BC}
	\min\{w(T,\cdot) - H, \mathcal{H}_{\K} w\} = 0. \tag{$\textbf{BC}^\delta$}
\end{equation}
After this motivation, we state the main result for exponential price  impact  
$f = \exp(\lambda \,\cdot)$.
\begin{theorem}\label{thm:pricing pde constant lambda}
	Suppose that the resilience function $h$ is Lipschitz continuous and  \cref{as:semicontinuous limits are finite} holds. 
	Then the 
	value function 
	$w$ 
	of the superhedging problem
	is continuous and is the unique bounded viscosity solution to the variational inequality \eqref{eq:pde const lambda} with boundary condition \eqref{eq:pde const lambda BC}. 
\end{theorem}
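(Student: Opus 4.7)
The plan follows the standard stochastic-target viscosity methodology: verify that $w^*$ (respectively $w_*$) is a viscosity sub- (respectively super-) solution of \eqref{eq:pde const lambda} with the relaxed boundary condition \eqref{eq:pde const lambda BC}, and then invoke a comparison principle to conclude $w^* \leq w_*$, so that $w^* = w_* = w$ is continuous and the unique bounded viscosity solution.

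For the supersolution property, let $\varphi \in C^{1,2,1}$ touch $w_*$ from below at an interior $(t_0, s_0, y_0)$ with $t_0 < T$. Choose sequences $(t_n, s_n, y_n, k_n) \to (t_0, s_0, y_0, \infty)$ and $v_n \downarrow w_*(t_0, s_0, y_0)$ with $v_n > w_{k_n}(t_n, s_n, y_n)$. Part (i) of \cref{thm:GDPP} produces $\gamma_n \in \Gamma$ and $\theta_n \in \K$ such that $V^{\text{liq}}$ dominates $w$ along the effective coordinates $(\calS, \calY)$. Applying \cref{lemma:dynamics of DPP} with $\mathfrak{F}\equiv 0$ (since $\lambda$ is constant) to $V^{\text{liq}}_t - \varphi(t, \calS_t, \calY_t)$ and shrinking stopping times to $t_0$, the usual Brownian-fluctuation argument forces the diffusion coefficient $\calS(F(\theta)-\varphi_S)$ to vanish at $(t_0,s_0,y_0)$, producing the candidate minimizer $\theta^* = \theta[\varphi](t_0,s_0,y_0)$; the admissibility $\theta_n \in [-K, \infty)$ passes to the limit as $\mathcal{H}_\K \varphi(t_0, s_0, y_0) \geq 0$, while the residual drift condition gives $\mathcal{L}^{\theta^*}\varphi(t_0, s_0, y_0) \geq 0$, so $\mathcal{F}_\K[\varphi](t_0,s_0,y_0) \geq 0$.

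For the subsolution property, apply part (ii) of \cref{thm:GDPP} to a test function touching $w^*$ from above. The case $\mathcal{H}_\K \varphi(t_0, s_0, y_0) \leq 0$ is immediate since then the min in $\mathcal{F}_\K[\varphi]$ is non-positive; otherwise $\theta^* = \theta[\varphi](t_0,s_0,y_0)$ lies in the interior of $\K$, and using $\theta^*$ as a local feedback control, a measurable selection combined with DPP (ii) yields $\mathcal{L}^{\theta^*}\varphi(t_0, s_0, y_0) \leq 0$ by contradiction. The relaxed terminal condition \eqref{eq:pde const lambda BC} follows from \cref{lemma:BC} and an analogous argument at $t = T$, using that $H_n \downarrow H$ uniformly on compacts (\cref{as:semicontinuous limits are finite}) to transfer the approximation $w_n(T,\cdot) = H_n$ to the semilimits.

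The main obstacle is twofold: the measurable-selection step inside the subsolution argument must be executed in the presence of the constraint $\K = [-K,\infty)$ and the jump component of admissible strategies, and must handle the degenerate case $\mathcal{H}_\K \varphi(t_0, s_0, y_0) = 0$ so as not to produce an inadmissible control; and the comparison principle for \eqref{eq:pde const lambda} must accommodate both the degenerate parabolic operator (no $w_{YY}$ term, so only partial ellipticity in the $s$-direction) and the mixed gradient-constraint boundary condition \eqref{eq:pde const lambda BC}, which is handled by the doubling-of-variables arguments in \cref{subsec:Proof of viscosity solution properties}. Once comparison is in hand, $w^* \leq w_*$ on $[0,T]\times\RR_+\times\RR$, while $w_* \leq w \leq w^*$ by definition, yielding continuity of $w$ and uniqueness within the bounded viscosity solution class.
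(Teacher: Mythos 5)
Your proposal is correct and follows essentially the same route as the paper: separate super- and subsolution viscosity arguments built on the geometric DPP of \cref{thm:GDPP} along reduced coordinates and the It\^o expansion of \cref{lemma:dynamics of DPP} with $\mathfrak{F}\equiv 0$ for exponential $f$, followed by a doubling-of-variables comparison theorem (after exponential scaling) yielding $w_*=w^*$, hence continuity and uniqueness. The one cosmetic difference is in the supersolution step: rather than passing the controls $\theta_n$ to a limit and invoking closedness of $\K$, the paper argues by contradiction in two cases (first $\mathcal{H}_{\K}\varphi<0$, then $\mathcal{L}^{\theta[\varphi]}\varphi<0$), in each constructing a bounded Girsanov kernel that makes the stochastic-integral term in the expansion of $V^{\text{liq}}-\varphi$ a martingale under an equivalent measure, and then taking expectations; this sidesteps any question of convergence of the DPP controls, but the substance is the same.
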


\begin{proof}
	The technical proofs are deferred to \cref{subsec:Proof of viscosity solution properties}. The viscosity super-/sub-solution property are proven in \cref{thm:supersolution} and \cref{thm:subsolution} respectively, while uniqueness and continuity follow from the comparison result of \cref{thm:comparison}, cf.~\cref{rmk:proof of uniqueness}.
\end{proof}

\begin{corollary}\label{cor:BSpde}
	In the setup from \cref{thm:pricing pde constant lambda}, suppose moreover that the payoff function $(g_0, g_1)$ is not depending on the level of impact $y$ but just on the price $s$  of the underlying. Then the
	superhedging price is a function in $(t,s)$ only and the pricing PDE \eqref{eq:pde const lambda} 
	simplifies to a Black-Scholes PDE with gradient constraints.
	In this case, if the face-lifted payoff
	\[
	F_{\K}[H](s):= \sup_{x \leq 0} \left \{H(s+x) + \frac{1-e^{-\lambda K}}{\lambda} x\right \},\quad s\in \RR_+,
	\]
	is continuously differentiable in $s$ with bounded derivative,
	with the convention that $H = H(0)$ on $(-\infty, 0]$, then the
	superhedging price (for the large trader) coincides with the friction-less Black-Scholes price for the face-lifted payoff $F_{\K}[H]$.
\end{corollary}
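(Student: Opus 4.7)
The plan is two-fold: first reduce the problem to one without $y$-dependence via the uniqueness in \cref{thm:pricing pde constant lambda}, and then identify the reduced solution with the classical Black-Scholes price of the face-lifted payoff.

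For the reduction step, since $\lambda$ is constant and $(g_0,g_1)$ does not depend on $y$, \cref{lemma:payoff with pure cash} implies that $H$ from \eqref{eq:def of H} is also $y$-independent. Inspecting \eqref{eq:pde const lambda} with $f=\exp(\lambda\,\cdot)$, any candidate of the form $(t,s,y)\mapsto v(t,s)$ has $v_Y=0$, so $\mathcal{L}^{\theta[v]}v=-v_t-\tfrac{1}{2}\sigma^2 s^2 v_{ss}$ and $\mathcal{H}_{\K}v=\lambda v_s+1-e^{-\lambda K}$. Thus the variational inequality collapses to the Black-Scholes one with gradient constraint,
\[
\min\bigl\{-v_t-\tfrac{1}{2}\sigma^2 s^2 v_{ss},\;\lambda v_s+1-e^{-\lambda K}\bigr\}=0,
\]
subject to the corresponding min-form terminal condition from \eqref{eq:pde const lambda BC}. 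Any bounded viscosity solution $v$ of this reduced problem, regarded as a function of $(t,s,y)$ (trivially independent of $y$), solves the full system, and uniqueness in \cref{thm:pricing pde constant lambda} then yields $w(t,s,y)=v(t,s)$.

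For the face-lifting step, I take as candidate $u(t,s):=\EE^{\QQ}[F_{\K}[H](\baseS^{t,s}_T)]$, the classical Black-Scholes price under the risk-neutral measure $\QQ$ (for $\baseS$ with $d\baseS=\sigma\baseS\,dW$) of the face-lifted payoff. Since $F_{\K}[H]\in C^1$ with bounded derivative, $u\in C^{1,2}$ solves $-u_t-\tfrac{1}{2}\sigma^2 s^2 u_{ss}=0$ with $u(T,\cdot)=F_{\K}[H]$. The main technical point, and the crux of the argument, is to verify that $u$ satisfies the gradient constraint $u_s\geq -c$ with $c:=(1-e^{-\lambda K})/\lambda$ and the min-form terminal condition. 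Writing the face-lift as $F_{\K}[H](s)+cs=\sup_{u\leq s}\{H(u)+cu\}$ shows $F_{\K}[H]'\geq -c$. Using $\partial_s\baseS^{t,s}_T=\baseS^{t,s}_T/s$ in geometric Brownian motion together with the fact that $\baseS^{t,s}_T/s$ is a $\QQ$-martingale with unit expectation,
\[
u_s(t,s)=\EE^{\QQ}\bigl[F_{\K}[H]'(\baseS^{t,s}_T)\,\baseS^{t,s}_T/s\bigr]\geq -c\,\EE^{\QQ}[\baseS^{t,s}_T/s]=-c.
\]
At $t=T$, $u(T,\cdot)=F_{\K}[H]\geq H$; wherever the supremum in $F_{\K}[H](s)$ is attained at some $x^{*}(s)<0$, a first-order/envelope argument yields $F_{\K}[H]'(s)=-c$ so that the gradient constraint binds, while at points where the supremum is attained at $x^{*}=0$ one has $u(T,s)=H(s)$. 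Thus \eqref{eq:pde const lambda BC} is satisfied and uniqueness in \cref{thm:pricing pde constant lambda} gives $w=v=u$, which is the claim.
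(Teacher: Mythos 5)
Your proof is correct and takes a genuinely different route from the paper's. For the dimension reduction, the paper reformulates the stochastic target problem, the DPP, and the viscosity analysis from scratch in the reduced coordinates $(t,s)$, omitting the impact variable $Y$ throughout; you instead observe that a bounded $v(t,s)$ solving the two-dimensional Black--Scholes variational inequality, extended trivially in $y$, is a bounded viscosity solution of the full three-dimensional \eqref{eq:pde const lambda}, and then invoke the uniqueness already proved in \cref{thm:pricing pde constant lambda}. This is cleaner and avoids duplicating the analysis. One point worth making explicit: the identity $\mathcal{L}^{\theta[v]}v=-v_t-\tfrac{1}{2}\sigma^2 s^2 v_{ss}$ is a pointwise calculation, and to see that it also holds for viscosity test functions $\varphi$ you should note that any $C^2$ function touching a $y$-independent function from above (resp.~below) at $(t_0,s_0,y_0)$ has $\partial_y\varphi(t_0,s_0,y_0)=0$, since $y_0$ is then a local minimum (resp.~maximum) of $y\mapsto\varphi(t_0,s_0,y)$; this kills $h(y+\theta)\varphi_Y$ for every $\theta$, which is what actually lets the $y$-direction drop out of the tested inequalities. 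For the face-lifting step, the paper simply cites \cite[Proposition~3.1]{CEK15_delta}, whereas you give a direct Feynman--Kac verification: the monotonicity $F_{\K}[H](s)+cs=\sup_{u\le s}\{H(u)+cu\}$ gives $F_{\K}[H]'\ge -c$, the representation $u_s(t,s)=\EE^{\QQ}\bigl[F_{\K}[H]'(\baseS^{t,s}_T)\,\baseS^{t,s}_T/s\bigr]\ge -c$ propagates the gradient constraint to $t<T$, and the envelope dichotomy produces the min-form terminal condition. Your route buys self-containedness at the cost of a little extra verification; the paper's citation is shorter. Both are valid.
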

\begin{proof}
	If $(g_0, g_1)$ is a function of the price $s$ of the underlying only (but not of $y$), then it is easy to see that $H$ is such as well and that the dimension of the state process can be reduced by 
	omitting the impact process $Y$. In this case, the stochastic target problem in \cref{sec:stoch target problem} could be formulated for the new state process and thus the value function would be a function on $(t,s)$ only. The same analysis can be carried over to derive the pricing PDE and to prove viscosity solution property of the value function. The pricing PDE in this case would be the Black-Scholes PDE with gradient constraints since the term $h(Y)\varphi_y$ in \cref{lemma:dynamics of DPP} would not be present. Hence, the 
	superhedging price in our large investor model would coincide with the 
	superhedging price under delta constraints in the small investor model for the payoff $H$ (because it solves the same PDE).  
	In this one-dimensional setup, this price coincides with the Black-Scholes price for the face-lifted payoff $F_{\K}[H]$, cf.~\cite[Proposition~3.1]{CEK15_delta}.
\end{proof}



\section{Numerical examples}
\label{sec:Numerics}
We discuss 
numerical calculations of the
superhedging price $w$ characterized by \eqref{eq:pricing pde bounded f}, cf.~\cref{thm:pde for general f}, to illustrate 
results. For the computations we consider an impact function
\begin{equation}\label{eq:impact for numerics}
	f(x) = 1 + \arctan(x)/10, \quad x\in \RR,
\end{equation}
satisfing \cref{assumption:bounded f}.
Note that  $\lambda(x) = 1/(10(1+x^2)f(x))$ varies most within the range of about $(-4, 4)$; Here, the change in impact is significant, see \cref{fig:f and lambda}. Apart from satisfying our assumptions and having  $F(x)= x+(x\arctan(x) - 1/2 \log(1+x^2))/10$ in explicit form, being useful for the implementation, it turns out that similar shape of impact
has been observed in the calibration of a related propagator model to real data, see \cite[Appendix]{BussetiLillo12}.

For $h(y) = \beta y$ with $\beta = 1$, we compare the large trader's price of a European call option with physical delivery at maturity $T = 0.5$ and strike $K = 50$, and the option's frictionless price, i.e.~the classical Black-Scholes  price of a European call option for the same model parameters. Let us recall that the case $f=1$ in our price impact model coincides with the Black-Scholes model. The volatility $\sigma$ is set to $0.3$.  The payoff for the large trader is
$H(s,y) = \left(  s\tfrac{F(y+1)-F(y)}{f(y)}-K \right) \indicator_{\{s\geq K\}}$ that we ``smooth out'' by approximating the indicator function from above by linearly interpolating $0$ and $1$ between $K-0.5$ and $K$.

To approximate both prices, we solve the corresponding PDEs using  a (semi-implicit) finite difference scheme in the bounded region $(y,s)\in [-20, 20]\times[0, 200]$. For our simulation we set the following boundary condition for $t<T$: $\tfrac{\partial w}{\partial s} = \left( F(y+1) - F(y)\right)/f(y)$ on $[-20,20]\times\{200\}$, $\tfrac{\partial w}{\partial y} = 0$ on $\{-20, 20\}\times[0,200] \cup [-20,20]\times \{0\}$. Indeed, for initial impact $y$ close to -20 or +20 the impact function is approximately constant and until maturity $T$ resilience would be unlikely to bring back the level of impact to the region where the changes in $f$ are significant, see \cref{fig:f and lambda}; Thus we might expect that the price would not depend that much on the level of impact. On the other hand, for larger values of $s$ one may expect the price to depend approximately linearly on $s$ (like the payoff profile). 
The difference between the Black-Scholes price and the large trader's price (as a function of the risky asset price $s$ and the level of impact $y$) is shown in \cref{fig:call physical}. Let us point out that the Black-Scholes price does not depend on level of impact $y$.

\begin{figure}[!ht]%
	\centering
	\begin{subfigure}{0.45\textwidth}
		\centering
		\includegraphics[width=\textwidth]{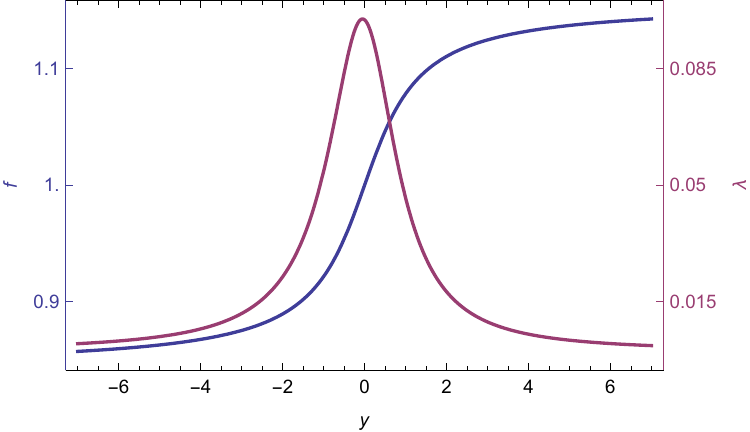}
		\caption{Impact function $f$ (in blue) and its logarithmic derivative $\lambda$ (in purple)}
		\label{fig:f and lambda}
	\end{subfigure}
	~
	\begin{subfigure}{0.45\textwidth}
		\centering
		\includegraphics[width=\textwidth]{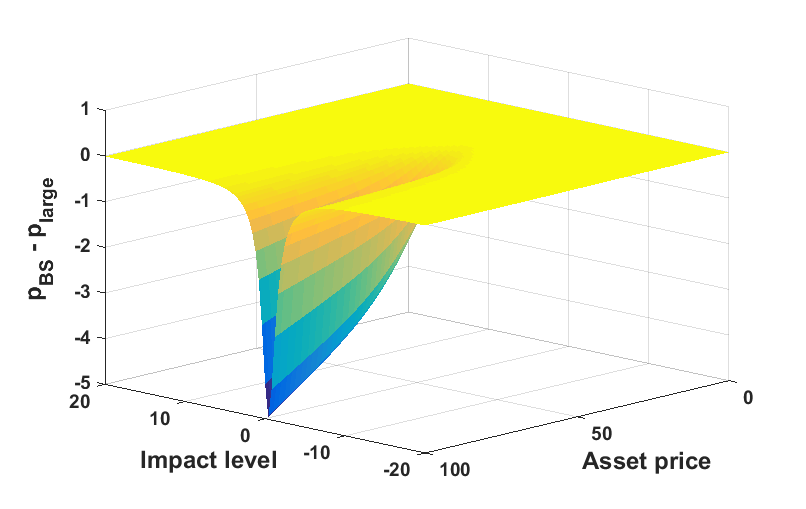}
		\caption{The difference $p_{\text{BS}} - p_{\text{large}}$  of prices of a European call option with physical delivery,  resilience rate is  $\beta = 1$}
		\label{fig:call physical}
	\end{subfigure}
	~	
	\begin{subfigure}{0.45\textwidth}
		\centering
		\includegraphics[width=6.4cm,height=5.4cm]{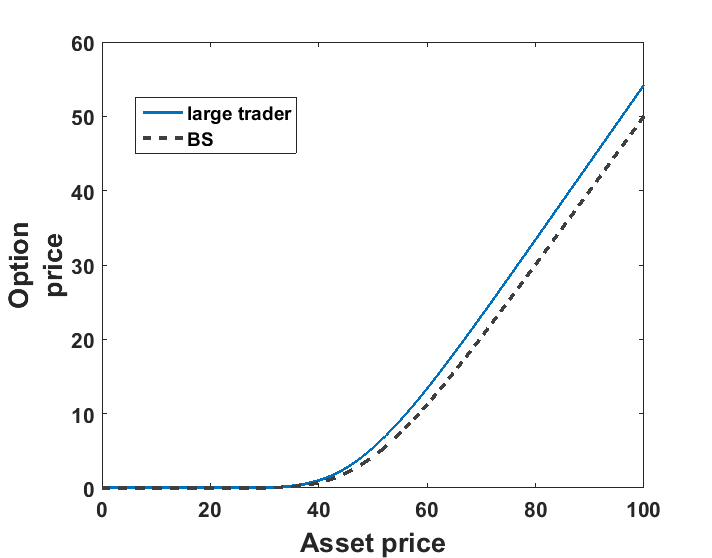}
		\caption{The frictionless Black-Scholes price and the large trader's price for call option with physical delivery, resilience rate $\beta = 1$ and initial impact level $y=0$}
		\label{fig:call physical vs call BS for 0 impact}
	\end{subfigure}
	~
	\begin{subfigure}{0.45\textwidth}
		\centering
		\includegraphics[width=6cm,height=5cm]{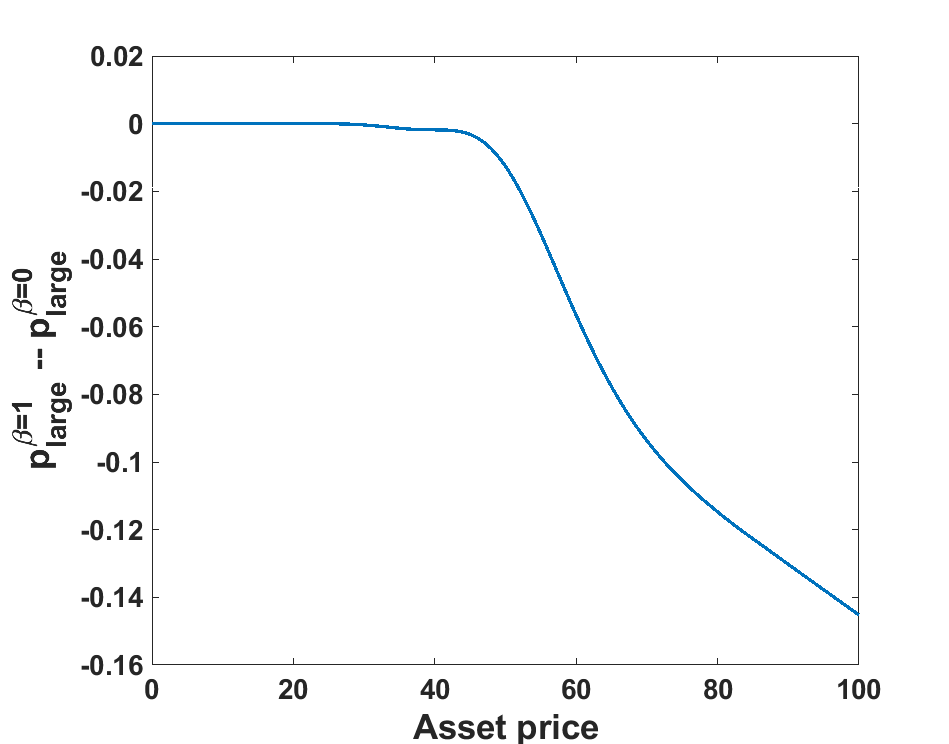}
		\caption{Difference between large trader's prices $p_{\text{large}}^{\beta = 1}$ and $p_{\text{large}}^{\beta = 0}$ for a call option with physical delivery with ($\beta = 1$) and without  ($\beta = 0$) resilience, for initial impact level $y=0$}
		\label{fig:call physical with res vs without res}
	\end{subfigure}
	\caption{Numerical computations with impact function $f$ from \eqref{eq:impact for numerics},  $\sigma = 0.3$, $T=0.5$, strike $K=50$, resilience function $h(y) = \beta y$}
	\label{fig:numerics}
\end{figure}

Numerical results of \cref{fig:call physical vs call BS for 0 impact} illustrate that the superreplication price for the large trader dominates the frictionless  Black-Scholes price for the call option with physical delivery. But  we note  that such property does not need to hold in general. For instance,
it does not appear to do so in case of a European call with pure cash delivery, where numerical computations 
show that for the large investor the price could also be smaller, typically if the impact level at inception is away from zero, see also \cref{example:superhedging BS is bigger}.
The intuition for this more complex behaviour is that for pure cash delivery, the net turnover until maturity  of traded assets for a (super-)hedging strategy must be zero (as $\assetsProcess_{0-} = \assetsProcess_T = 0$ then), while non-zero resilience ($h\neq 0$) induces additional drift, that also turn out to be less costly to the large trader,
if moving the underlying price paths into regions with 
lower (or zero) 
option payout.

On the other hand, superhedging becomes more expensive for the large trader when she has to deliver the underlying asset physically at maturity, since, if the call option settles in-the-money, she needs to do a final block trade to buy what is lacking (in the pre-terminal delta position) for the one physical unit required. But this last price impact at maturity is costly in that it further increases the issuer's call option payout for physical delivery, in comparison to cash settlement, where selling the long delta position decreases the payout.  

In addition, observe that the presence of resilience renders the level of impact (or the displacement from the fundamental price)  
to be a relevant state variable for the problem. For the setup of our numerical example for instance, the price of  a European call option with physical delivery, when hedging is initiated at neutral impact level ($y=0$), is cheaper in the presence of resilience than in the case of no resilience, i.e.~only permanent impact, see \cref{fig:call physical with res vs without res}. This is however not always the case, for example if impact at initiation is negative ($y<0$).
To conclude,  the dependence in $y$ of the option's price is complex: apart from the drift on the prices that the level of impact induces, it also determines the price impact from intermediate tradings and the final trade (enforced by settlement rules).  And we have mentioned examples where superhedging could be  less or more expensive for the large investor in the presence or absence of resilience.

\begin{example}\label{example:superhedging BS is bigger}
	The price of an European option in the Black-Scholes model (for s small investor) could indeed be greater than the 
	superhedging price for the large trader of this option with pure cash delivery. To see this, consider for maturity $T>0$ the solution  $v^{\text{BS}}$ of the Black-Scholes PDE with bounded and smooth terminal condition $H$ that has bounded derivatives, where we moreover assume that $\partial_S H \geq 0$, for instance a smooth approximation of a bull call spread option. Note that in particular $\partial_S v^{\text{BS}} \geq 0$ and the derivatives of $v^{\text{BS}}$ are bounded. We compare now  $v^{\text{BS}}(0, \cdot)$ with $v(0, \cdot, y)$ for large values of $y$, where $v=w$ with $w$ from \cref{thm:pde for general f} with terminal condition $H$. Note that when $y=Y_{0-} > 0$ the affected price process includes additional drift in favorable direction for the large trader. 
	
	Let $\assetsProcess$ with $\assetsProcess_{0-} = 0$ be such that $\assetsProcess_{T} =0$ (corresponding to only cash delivery at maturity) and for $t\in[0, T-]$ 
	\begin{equation}\label{eq:LT from BS strategy}
		\assetsProcess_t = F^{-1}(\partial_S v^{\text{BS}}(t, \calS_t) f(\calY^\assetsProcess_t) + F(\calY^\assetsProcess_t)) - \calY^\assetsProcess_t,
	\end{equation}
	where $\calY^\assetsProcess = Y^\assetsProcess - \assetsProcess$ and $\calS = f(\calY^\assetsProcess)\baseS$. Since $v^{\text{BS}}$ is smooth, the arguments in \cref{rmk:replicating strategy general f} ensure existence of such $\assetsProcess$, while positivitity of  $\partial_S v^{\text{BS}}$ implies $\assetsProcess \geq 0$ on $[0,T]$. Now for the self-financing portfolio $(\assetbeta, \assetsProcess)$ with initial cash holdings $\assetbeta_{0-} = v^{\text{BS}}(0, S_{0-})$ we have by \eqref{eq:dyn of calS}, \eqref{eq:key lemma eq 2} and \eqref{eq:LT from BS strategy} (recall that $S_{0-} = \calS_0$)
	\begin{align}
		V^{\text{liq}}_T &=  v^{\text{BS}}(0, \calS_0) + \int_{0}^{T}\partial_S v^{\text{BS}}(t, \calS_t) \diff \calS_t \nonumber\\
		&\qquad -\int_{0}^T \calS_t h(Y^\assetsProcess_t ) \left( \frac{f(Y^\assetsProcess_t) - f(Y^\assetsProcess_t - \assetsProcess_t)}{F(Y^\assetsProcess_t) - F(Y^\assetsProcess_t - \assetsProcess_t)} - \lambda(Y^\assetsProcess_t - \assetsProcess_t)\right)\diff t \nonumber \\
		&= H(\calS_T) - \int_{0}^T \calS_t h(Y^\assetsProcess_t ) \left( \frac{f(Y^\assetsProcess_t) - f(Y^\assetsProcess_t - \assetsProcess_t)}{F(Y^\assetsProcess_t) - F(Y^\assetsProcess_t - \assetsProcess_t)} - \lambda(Y^\assetsProcess_t - \assetsProcess_t)\right)\diff t.
		\label{eq:wealth for BS strategy}
	\end{align}
	In particular, if the integrand in \eqref{eq:wealth for BS strategy} is negative on $[0,T]$, then $(\assetbeta, \assetsProcess)$ would be a superhedging strategy for the large trader with initial capital $\assetbeta_{0-} = v^{\text{BS}}(0, S_{0-})$ and hence 
	\begin{equation}\label{eq:BS price is bigger}
		v(0, S_{0-}, Y_{0-}) \le v^{\text{BS}}(0, S_{0-}).
	\end{equation}
	One can show that the integrand will be negative for instance when $Y^\assetsProcess \ge 0$ on $[0,T]$ and $\lambda$ is strictly decreasing (at least on a compact set containing the range of $Y^\assetsProcess$ and $Y^\assetsProcess - \assetsProcess$);
	Such a situation could arise if for example $Y_{0-}$ is large enough. 
	Alternatively, a negative integrand could also occur if for instance $Y^\assetsProcess$ is negative on $[0,T]$, for instance if $Y_{0-}$ is small enough, and $\lambda$ is strictly increasing. Let us mention that equality in  \eqref{eq:BS price is bigger} cannot hold in general, for all values of $S_{0-}, Y_{0-}$, as this would imply that $v$ would not depend on the initial level of impact $Y_{0-}$, that is not the case for general payoff functions $H$, see e.g.~\cref{fig:call physical}.

\end{example}


\section{Extensions: Permanent impact, covered options, multiple illiquid assets with cross-impact}
\label{sec:permanent impact}

This section explains possible extensions and variations of the previous results on hedging under multiplicative transient price impact. We show 
first, how the results generalize to 
\emph{combined} transient and permanent price impact, and explain, how working in suitable effective coordinates further enables extensions to \emph{multiple illiquid assets}. We also comment and give references for the solution to the  different but related hedging problem for \emph{covered} options.

For $\eta \geq 0$, the marginal price of the risky asset (for an extra infinitesimal quantity) 
is 
\begin{equation}\label{eq:extension permanent impact}
	S_t := f(\eta \assetsProcess_t + Y^\assetsProcess_t)\baseS_t,
\end{equation}
in a generalization of equation \eqref{eq:price process}, with $Y^\assetsProcess$ being given by \eqref{eq:impact process}.
Following the arguments in \cite[Section~5.4]{BechererBilarevFrentrup-model-properties}, the
(stable) proceeds from a general semimartingale strategy $\assetsProcess$ are
\begin{equation*}
	\tilde{L}(\assetsProcess) := \frac{1}{1+\eta}\left(\int_0^\cdot F(\eta\assetsProcess_t + Y^{\assetsProcess}_t)\diff \baseS_t - \int_0^\cdot \baseS_t f(\eta\assetsProcess_t + Y^{\assetsProcess}_t)h(Y^\assetsProcess_t)\diff t - \left.\baseS F(\eta\assetsProcess + Y^{\assetsProcess})\right\vert_{0-}^\cdot\right).
\end{equation*} 
In particular, a block trade $\Delta \assetsProcess_t$ yields the proceeds $-\baseS_t \tfrac{1}{1+\eta}\int_0^{(1+\eta)\Delta\assetsProcess_t} f(\eta\assetsProcess_{t-} + Y^{\assetsProcess}_{t-} + x) \diff x$. Thus, following the discussion in \cref{sec:impact model}, the volume effect process (in the spirit of \cite{PredoiuShaikhetShreve11}) in this case is $\eta\assetsProcess + Y^\assetsProcess$ and thereby has a permanent and a transient  component.
The dynamics of the instantaneous liquidation value process $\tilde{V}^{\text{liq}}$ now satisfies 
\begin{equation*}
	(1+\eta) \diff\tilde{V}^{\text{liq}}_t = (F(\eta\assetsProcess_t + Y^{\assetsProcess}_t) - F(Y^\assetsProcess_t - \assetsProcess_t))\diff \baseS_t - h(Y^\assetsProcess_t)(f(\eta\assetsProcess_t + Y^{\assetsProcess}_t) - f(Y^\assetsProcess_t - \assetsProcess_t)) \diff t.
\end{equation*}
It is worth noting that the generalization by an additional permanent impact is not changing the effective price and impact processes $\calS(S,Y^\assetsProcess, \assetsProcess)$ and $\calY(Y^\assetsProcess, \assetsProcess)$ because the permanent component vanishes for asset holdings with zero shares in the risky asset. Therefore, the previous analysis carries over to additional permanent impact, with  adjustments as follows:
\begin{itemize}
	\item The boundary condition in \cref{lemma:BC} needs to be modified by adding the prefactor $1+\eta$ to  $\theta$, when  $\theta$ appears as an argument of a function;
	\item In \cref{lemma:dynamics of DPP}, $F(Y^\assetsProcess)$ should be substituted by $F(\eta \assetsProcess + Y^\assetsProcess)$, all the fractions should be divided by $1+\eta$ and $\mathfrak{F}$ will now become $\mathfrak{F}^\eta$ with
	\[
	\mathfrak{F}^{\eta}(s, y, \theta) := sh(y+\theta)\left( \lambda(y) \tfrac{F(y+ (1+\eta)\theta) - F(y)}{(1+\eta)f(y)} - \tfrac{f(y+(1+\eta)\theta) - f(y)}{(1+\eta)f(y)}\right).
	\]
\end{itemize} 
Let us first discuss the setup of \cref{subsec:bounded impact function} which essentially required $F$ to be invertible. In this case, the pricing PDE will have the same structure as \eqref{eq:pricing pde bounded f} with the following modifications $\tilde{h} = \tilde{h}^{\eta}$ and $\tilde{f} = \tilde{f}^\eta$ replacing the former $\tilde{h}$ and  $\tilde{f}$, namely
\[
\tilde{h}^\eta(t,s,y) = h\left(\tfrac{1}{1+\eta} F^{-1}((1+\eta)f(y)\varphi_s(t,s,y) + F(y)) + \tfrac{\eta}{1+\eta}y\right),
\]
\[
\tilde{f}^\eta(t,s,y) = f\circ F^{-1}((1+\eta)f(y)\varphi_s(t,s,y) + F(y)).
\]
An optimal hedging strategy $\assetsProcess^*$, if it exists, satisfies (as in~\cref{rmk:replicating strategy general f} for $\eta = 0$)
\[
(1+\eta)\assetsProcess^*_t = F^{-1}((1+\eta)f(\calY_t^*)\varphi_s(t,\calS_t^*,\calY_t^*) + F(\calY_t^*)) - \calY_t^*,
\]
where $\calS^* = \calS(S, Y^{\assetsProcess^*}, \assetsProcess^*)$ and $\calY^* = \calY(Y^{\assetsProcess^*}, \assetsProcess^*)$. Hence, the large trader's optimal strategy also reflects the permanent component in addition to the displacement from the fundamental price process tracked by $Y^\assetsProcess$.

In the setup from \cref{subsec:The case of constant lambda}, we again  consider portfolio constraints $\theta\in\K = [-K, +\infty)$ in order to derive the pricing PDE. Thanks to $\mathfrak{F}^{\eta} = 0$, the pricing PDE here simplifies to
\[
\min \big\{ -w_t - \frac{\sigma^2}{2} s^2 w_{ss} + h(y+\theta^*)w_y,\ \lambda(1+\eta) \varphi_s + 1 - e^{-\lambda(1+\eta) K}\big\} = 0,\; 
(t,s,y)\in [0,T)\times\RR_+\times \RR,
\]
where $\theta^* = \frac{1}{\lambda(1+\eta)} \log \big(\lambda(1+\eta) w_{s} + 1\big)$, with boundary condition
\[
\min\{w(T,\cdot) - H,\ \lambda(1+\eta) \varphi_s + 1 - e^{-\lambda(1+\eta) K} \} = 0,
\]
where $H$ is the modified boundary condition from \cref{lemma:BC}, as explained above. In particular, the pricing PDE with permanent impact coincides with the pricing PDE with purely transient 
impact but with suitably modified $\lambda$, that in this case becomes $\lambda (1+\eta)$.

\begin{remark}\label{rem:extension-covered-options}
	We explain how to obtain further results and note key differences in the related hedging problem for so-called covered options, as in \cite{BLZ16b} but under multiplicative transient price impact, where our analysis carries over similarly, by adopting arguments of \cite{BLZ16b} from the case of additive permanent price impact, as shown in   \cite[Sect.8]{BilarevBecherer18}.
	
	In contrast to the problem  studied in the main body of the present paper and in \cite{BLZ16a} for \emph{non-covered} options there is no price impact at inception and at maturity in the hedging problem for \emph{covered} options. Such makes the stochastic target problem very different.
	The financial interpretation is, that the buyer of a covered option provides (at discretion of the hedger) 
	the required initial (delta) hedging position  
	and accepts any mix of cash and stocks (at suitable book value if evaluated at
	current marginal market prices $S$) as an option settlement.
	In this way, the hedger is not exposed to initial and terminal impact when forming and unwinding the hedging position for covered options. We mention that similar assumptions are made in the literature \cite{Frey1998,FreyPolte11,CetinSonerTouzi10}
	where analysis is in terms of book value, instead of liquidation value \cite{BankBaum04,BLZ16a}.
	
	In the previous sections, the superhedging price 
	for (non-covered) options under transient multiplicative price impact is characterized by a degenerate semilinear PDE, whose non-linearity involves the resilience function $h$ and the price impact function $f$. It can involve gradient constraints (that means \emph{delta}-constraints), reducing to the Black-Scholes equation with gradient constraints in the situation of \cref{cor:BSpde}.

	In contrast, for  \emph{covered options}, 
	the corresponding pricing equation turns out to be fully non-linear and singular in the second-order term. This induces \emph{gamma} constraints, whereas for non-covered options singularity arises in the first order derivative and induces delta constraints, see  \cref{subsec:The case of constant lambda}. For covered options, it can be shown \cite[Sect.8]{BilarevBecherer18} that the resilience of price impact is immaterial for the hedging price, irrespectively of 
	a particular form for the resilience function,
	what has been observed likewise in  \cite[Section~4]{BLZ16b} for additive impact.
	We emphasize that this is very different to  \cref{subsec:bounded impact function}, where the resilience function enters the pricing equation in a non-trivial way. 
	It turns out that the current deviation of the asset price from the unaffected price becomes a relevant state variable for describing the solution.
	Moreover, one can show \cite[see Remark~8.2, 2)]{BilarevBecherer18} that
	the superheding price is decreasing in the impact function $\lambda$, in the sense that if $\lambda \geq \tilde{\lambda}$, then the price with respect to ${\lambda}$ dominates the one with respect to  $\tilde{\lambda}$. For a dual formulation for the hedging of covered options we refer to \cite{BouchardTan22}.
\end{remark}

\begin{remark}\label{rem:extension-crossimpact}
	Working in effective coordinates, as explained in \cref{sec:stoch target problem}, further permits to extend results about transient price impact, in additive or multiplicative form, to multiple risky assets with cross-impact from transactions across different assets (being described in \cite[ch.5, cf.\ example 5.1.6]{Bilarev18}).
	To this end, a key idea is that the impact function needs to be the gradient field of a suitable potential in order to avoid a form of instantaneously profitable round-trips (cf.\ \cite[Thm.5.1.4]{Bilarev18}).
	Thereby, results like from previous sections (or \cite{BLZ16a} for permanent impact) can be extended to multiple assets in an 
	additive transient cross-impact model. One obtains a geometric DPP and a viscosity PDE to characterize superhedging prices, which does involve the resilience function ($h$) of transient impact (see \cite[Sect.5.3.2]{Bilarev18}). And under certain conditions one recovers
	as instructive reference case again results as in a multi-dimensional Bachelier model with its natural pricing formula \cite{Bilarev18}[in Rem.5.3.8]), that is not involving the price impact. This extends (to multiple dimensions) the instructive 1-dimensional linear permanent impact example from \cite[Sect.2.4]{BLZ16a}, which also yields the familiar Bachelier pricing formula. Notice that the hedging strategy is affected, though closely related to the usual Bachelier-delta-hedging strategy, by being computed at liquidation magnitudes of the stock price. This is entirely analogous to Black-Scholes quantities occuring under (permanent) multiplicative impact the basic log-linear example of our model (see \cref{example:loglinear} and \cref{rem:BLZimpactcomparison}).        
\end{remark}


\begin{appendix}
	
	\normalsize
	
	\section{Proofs}
	\label{subsec:Proof of viscosity solution properties}
	This section provides the proofs delegated from \cref{sec:stoch target problem}, in particular the proof of  \cref{thm:pricing pde constant lambda}. Recall that in this case $f(x) = \exp(\lambda x)$ for $\lambda > 0$ and thus the effective price simplifies to $\calS(s,y,\theta) = se^{-\lambda \theta} \equiv \calS(s,\theta)$, i.e.~the level of impact is not needed in order to determine the price change of a block trade, given the price before the trade. We consider strategies taking values in $\K = [-K, +\infty)$ for $K>0$. This yields a gradient constraint for the PDE  that is needed because of a singularity in the PDE, for the expression 
	\eqref{eq:opt strategy const lambda} for the form of the optimal strategy to be finitely defined.
	
	First, we verify in \cref{subsec:Verification argument} that if the pricing PDE \eqref{eq:pde const lambda} admits a sufficiently smooth classical solution, then a replicating strategy in feedback form can be constructed. Such a construction will be needed also for the contradiction argument in the proof of the subsolution property in \cref{sec:Visc sol property} where, using smooth test functions, one constructs locally strategies which, roughly speaking, behave like replicating strategies. The viscosity property proofs are collected in \cref{sec:Visc sol property} and in \cref{sec:comparison} we prove comparison results that imply uniqueness of the viscosity solutions of the pricing PDEs and continuity of the value function for the superhedging problem.
	
	\subsection{Verification argument for exponential impact function}
	\label{subsec:Verification argument}
	Suppose that $w\in C^{1,2,1}([0,T]\times \R_+\times \R)$ is such that for every $(t,s,y)\in [0,T]\times \R_+\times \R$
	\begin{enumerate}
		\item \label{ass:verif 1} $\theta[w](t,s,y)\in \K$, recalling the definition in \eqref{eq:opt strategy const lambda}, and
		\item $\mathcal{L}^{\theta[w](t,s,y)}w(t,s,y) = 0$  when $t< T$, and 
		\item \label{eq:verif BC} $w(T, s,y) = H(s,y)$.
	\end{enumerate}
	Suppose further that $w$ is sufficiently regular (see the subsequent remark) so that there exists an admissible strategy $\assetsProcess\in \Gamma$ of the form
	\begin{equation}\label{eq:strategy for smooth sol}
		\begin{aligned}
			\assetsProcess_t &= 1/\lambda \log (\lambda w_s(t, \calS (S_t,\assetsProcess_t), Y_t - \assetsProcess_t)+1) \quad \text{for } t\in [0, T),\\
			\assetsProcess_{T} &= 0,\qquad \hbox{i.e. }\Delta \assetsProcess_T = \assetsProcess_{T-}.
		\end{aligned}
	\end{equation}
	In particular, $\assetsProcess_0 = 1/\lambda \log (\lambda w_s(0, s, y)+1)$ and $\Delta \assetsProcess_T\in\K$. 
	Consider the self-financing portfolio $(\beta, \assetsProcess)$ with $\beta_{0-} = w(0,s,y)$. Then as in \cref{rmk:replicating strategy general f} we get
	\[
	V^{\text{liq}}_T(\assetsProcess) = H(S_T, Y^{\assetsProcess}_T), \qquad \assetsProcess_{T} = 0.
	\]
	By definition of $H$, this shows that $V^{\text{liq}}_T(\assetsProcess)$  at maturity $T$ is enough capital to (super-)\-replicate the European claim with payoff $(g_0, g_1)$ with a possible additional block trade (provided that the infima in the definition of $H$, cf.~\cref{lemma:BC}, are attained). Hence,  $(\beta,\assetsProcess)$ will be a (super-)\-replicating strategy for the European claim $(g_0, g_1)$ with initial capital $w(0,s,y)$, meaning that its price is exactly $w(0,s,y)$.
	
	\begin{remark}[On the form of a replicating strategy]\label{rmk:verification}
		To construct a replicating strategy \eqref{eq:strategy for smooth sol}, suppose moreover that $w\in C^{1,3,1}([0,T]\times \R_+\times \R)$ and apply  It\^{o}'s formula,  similarly as in \cref{rmk:replicating strategy general f}, to get for $t<T$ the equation
		\begin{align*}
			\hspace{-4em}\diff \assetsProcess_t &=\frac{1}{\lambda}\left( \frac{1}{\lambda w_s +1} \diff (\lambda w_s+1) - \frac{1}{2  (\lambda w_s +1)^2}\diff [\lambda w_s +1]_t\right)  \\
			& = a(t, \calS_t, \calY^{\assetsProcess}_t , \assetsProcess_t)\diff t + b(t, \calS_t, \calY^{\assetsProcess}_t) \diff W_t,
		\end{align*}
		where for  $\calS_t := \calS (S_t, \assetsProcess_t)$ and $\calY^{\assetsProcess}_t = Y^{\assetsProcess}_t - \assetsProcess_t$ we set
		\begin{align*}
			a(t, \calS_t, \calY^{\assetsProcess}_t , \assetsProcess_t) &:= \frac{1}{\lambda w_s +1} \Big( w_{ts} + w_{ss}  \calS_t (\mu_t - \lambda h(Y^\assetsProcess_t)) -  w_{sy}h(Y^{\assetsProcess}_t) + \\
			& \qquad \qquad + \frac{1}{2} w_{sss} \sigma^2 \calS_t^2 - \frac{\lambda^2 \sigma^2 \calS_t^2 w_{ss}}{2(\lambda w_{s} + 1)}\Big), \\
			b(t, \calS_t, \calY^{\assetsProcess}_t) &:=  \frac{ \sigma \calS_t w_{ss}}{\lambda w_{s} + 1};
		\end{align*}
		with all the derivatives of $w$ above being evaluated at $(t, \calS(S_t,\assetsProcess_t), Y_t -\assetsProcess_t)$.  
		Thus, a replicating strategy, which is superhedging the payout at a minimal cost, can be constructed as the $(\Theta_t)_{t\in [0,T)}$-part (plus a terminal block trade) from a solution, if it exists, to the SDE system 
		\begin{equation}\label{eq:repl strategy system const lambda}
			\begin{aligned}
				&\diff \calS_t = \calS_t[ (\mu_t - \lambda h(\calY^{\assetsProcess}_t + \assetsProcess_t)) \diff t + \sigma \diff W_t], \\
				&\diff \assetsProcess_t = a(t, \calS_t, \calY^{\assetsProcess}_t,  \assetsProcess_t) \diff t +  b(t, \calS_t, \calY^{\assetsProcess}_t) \diff W_t,\\
				&\diff \calY^{\assetsProcess}_t = -h(\calY^{\assetsProcess}_t + \assetsProcess_t) \diff t
			\end{aligned}
		\end{equation}
		for  $t\in [0,T]$, with initial condition $\calS_0 = s$, $\calY^{\assetsProcess}_0 = y$ and $\assetsProcess_0 = 1/\lambda \log (\lambda w_s(0, s, y)+1)$. 
		
	\end{remark}
	
	\subsection{Viscosity solution property of $w$ for exponential impact function}
	\label{sec:Visc sol property}
	For the result from \cref{subsec:The case of constant lambda}
	we now prove the viscosity property.
	\begin{theorem}\label{thm:supersolution}
		The function $w_*$ from \eqref{eq:def of lower semilimit} is a viscosity supersolution of \eqref{eq:pde const lambda} on $[0,T)\times\RR_+\times\RR$ with the boundary condition \eqref{eq:pde const lambda BC} on $\{T\}\times\RR_+\times\RR$.
	\end{theorem}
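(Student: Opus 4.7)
The plan is to combine the geometric DPP of Theorem~\ref{thm:GDPP}\,(i) with the dynamics of the error process from Lemma~\ref{lemma:dynamics of DPP}, which becomes markedly simpler in the constant-$\lambda$ regime since $\mathfrak{F}\equiv 0$. After a Girsanov change of measure $Q^n$ absorbing the $(\mu_t-\lambda h)$-drift, that dynamics reads
\[
 d\bigl(V^{\text{liq}}_t-\varphi(t,\calS_t,\calY_t)\bigr) = \sigma\calS_t A_t^{\varphi}\,d\widetilde{W}_t + B_t^{\varphi}\,dt,
\]
with $A_t^{\varphi}:=(e^{\lambda\assetsProcess_t}-1)/\lambda-\varphi_S(t,\calS_t,\calY_t)$ and $B_t^{\varphi}:=\mathcal{L}^{\assetsProcess_t}\varphi(t,\calS_t,\calY_t)$. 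The desired inequality $\mathcal{F}_{\K}[\varphi]\geq 0$ amounts precisely to the impossibility, for any admissible $\assetsProcess\in\K$, of simultaneously driving $A^{\varphi}$ close to zero and $B^{\varphi}$ negative.

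\textbf{Interior supersolution.} Let $\varphi\in C^{1,2,1}$ be a test function with $w_*-\varphi$ attaining a strict local minimum value $0$ at $(t_0,s_0,y_0)\in[0,T)\times\RR_+\times\RR$, and assume for contradiction that $\mathcal{F}_{\K}[\varphi](t_0,s_0,y_0)<-3\eta$ for some $\eta>0$. Fix a small ball $B_r$ around the point in which this persists with $-2\eta$. By definition of $w_*$ pick $(t_n,s_n,y_n,k_n)\to(t_0,s_0,y_0,\infty)$ with $w_{k_n}(t_n,s_n,y_n)\to\varphi(t_0,s_0,y_0)$, set $v_n:=w_{k_n}(t_n,s_n,y_n)+1/n>w(t_n,s_n,y_n)$, and invoke DPP\,(i) to obtain $\theta_n\in\K$ and $\gamma_n\in\Gamma$ such that, starting from $z_n=(\calS(s_n,y_n,-\theta_n),y_n+\theta_n,\theta_n,v_n)$, the continuous process $M_t^n:=V^{\text{liq},t_n,z_n,\gamma_n}_t-\varphi(t,\calS_t^n,\calY_t^n)$ satisfies $M_\tau^n\geq 0$ for every stopping time $\tau\in[t_n,T]$ (using $w\geq w_*\geq\varphi$ locally) and $M_{t_n}^n\to 0$. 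Localize through $\tau^n:=\inf\{t>t_n:(t,\calS_t^n,\calY_t^n)\notin B_r\}\wedge(t_n+\delta)$ for small $\delta>0$.

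\textbf{Dichotomy and boundary.} If $\mathcal{H}_{\K}\varphi(t_0,s_0,y_0)<0$, the admissibility constraint $\assetsProcess\geq -K$ alone forces $A^{\varphi}\geq(e^{-\lambda K}-1)/\lambda-\varphi_S\geq c_1>0$ throughout $B_r$, so $\langle M^n\rangle_{\tau^n}\geq c_2(\tau^n-t_n)$ deterministically while the drift integral is $O(\delta)$, and a Burkholder--Davis--Gundy and reflection-type estimate yields $Q^n[M_{\tau^n}^n<0]\geq \tfrac{1}{4}$ for small $\delta$ and large $n$, contradicting $M^n\geq 0$. In the remaining branch, $\mathcal{H}_{\K}\varphi(t_0,s_0,y_0)\geq 0$ with $\mathcal{L}^{\theta[\varphi]}\varphi(t_0,s_0,y_0)<-2\eta$: split $[t_n,\tau^n]$ into the region $\{|\assetsProcess_s^n-\theta[\varphi](s,\calS_s^n,\calY_s^n)|<\rho\}$, on which smoothness of $\varphi$ gives $B^n\leq-\eta$ uniformly, and its complement, on which $A^n$ is bounded away from zero; combining both contributions via a measurable selection again produces $Q^n[M_{\tau^n}^n<0]>0$. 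The boundary condition on $\{T\}\times\RR_+\times\RR$ is derived analogously: Lemma~\ref{lemma:BC} and Assumption~\ref{as:semicontinuous limits are finite} give $w_n(T,\cdot)=H_n(\cdot)\downarrow H(\cdot)$, so passing to the $\liminf$ through $t'<T$ delivers $w_*(T,\cdot)\geq H(\cdot)$, and the gradient constraint at $T$ follows from the same oscillation step with $\tau$ chosen close to $T$.

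\textbf{Main obstacle.} The principal difficulty is to make the above dichotomy quantitative \emph{uniformly across the a priori arbitrary controls $\gamma_n$} delivered by the DPP, in particular to handle the jumps that $\assetsProcess^n$ may carry and to convert the integrated smallness of $\langle M^n\rangle_{\tau^n}$ into a pointwise localization of $\assetsProcess^n$ near $\theta[\varphi]$. A standard measurable-selection argument takes care of this, but it is where our setting, with the jump controls in $\U$ and the reduced effective-price coordinates, requires most of the adaptation of the scheme in~\cite{BLZ16a,BLZ16b}.
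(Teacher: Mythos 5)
Your overall architecture for the interior part is the paper's: apply \cref{thm:GDPP}\,(i) along the effective coordinates, use \cref{lemma:dynamics of DPP} with $\mathfrak{F}\equiv 0$, and split according to whether $\mathcal{H}_\K\varphi<0$ or $\mathcal{L}^{\theta[\varphi]}\varphi<0$. However, the step that actually closes the contradiction is not sound as you state it. You bound the drift integral of $M^n$ by $O(\delta)$ and then invoke a Burkholder--Davis--Gundy/reflection estimate to get $Q^n[M^n_{\tau^n}<0]\ge\tfrac14$. But the drift is $B^{\varphi}_t=\mathcal{L}^{\assetsProcess^n_t}\varphi$, which contains the term $h(\calY^n_t+\assetsProcess^n_t)\varphi_Y$; the controls produced by the DPP satisfy only $\assetsProcess^n\in\K\cap[-k_n,k_n]$ with $k_n\to\infty$, and $h$ is merely Lipschitz, so the drift is \emph{not} $O(\delta)$ uniformly in $n$ and can overwhelm the martingale fluctuation. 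You flag exactly this uniformity issue as the ``main obstacle'' but the proposed cure (measurable selection) does not address it. The paper's mechanism is different and avoids the problem entirely: on the event where the diffusion coefficient $\calS^n(\varphi_S+1/\lambda-e^{\lambda\assetsProcess^n}/\lambda)$ is bounded away from zero (which is guaranteed everywhere in Case~1 by the violated gradient constraint, and off a neighbourhood of $\theta[\varphi]$ in Case~2), the \emph{entire} drift is divided by this coefficient and removed by a control-dependent Girsanov change of measure $\PP^n$; on the complementary event one uses $\mathcal{L}^{\assetsProcess^n}\varphi<-\varepsilon\le 0$, so that term has a favourable sign. Then $M^n$ is a $\PP^n$-supermartingale, and since the strict minimum forces $M^n_{\tau_n}\ge\iota>0$ at the exit time of the neighbourhood, taking $\PP^n$-expectations gives $v_n-\varphi(t_n,s_n,y_n)\ge\iota$, contradicting convergence to $0$. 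No quantitative lower bound on the fluctuation, and no uniform control of the drift, is needed.

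The boundary condition also has a genuine gap. You claim that $w_n(T,\cdot)=H_n\downarrow H$ together with ``passing to the $\liminf$ through $t'<T$'' yields $w_*(T,\cdot)\ge H$. This is a non sequitur: by \eqref{eq:def of lower semilimit} the relaxed semilimit at $t=T$ is taken over $t'<T$ only, so the terminal values $w_k(T,\cdot)$ do not enter it, and controlling the left-limit of the value function at maturity is precisely the content of the boundary analysis (indeed the effective terminal value is in general a face-lift of $H$, which is why \eqref{eq:pde const lambda BC} is a variational inequality rather than $w(T,\cdot)=H$). The paper's argument assumes $w_*(T,s_0,y_0)<H(s_0,y_0)$ and $\mathcal{H}_\K\varphi(T,s_0,y_0)\ge0$, replaces $\varphi$ by $\varphi-\sqrt{T-t}$ so that $\mathcal{L}^{\theta[\varphi]}\varphi<0$ on $[T-\varepsilon,T)\times B_\varepsilon(s_0,y_0)$ and $\varphi(T,\cdot)\le H-\iota_1$ there, and then reruns the DPP argument, using $w(T,\cdot)=H(\cdot)$ on the trajectories that survive to maturity and the strict-minimum margin $\iota_2$ on the lateral boundary. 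Without this step your proof does not establish $w_*(T,\cdot)\ge H$.
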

	\begin{proof}
		First, let $(t_0, s_0, y_0)\in [0,T)\times\RR_+\times\RR$ and $\varphi\in C^\infty_b([0,T]\times \RR_+\times \RR)$ be a smooth function such that 
		\[
		\text{(strict) } \min_{[0,T]\times \RR_+ \times\RR} (w_* - \varphi) = (w_* - \varphi)(t_0, s_0, y_0) = 0.
		\]

		\textbf{Case 1:}	Suppose that $\H_\K \varphi(t_0, s_0, y_0) < 0$. By continuity of the operator $\H_\K$ there exists an open neighborhood $\mathcal{O}$ of $(t_0, s_0, y_0)$ whose closure is contained in $[0,T)\times \RR_+ \times\RR$, 
		such that $\H_\K \varphi(t, s, y) < -\varepsilon$ in $\mathcal{O}$ for some $\varepsilon > 0$. Therefore, after possibly decreasing the neighbourhood $\mathcal{O}$, there exists a constant $k_{\varepsilon}> 0$ such that
		\begin{equation}\label{eq:case 1 contradiction}
			s|\varphi_S(t,s,y) + 1/\lambda - e^{\lambda \theta}/\lambda| \geq k_\varepsilon \qquad \text{for all } \theta\in \K, (t,s,y)\in \mathcal{O}.
		\end{equation}
		Let $(t_n, s_n, y_n)_n\subset \mathcal{O}$ be a sequence converging to $(t_0, s_0, y_0)$ with $w(t_n, s_n, y_n)\rightarrow  w_*(t_0, s_0, y_0)$  (note that $w_*$ is the lower-semicontinuous envelope of $w$). Set $v_n := w(t_n,s_n,y_n) + 1/n$.
		Since $v_n > w(t_n, s_n, y_n)$, \cref{thm:GDPP} implies the existence of $\theta_n\in \K$ and strategies $\gamma_n\in \Gamma$ such that for stopping times $\tau_n\ge t_n$ (to be suitably  chosen later) we have $\PP$-a.s.
		\begin{equation}\label{eq:case 1 GDPP}
			V^{\text{liq}, t_n, z_n, \gamma_n }_{t\wedge \tau_n} \geq w(\cdot, \calS(S^{t_n,z_n,\gamma_n}, \assetsProcess^{t_n,z_n,\gamma_n}), Y^{t_n,z_n,\gamma_n}-\assetsProcess^{t_n,z_n,\gamma_n})_{t\wedge \tau_n},		
			\quad t\in [t_n,T],
		\end{equation}
		where $z_n = (s_ne^{\lambda \theta_n}, y_n+\theta_n, \theta_n, v_n)$. To abbreviate notation, in the sequel  we  write  $n$ as superscript instead of $(t_n, z_n, \gamma_n)$, with $\calS^n := \calS(S^{t_n,z_n,\gamma_n},\assetsProcess^{t_n,z_n,\gamma_n})$, $\calY^n :=Y^{t_n,z_n,\gamma_n} - \assetsProcess^{t_n,z_n,\gamma_n}$. 
		
		Take $\tau_n = \inf \{t\geq t_n\:\ (t, \calS^n_t, \calY^n_t) \not \in  \mathcal{O}\}$, which is the first entrance time of the  parabolic boundary of the open region $\mathcal{O}$. In particular, $\tau_n < T$. 
		Since $w\geq w_*\geq \varphi$ and $w_* - \varphi$ has a strict local minimum at $(t_0, s_0, y_0)$, there exists $\iota>0$ such that
		\[
		(w- \varphi)(\tau_n, \calS^n_{\tau_n}, \calY^n_{\tau_n}) \geq \iota.
		\]
		Hence, $V^{\text{liq},n}_{\tau_n} - \varphi(\tau_n, \calS^n_{ \tau_n}, \calY^{n}_{ \tau_n})\geq \iota.$ Now, \cref{lemma:dynamics of DPP} together with the fact that $\calS^n_{t_n} = s_n$, $\calY^n_{t_n} = y_n$, gives that $\P$-a.s.
		\begin{align}
			\iota  \leq  v_n  - \varphi(t_n, s_n, y_n)  
			- \int_{t_n}^{\tau_n} \calS^n_u \left(\varphi_S(u, \calS^n_u, \calY^n_u) + 1/\lambda - e^{\lambda \assetsProcess^n_u}/\lambda\right) \left( \sigma \diff W_u + \zeta^n_u\diff u\right) 
			\label{eq:case 1 bound}
		\end{align}
		where
		\[
		\zeta^n_t:= \eta^n_t - \frac{\mathcal{L}^{\assetsProcess^n_t}\varphi}{\calS^n_t (\varphi_S(u, \calS^n_t, \calY^n_t) + 1/\lambda - e^{\lambda \assetsProcess^n_t}/\lambda) } \quad \hbox{for }t\in [t_n, \tau_n]
		\]
		with $\eta^n_t := \mu_t - \lambda h(Y^n_t)$. Note that $\zeta^n_t$ is well-defined on $[t_n, \tau_n]$ and uniformly bounded, noting \eqref{eq:case 1 contradiction} and the fact that $Y^n$ is bounded since $\assetsProcess^n$ is so.
		Hence, by Girsanov's theorem,  there exists a measure $\P^n$ that is equivalent to $\P$ such that 
		\[
		\int_{t_n}^{t\wedge\tau_n} \calS_u (\varphi_S(u, \calS^n_u,\calY^n_u) + 1/\lambda - e^{\lambda \assetsProcess_u}/\lambda) \left( \sigma \diff W_u + \zeta^n_u\diff u\right),
		\quad   t \ge t_n,
		\]
		is a square-integrable martingale under $\PP^n$ 
		as the integrand of the stochastic integral is uniformly bounded, because of the definition of $\tau_n$, the continuity of $\varphi_S$ and the boundedness of the range of $\assetsProcess$, noting $\tau_n\le T$. Taking expectation under $\PP^n$ of the right-hand side of \eqref{eq:case 1 bound} leads to
		\( 
		v_n - \varphi(t_n, s_n, y_n) \geq \iota > 0,
		\) 
		what yields a contradiction as by our choice of $v_n$ and the sequence $(t_n, s_n, y_n)_n$ 
		$$v_n - \varphi(t_n, s_n, y_n) \longrightarrow w_*(t_0, s_0,y_0) - \varphi(t_0, s_0, y_0) = 0.$$
		
		
		\textbf{Case 2:}	From Case 1 we know that $\mathcal{H}_\K \varphi(t_0,s_0,y_0) \geq 0$. Hence
		$$\theta[\varphi](t_0, s_0, y_0) = 1/\lambda \log (\lambda \varphi_S(t_0, s_0, y_0) + 1)$$ is well-defined (also in a neighborhood of $(t_0,s_0,y_0)$). Let us suppose that $\mathcal{L}^{\theta[\varphi]}\varphi(t_0, s_0, y_0) < 0$. By continuity of the operator $\mathcal{L}$, there exits an open neighborhood $\mathcal{O}\subset [0,T]\times \RR_+ \times \RR$ of $(t_0,s_0,y_0)$ and some $r, \varepsilon > 0$ such that
		\begin{align*}
			\mathcal{L}^{\theta}\varphi(t, s, y) < -\varepsilon \quad \text{for all } (t, s, y)\in \mathcal{O},\  \theta \in (\theta[\varphi](t, s, y)-r\ ,\ \theta[\varphi](t, s, y)+r).
		\end{align*}
		In particular, by continuity of the functions involved we have (after possibly decreasing the open set $\mathcal{O}$) that for every $(t,s,y)\in \mathcal{O}$ and for some $r' > 0$ 
		\[
		\mathcal{L}^{\theta}\varphi(t, s, y) < -\varepsilon \quad\text{ whenever }\quad |\varphi_S(t,s,y) + 1/\lambda - e^{\lambda \theta}/\lambda| \leq r'.
		\]
		
		As in Case 1, consider a sequence $(t_n, s_n, y_n)$ in $\mathcal{O}$ which converges to $(t_0, s_0, y_0)$ and such that $w(t_n, s_n, y_n)\rightarrow w_*(t_0, s_0, y_0)$. Set $v_n := w(t_n,s_n,y_n) + 1/n$ and let $\theta_n\in \K$ and strategies $\gamma_n\in \Gamma$ be such that the dynamic programming principle \eqref{eq:case 1 GDPP} holds for the stopping times $\tau_n$ that are the first exit times of $(\cdot, \calS^n, \calY^n)$ from the set $\mathcal{O}$. 
		Now, a contradiction follows similarly as in Case 1 with the following adjustment: We have
		\begin{align*}
			V^{\text{liq},n}_{t\wedge\tau_n}& - \varphi(\cdot, \calS^n, \calY^{n})_{t\wedge \tau_n}  = v_n - \varphi(t_n,s_n,y_n)\\
			& - \int_{t_n}^{t\wedge\tau_n} \calS^n_u (\varphi_S + 1/\lambda - e^{\lambda \assetsProcess^n_u}/\lambda) \left( \sigma \diff W_u + \zeta^n_u\diff u\right) \\
			&+ \int_{t_n}^{t\wedge \tau_n} \mathcal{L}^{\assetsProcess^n_u}\varphi(u, \calS^n_u, \calY^{n}_u) \1_{\{  |\varphi_S + 1/\lambda - e^{\lambda \assetsProcess^n_u}/\lambda| \leq r'\}} \diff u \\
			&\leq  v_n - \varphi(t_n,s_n,y_n) - \int_{t_n}^{t\wedge\tau_n} \calS^n_u (\varphi_S + 1/\lambda - e^{\lambda \assetsProcess^n_u}/\lambda) \left( \sigma \diff W_u + \zeta^n_u\diff u\right),
		\end{align*}
		where we set
		\[
		\zeta^n_t:= \eta^n_t - \frac{\mathcal{L}^{\assetsProcess^n_t}\varphi}{\calS^n_t (\varphi_S + 1/\lambda - e^{\lambda \assetsProcess^n_t}/\lambda) } \1_{\{  |\varphi_S + 1/\lambda - e^{\lambda \assetsProcess^n_t}/\lambda| \geq r' \}}
		\quad \text{for $t\in [t_n, \tau_n]$},
		\]
		with the functions $\varphi$ and $\varphi_S$ in the expressions above being evaluated at $(\cdot, \calS^n_{\cdot}, \calY^n_{\cdot})$. 
		The contradiction now follows by taking expectation under $\P^n\approx \P$  and letting $n\to \infty$.
		
		\textbf{Boundary condition.} Let $(s_0, y_0)\in \RR_+\times\RR$ and $\varphi$ be a smooth function such that 
		\[
		\text{(strict) } \min_{[0,T]\times \RR_+ \times \RR} (w_* - \varphi) = (w_* - \varphi)(T, s_0, y_0) = 0.
		\]
		Suppose that  
		\[
		\min \{w_*(T,s_0,y_0) - H(s_0,y_0), \H_\K\varphi(T,s_0,y_0)\} < 0.
		\]
		The case $\H_\K\varphi(T,s_0,y_0) < 0$ leads to a contradiction by the same arguments as in Case 1 above, using that $\H_\K\varphi < 0$ in a small neighborhood of $(T,s_0,y_0)$. Hence we have $\H_\K\varphi(T,s_0,y_0) \geq 0$.
		
		Now, if $w_*(T,s_0,y_0) < H(s_0,y_0)$ then also $\varphi(T,s_0,y_0) - H(s_0,y_0) < 0$. After possibly modifying the test function $\varphi$ by $(t,s,y)\mapsto \varphi(t,s,y) - \sqrt{T-t}$, we can assume that $\partial_t \varphi(t,s,y)\rightarrow +\infty$ when $t\rightarrow T$, uniformly on compacts. Hence, in an $\varepsilon$-neighborhood $ [T-\varepsilon, T)\times B_\varepsilon(s_0,y_0)$ around $(T, s_0,y_0)$ we have $\mathcal{L}^{\theta[\varphi] }\varphi < 0$. Moreover, after possibly decreasing $\varepsilon$ we have $\varphi(T,\cdot)\leq H(\cdot) - \iota_1$ on $B_\varepsilon(s_0,y_0)$ for some $\iota_1>0$. We can argue as in Cases 1-2 above  (by starting from $(t_n,s_n,y_n)$
		in $[T-\varepsilon,T)\times B_\varepsilon(s_0,y_0)$, with $(t_n,s_n,y_n)\to (T,s_0,y_0)$ and 
		$w(t_n,s_n,y_n)\to w_{*}(T,s_0,y_0)$, stopping at the (parabolic) boundary
		at a time $\tau_n$, and using $w(T, \cdot) = H(\cdot)$)
		to get
		\[
		V^{\text{liq},n}_{\tau_n} - \varphi(\cdot, \calS(S^{n}, \assetsProcess^{n}), Y^{n}-\assetsProcess^{n})_{ \tau_n}\geq \iota_1\wedge \iota_2,
		\]
		where $\iota_2:= \inf_{[T-\varepsilon, T)\times \partial B_\varepsilon(s_0,y_0)} (w_* - \varphi) > 0$.  A contradiction follows as in Case 2 above.

	\end{proof}
	
	Now we prove the subsolution property.
	\begin{theorem}\label{thm:subsolution}
		The function $w^*$ from \eqref{eq:def of upper semilimit} is a viscosity subsolution of \eqref{eq:pde const lambda} on $[0,T)\times\RR_+\times\RR$ with the boundary condition \eqref{eq:pde const lambda BC} on $\{T\}\times\RR_+\times\RR$.
	\end{theorem}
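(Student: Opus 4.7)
My plan is to argue by contradiction, using part (ii) of the geometric dynamic programming principle (\cref{thm:GDPP}), in the spirit of the classical stochastic-target subsolution proofs (cf.~\cite[Theorem~4.1]{BLZ16a}). Fix $(t_0,s_0,y_0)\in[0,T)\times\RR_+\times\RR$ and $\varphi\in C^\infty_b$ such that $w^*-\varphi$ attains a strict local maximum at $(t_0,s_0,y_0)$ at value $0$, and suppose, toward a contradiction, that $\mathcal{F}_\K[\varphi](t_0,s_0,y_0)>0$. By continuity of $\varphi$ and its derivatives, this forces \emph{both} $\mathcal{H}_\K\varphi>0$ and $\mathcal{L}^{\theta[\varphi]}\varphi>\varepsilon$ on some bounded open neighborhood $\overline{\mathcal{O}}\subset[0,T)\times\RR_+\times\RR$ of $(t_0,s_0,y_0)$, where moreover $\theta[\varphi]=\tfrac{1}{\lambda}\log(\lambda\varphi_S+1)$ is smooth and takes values in a compact subset of $\mathrm{int}(\K)$, say in $[-n,n]\cap \K$ for some $n\in\NN$.

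Next, I would construct an explicit almost-replicating feedback strategy $\assetsProcess^k$ along the lines of \cref{rmk:verification}: start at $t_k$ with an initial block trade of size $\theta_k:=\theta[\varphi](t_k,s_k,y_k)\in\K\cap[-n,n]$, and then follow the feedback law $\assetsProcess_t=\theta[\varphi](t,\calS_t,\calY_t)$ obtained from the SDE system \eqref{eq:repl strategy system const lambda}, cut off outside $\mathcal{O}$ so that the coefficients $(a^\varphi,b^\varphi)$ are globally Lipschitz and bounded. The resulting strategy lies in $\Gamma_n$ (with trivial jump measure beyond the initial trade that is absorbed into the DPP initial condition). The decisive observation is that, since $\lambda$ is constant one has $\mathfrak{F}\equiv 0$ in \cref{lemma:dynamics of DPP}, and the feedback choice yields the pointwise identity $(e^{\lambda\assetsProcess^k_t}-1)/\lambda=\varphi_S(t,\calS^k_t,\calY^k_t)$, making the stochastic integral in the dynamics vanish identically. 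Consequently, up to the exit time $\tau_k:=\inf\{t\geq t_k:(t,\calS^k_t,\calY^k_t)\notin\mathcal{O}\}<T$,
\begin{equation*}
V^{\text{liq},k}_{\tau_k}-\varphi(\tau_k,\calS^k_{\tau_k},\calY^k_{\tau_k})=v_k-\varphi(t_k,s_k,y_k)+\int_{t_k}^{\tau_k}\mathcal{L}^{\assetsProcess^k_u}\varphi(u,\calS^k_u,\calY^k_u)\,\diff u\geq v_k-\varphi(t_k,s_k,y_k),
\end{equation*}
by positivity of the drift on $\mathcal{O}$.

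The final step combines a good choice of $v_k$ with a uniform estimate on the parabolic boundary. By the standard property of the relaxed upper limit, one can pick $(t_k,s_k,y_k)\to(t_0,s_0,y_0)$ with $w_k(t_k,s_k,y_k)\to w^*(t_0,s_0,y_0)=\varphi(t_0,s_0,y_0)$, and set $v_k:=w_k(t_k,s_k,y_k)-1/k$. Monotonicity of $k\mapsto w_k$ gives $v_k<w_k(t_k,s_k,y_k)\leq w_{2n+2}(t_k,s_k,y_k)$ for $k\geq 2n+2$, so the hypothesis of part (ii) of \cref{thm:GDPP} is met. On the other side, the strict maximum implies $\sup_{\partial_p\mathcal{O}}(w^*-\varphi)\leq -2\iota$ for some $\iota>0$; a compactness argument on $\partial_p\mathcal{O}$, using the very definition of $w^*$ as a relaxed upper limit together with the continuity of $\varphi$, upgrades this to $w_m\leq \varphi-\iota$ on $\partial_p\mathcal{O}$ for every $m\geq m_0$. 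Enlarging $n$ so that $n\geq m_0$, and then taking $k\geq 2n+2$ large enough so that $v_k-\varphi(t_k,s_k,y_k)\geq -\iota/2$, the displayed inequality yields a.s.~$V^{\text{liq},k}_{\tau_k}\geq w_n(\tau_k,\calS^k_{\tau_k},\calY^k_{\tau_k})+\iota/2$, contradicting part (ii) of \cref{thm:GDPP}.

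For the terminal boundary condition at $t_0=T$, I would employ the familiar trick of replacing $\varphi$ by $\varphi(t,s,y)-\eta\sqrt{T-t}$ with small $\eta>0$, which forces $\varphi_t\to-\infty$ near $T$ and makes $\mathcal{L}^{\theta[\varphi]}\varphi>0$ automatic in a small cylinder $[T-\delta,T)\times B_\delta(s_0,y_0)$; the interior argument then applies, now using the boundary data $w(T,\cdot)=H(\cdot)$ from \cref{lemma:BC}. I expect the main obstacle to lie in step two: constructing the admissible feedback strategy $\assetsProcess^k\in\Gamma_n$ with vanishing diffusion for $V^{\text{liq}}-\varphi$ requires enough regularity of the test function (taking $\varphi\in C^\infty_b$ suffices) and a careful cutoff of the coefficients in \eqref{eq:repl strategy system const lambda}, so as to secure a strong, uniformly bounded SDE solution that stays in $\K\cap[-n,n]$ until $\tau_k$.
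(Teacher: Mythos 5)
Your interior argument ($t_0<T$) is sound and follows essentially the same route as the paper (which itself adapts \cite{BLZ16a}): negate $\mathcal{F}_\K[\varphi]>0$, use $\mathcal{H}_\K\varphi>0$ locally to build the feedback strategy of \cref{rmk:verification} that annihilates the martingale part in \cref{lemma:dynamics of DPP}, and contradict part (ii) of \cref{thm:GDPP}. The one place you genuinely deviate is in passing from $w^*$ to the finite-index functions: the paper invokes Barles's lemma to produce local maxima of $w^*_{k_n}-\varphi$ (with an added penalization $|t-t_n|^2+|y-y_n|^2+|s-s_n|^4$), whereas you take a sequence realizing the relaxed upper limit and upgrade the strict-maximum gap $w^*\le\varphi-2\iota$ on $\partial_p\mathcal{O}$ to a uniform bound $w_m\le\varphi-\iota$ for all large $m$ by compactness of the parabolic boundary. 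That replacement is legitimate and arguably more self-contained, and your bookkeeping $v_k<w_k\le w_{2n+2}$ correctly matches the hypothesis of part (ii) of \cref{thm:GDPP}.

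The terminal-time step, however, fails as written because of a sign error. For the subsolution property at $t_0=T$ the perturbation must be $\varphi+\eta\sqrt{T-t}$, not $\varphi-\eta\sqrt{T-t}$: one has $\partial_t\bigl(\varphi+\eta\sqrt{T-t}\bigr)=\varphi_t-\eta/(2\sqrt{T-t})\to-\infty$, which is what drives $\mathcal{L}^{\theta[\varphi]}\varphi=-\varphi_t+\dots\to+\infty$ and makes the drift positivity automatic. Your choice sends $\varphi_t\to+\infty$, hence $\mathcal{L}^{\theta[\varphi]}\varphi\to-\infty$, destroying exactly the inequality you need; it also breaks the test-function property, since subtracting $\eta\sqrt{T-t}$ no longer dominates $w^*$ away from $t=T$, and your lateral-boundary estimate $w_m\le\varphi-\iota$ depends on that domination (the paper's supersolution proof uses $\varphi-\sqrt{T-t}$ precisely because there the inequalities are mirrored). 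In addition, when $\tau_k$ can reach $T$ you must compare $V^{\text{liq},k}_{T}$ with $w_n(T,\cdot)=H_n(\cdot)$ rather than $H(\cdot)$; closing the contradiction there requires the negated boundary inequality $w^*(T,s_0,y_0)>H(s_0,y_0)$ together with the uniform convergence $H_n\downarrow H$ on compacts from \cref{as:semicontinuous limits are finite}, a point your sketch glosses over.
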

	\begin{proof}
		The proof is similar to and inspired by the one for the subsolution property in \cite[Theorem~3.7]{BLZ16a}. The reason is that in this case, the gradient constraints will ensure that a test function $\varphi$, that would possibly contradict the subsolution property, should satisfy $\H_\K \varphi > 0$ locally and hence would be sufficiently ``nice'' to define (locally) control processes (employing the verification argument in \cref{rmk:verification}) that would lead to a contradiction like in \cite{BLZ16a}. 
		For completeness, we outline differences in the line of proof and sketch the main steps.
		
		Let $\varphi$ be a $C^\infty_b([0,T], \RR_+\times \RR)$ test function such that $(t_0, s_0, y_0)\in [0,T]\times\RR_+\times \RR$ is a strict (local) maximum of $w^* - \varphi$, i.e.
		\[
		\text{(strict) } \max_{[0,T]\times \RR_+ \times \RR} (w^* - \varphi) = (w^* - \varphi)(t_0, s_0, y_0) = 0.
		\]  
		First assume that $t_0 < T$. To ease the notations, we will use the variable $x$ to denote the pair $(s, y)$. Because of the special form of the second part of DPP, cf.~\cref{thm:GDPP} (ii), we need to employ $w_k$ (instead of $w$ as we did in the proof of the supersolution property). By \cite[Lemma~6.1]{Barles2013} we can take a sequence $(k_n, t_n, x_n)_{n\geq 1}$ such that $k_n\rightarrow \infty$, any $(t_n, x_n)$ is a local maximum of $w^*_{k_n} - \varphi$, and $(t_n, x_n, w_{k_n}(t_n, x_n))\rightarrow (t_0, x_0, w^*(t_0, x_0))$.
		
		Assume that $\F_\K[\varphi](t_0, x_0) > 0$ and let $\varphi_n(t,x) = \varphi(t,x) + |t-t_n|^2 + |y-y_n|^2 + |s-s_n|^4$. Then $\F_\K[\varphi_n] > 0$ holds in a neighborhood $B$ of $(t_0, x_0)$ that contains $(t_n, x_n)$, for all $n$ large enough. Since we will be working on the local neighborhood $B$ where also $\mathcal{H}_\K \varphi_n > 0$, we can modify (in a smooth way) the functions $h$ and $\varphi_n$ outside of $B$ to be supported on a slightly bigger compact set where $\mathcal{H}_\K \varphi_n > 0$ holds. Thus, after possibly passing to a suitable subsequence,
		there exists $\gamma_n \in \Gamma_{k_n}$ such that 
		\[
		\assetsProcess^{t_n, z_n, \gamma_n}_t = 1/\lambda \cdot \log \left(\lambda \frac{\partial \varphi_n}{\partial s}(t, \calS^{t_n, z_n, \gamma_n}_t, \calY^{t_n, z_n, \gamma_n}_t) + 1\right),\quad  t\geq t_n,
		\]
		where for $z_n =(s_n, y_n, 0, w_{k_n}(t_n, x_n) - n^{-1})$ we set $\calS^{t_n, z_n, \gamma_n}_t = \calS(S^{t_n, z_n, \gamma_n}_t, \assetsProcess^{t_n, z_n, \gamma_n}_t)$ and $\calY^{t_n, z_n, \gamma_n}_t = (Y-\assetsProcess)^{t_n, z_n, \gamma_n}_t$, see \cref{rmk:verification}.
		Let $\tau_n$ be the first time after $t_n$ at which the process $(\calS^{t_n, z_n, \gamma_n}_t, \calY^{t_n, z_n, \gamma_n}_t)_{t\geq t_n}$ leaves $B$. Like in \cite[proof of Thm.~3.7]{BLZ16a}  we conclude, by applying It\^{o}'s formula, using \cref{lemma:dynamics of DPP} and $F[\varphi_n]> 0$ on $B$, that $\PP$-a.s.
		\begin{align*}
			V^{\text{liq},t_n, z_n, \gamma_n}_{\tau_n}\geq \varphi_n(\tau_n, \calS^{t_n, z_n, \gamma_n}_{\tau_n}, (Y-\assetsProcess)^{t_n, z_n, \gamma_n}_{\tau_n}) + v_n - \varphi_n(t_n, x_n).
		\end{align*}
		Now, a contradiction follows as in \cite[proof of Thm.~3.7, subsolution property, (a)]{BLZ16a}.

		For the boundary condition, i.e.~the case $t_0 = T$, the arguments are exactly the same as in  \cite[proof of Thm.~3.7, subsolution property, (b)]{BLZ16a}.
		
	\end{proof}
	
	\subsection{Comparison results for viscosity solutions}
	\label{sec:comparison}
	
	First we provide a comparison result for the pricing PDE \eqref{eq:pricing pde bounded f}, needed for the proof of \cref{thm:pde for general f}. Note that  \eqref{eq:pricing pde bounded f} has the structure
	\begin{equation}\label{eq:pde structure one asset gen}
		0 = - \varphi_t - \frac{\sigma^2 s^2}{2} \varphi_{ss} - B_1(y, f(y)\varphi_s)\varphi_y -s B_2(y, f(y)\varphi_s)\varphi_s - sB_3(y, f(y) \varphi_s),	
	\end{equation}
	where $B_i:\RR^2 \to \RR$, $i = 1, 2, 3$, are bounded and Lipschitz continuous functions. 
	By a change of coordinates, one can transform the PDE as follows.
	\begin{lemma}\label{lemma:comparison reformulation}
		Let $u$ be viscosity subsolution (resp.~supersolution) of the PDE \eqref{eq:pde structure one asset gen}. Fix $\kappa > 0$. Then the function  $\tilde{u}$, which is defined by
		\[
		\tilde{u}(t,s,y) = e^{\kappa t}u(t,sf(y),y) \quad \text{for all } (t,s,y)\in [0,T]\times \RR_+\times \RR,
		\]
		is a viscosity subsolution (resp.~supersolution)  of the PDE
		\begin{align}
			0 = \kappa \varphi -  \varphi_t -\tfrac{\sigma^2 s^2}{2} \varphi_{ss} -& B_1(y, e^{-\kappa t}\varphi_s)\varphi_y + \lambda(y) B_1(y, e^{-\kappa t} \varphi_s) \varphi_s  \notag\\
			&- s B_2(y,e^{-\kappa t} \varphi_s) \varphi_s - e^{\kappa t} sf(y) B_3(y, e^{-\kappa t}\varphi_s). \label{eq:modified pde comparison}
		\end{align}
	\end{lemma}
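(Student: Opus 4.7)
The strategy is a pure change-of-variables argument in the viscosity sense. Let $\Phi:(t,\tilde s,y)\mapsto (t,\tilde s f(y),y)$, which is a $C^1$-diffeomorphism on $[0,T]\times\R_+\times\R$ since $f\in C^1$ and $f>0$. The relation $\tilde u(t,\tilde s,y)=e^{\kappa t}u(t,\tilde s f(y),y)$ says $\tilde u=e^{\kappa t}(u\circ \Phi)$. Given a smooth test function $\varphi$ for $\tilde u$ at $(t_0,\tilde s_0,y_0)$, I would introduce the corresponding test function
\[
\psi(t,s,y):=e^{-\kappa t}\varphi\bigl(t,s/f(y),y\bigr)
\]
for $u$ at $(t_0,s_0,y_0):=(t_0,\tilde s_0 f(y_0),y_0)$. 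Since $\Phi^{-1}$ is a homeomorphism and $e^{-\kappa t}>0$, a straightforward check shows $u-\psi$ attains a (strict) local extremum at $(t_0,s_0,y_0)$ of the same type as $\tilde u-\varphi$ at $(t_0,\tilde s_0,y_0)$; the sub-/super-solution property for $u$ can therefore be tested against $\psi$.

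Next I would compute, using the chain rule and $f'/f=\lambda$, the derivatives of $\psi$ at $(t_0,s_0,y_0)$ in terms of derivatives of $\varphi$ evaluated at $(t_0,\tilde s_0,y_0)$:
\begin{align*}
\partial_t\psi&=e^{-\kappa t_0}\bigl(-\kappa\varphi+\partial_t\varphi\bigr),\qquad \partial_s\psi=e^{-\kappa t_0}\partial_{\tilde s}\varphi/f(y_0),\\
\partial_y\psi&=e^{-\kappa t_0}\bigl(\partial_y\varphi-\tilde s_0\lambda(y_0)\partial_{\tilde s}\varphi\bigr),\qquad \partial_{ss}\psi=e^{-\kappa t_0}\partial_{\tilde s\tilde s}\varphi/f(y_0)^2 .
\end{align*}
Observe in particular the key identity $f(y_0)\partial_s\psi=e^{-\kappa t_0}\partial_{\tilde s}\varphi$, which matches the dependence on $e^{-\kappa t}\partial_s\varphi$ that appears in the arguments of the $B_i$ in the target pde \eqref{eq:modified pde comparison}.

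Plugging these expressions into the sub-solution inequality for $u$ at $(t_0,s_0,y_0)$ with $\psi$, multiplying through by $e^{\kappa t_0}>0$, and using $s_0=\tilde s_0 f(y_0)$ to simplify the quadratic term $\tfrac{\sigma^2 s_0^2}{2}\partial_{ss}\psi=\tfrac{\sigma^2 \tilde s_0^2}{2}e^{-\kappa t_0}\partial_{\tilde s\tilde s}\varphi$, I would obtain
\[
0\ge \kappa\varphi-\partial_t\varphi-\tfrac{\sigma^2\tilde s_0^2}{2}\partial_{\tilde s\tilde s}\varphi -B_1(\cdot)\partial_y\varphi+\lambda(y_0)B_1(\cdot)\partial_{\tilde s}\varphi -\tilde s_0 B_2(\cdot)\partial_{\tilde s}\varphi - e^{\kappa t_0}\tilde s_0 f(y_0)B_3(\cdot),
\]
where the arguments of $B_i$ are $(y_0,e^{-\kappa t_0}\partial_{\tilde s}\varphi)$. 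This is precisely the subsolution inequality for $\tilde u$ against $\varphi$ in \eqref{eq:modified pde comparison}. The supersolution case is identical with reversed inequalities.

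There is no genuine analytical obstacle, since the transformation is a smooth diffeomorphism composed with a strictly positive factor; the only care needed is bookkeeping of the chain-rule terms, in particular the cross-term $-\tilde s\lambda(y)\partial_{\tilde s}\varphi$ in $\partial_y\psi$, which is exactly the source of the additional linear first-order term $+\lambda(y)B_1(\cdot)\partial_{\tilde s}\varphi$ in \eqref{eq:modified pde comparison}. Likewise, the $\kappa\varphi$ term arises solely from the multiplicative $e^{-\kappa t}$ factor in $\psi$, and the prefactor $e^{\kappa t}$ in front of $B_3$ reflects that $B_3$ is an affine (not linear) function of $\partial_s\varphi$ so the multiplication by $e^{\kappa t_0}$ cannot be absorbed into its argument.
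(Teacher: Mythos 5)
Your proof is correct and follows essentially the same route as the paper: the paper performs the identical chain-rule computation (derivatives of $\tilde u$ in terms of $u$, using $f'=\lambda f$) and then asserts that the viscosity property transfers by definition, while you phrase it in the equivalent, slightly more explicit form of pulling back the test function via $\psi(t,s,y)=e^{-\kappa t}\varphi(t,s/f(y),y)$. The only blemish is a bookkeeping slip you share with the paper's own statement: your formula $\partial_y\psi=e^{-\kappa t_0}(\partial_y\varphi-\tilde s_0\lambda(y_0)\partial_{\tilde s}\varphi)$ would produce the term $+\tilde s_0\lambda(y_0)B_1(\cdot)\partial_{\tilde s}\varphi$ rather than $+\lambda(y_0)B_1(\cdot)\partial_{\tilde s}\varphi$ in the transformed pde, an immaterial discrepancy for the subsequent comparison argument.
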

	\begin{proof}
		
		To prove super- (resp.~sub-) solution property, let $(t_0, s_0, y_0)\in [0,T)\times\RR_+ \times \RR$ be any point and $\tilde{\varphi}\in C^\infty_b( [0,T]\times \RR_+\times\RR)$ be a test function for $\tilde{u}$ at $(t_0, s_0, y_0)$, i.e.,
		\begin{equation}\label{eq:Lemma A4 test fn 1}
			\min _{[0,T]\times\RR_+\times \RR} (\text{resp.~} \max ) (\tilde u - \tilde{\varphi}) = \tilde u (t_0, s_0, y_0) - \tilde{\varphi} (t_0, s_0, y_0) = 0.
		\end{equation}
		Consider $\varphi (t,s,y) := e^{-\kappa t} \tilde{\varphi}(t, s/f(y), y)$ for all $(t,s,y)\in [0,T]\times \RR_+\times\times\RR$. It holds by definition $e^{\kappa t}\varphi (t,sf(y),y) = \tilde{\varphi}(t,s,y)$ for all $(t,s,y)\in [0,T]\times \RR_+\times\times\RR$. In particular, $\varphi$ is a test function for $u$ at $(t_0, s_0 f(y_0), y_0)$ since by \eqref{eq:Lemma A4 test fn 1} we get
		\begin{equation}\label{eq:Lemma A4 test fn 2}
			\min _{[0,T]\times\RR_+\times \RR} (\text{resp.~} \max ) (u - \varphi) = u (t_0, s_0 f(y_0), y_0) - \tilde{\varphi} (t_0, s_0 f(y_0), y_0) = 0.
		\end{equation}
		We have 
		\begin{align*}
			\tilde{\varphi}_s(t,s,y) &= e^{\kappa t}f(y)\varphi_s(t, sf(y), y) \\
			\tilde{\varphi}_{ss}(t,s,y) &= e^{\kappa t}f^2(y)\varphi_{ss}(t, sf(y), y)\\
			\tilde{\varphi}_y(t,s,y) &= e^{\kappa t}\lambda(y) f(y)\varphi_s(t, sf(y), y) + e^{\kappa t} \varphi_y(t,sf(y), y)\\
			&= \lambda(y) \tilde{\varphi}_s(t,s,y) + e^{\kappa t} \varphi_y(t,sf(y), y) \\
			\tilde{\varphi}_t(t,s,y) &= e^{\kappa t}\varphi_t(t, sf(y), y) + \kappa e^{\kappa t} \varphi(t,sf(y),y).
		\end{align*}
		By direct application of these identities we derive from the right-hand side of \eqref{eq:modified pde comparison} for $\tilde{\varphi}$ evaluated at $(t_0, s_0, y_0)$ exactly the right-hand side of  \eqref{eq:pde structure one asset gen} for $\varphi$ at $(t_0, s_0f(y_0), y_0)$. By the viscosity property of $u$ and \eqref{eq:Lemma A4 test fn 2} we thus conclude that \eqref{eq:modified pde comparison} holds for $\tilde{\varphi}$ at $(t_0, s_0, y_0)$ with greater or equal (resp. less or equal). This proves the claim. 
		
	\end{proof}
	
	By \Cref{lemma:comparison reformulation} it now suffices to prove comparison for equation \eqref{eq:modified pde comparison} since this would imply a comparison result for \eqref{eq:pde structure one asset gen}. This is done in the following result.
	\begin{theorem}\label{thm:comparison multipl general}
		Let  $u$ (respectively~$v$) be a bounded upper-semicontinuous subsolution (resp.~lower-semicontinuous supersolution) on $[0,T)\times \RR_+\times \RR$ of \eqref{eq:modified pde comparison}. Suppose that  $u\le v$ on $\{T\}\times \RR_+\times\RR$. Then $u\le v$ on $[0,T]\times \RR_+\times\RR$.
	\end{theorem}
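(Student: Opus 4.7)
The plan is to establish comparison for \eqref{eq:modified pde comparison} via the Crandall--Ishii--Lions doubling-of-variables technique, adapted to the unbounded domain $[0,T] \times \RR_+ \times \RR$, the semilinear dependence of the coefficients $B_i(y, \cdot)$ on the gradient $e^{-\kappa t}\partial_s \varphi$, and the degeneracy of the equation in $y$. First I would reduce to a problem on $[0,T] \times \RR \times \RR$ by the logarithmic change $x := \log s$, under which the principal part becomes $\frac{\sigma^2}{2}(\partial_{xx} - \partial_x)$ with constant coefficients and bounded, Lipschitz first-order data on compact sets; the transformed functions $\tilde u(t,x,y) := u(t, e^x, y)$ and $\tilde v(t,x,y) := v(t, e^x, y)$ remain bounded viscosity sub- and supersolutions of the transformed equation, with $\tilde u \leq \tilde v$ at $t = T$.

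I would then argue by contradiction. Assume $M := \sup (\tilde u - \tilde v) > 0$; replacing $\tilde u$ by $\tilde u - \eta/(T - t)$ for small $\eta > 0$ rules out a maximizer at $t = T$. For $\alpha > 0$, I add a quadratic penalty to enforce compactness and double the $(t,x,y)$ variables by considering
\[
\Phi_{\varepsilon, \alpha}(t,x,y,t',x',y') := \tilde u(t,x,y) - \tilde v(t',x',y') - \frac{|t-t'|^2 + |x-x'|^2 + |y-y'|^2}{2\varepsilon} - \alpha(x^2 + x'^2 + y^2 + y'^2).
\]
Since $\tilde u, \tilde v$ are bounded, standard arguments give that $\Phi_{\varepsilon, \alpha}$ attains its maximum at some point $(\hat t, \hat x, \hat y, \hat t', \hat x', \hat y')$ lying in a compact set depending only on $\alpha$, that the three $1/\varepsilon$-penalties vanish as $\varepsilon \to 0$ for fixed $\alpha$, and that the maximum value converges to $M$ as $\alpha \to 0$.

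Applying the parabolic Crandall--Ishii lemma at this doubled maximum produces second-order subjets for $\tilde u$ and superjets for $\tilde v$ whose spatial gradients $p_u, p_v$ differ only by an $O(\alpha)$ term from the penalty gradient $((\hat x - \hat x')/\varepsilon,\,(\hat y - \hat y')/\varepsilon)$. Substituting these into the sub- and supersolution inequalities for \eqref{eq:modified pde comparison} and subtracting, the leading term is $\kappa(\tilde u(\hat t, \hat x, \hat y) - \tilde v(\hat t', \hat x', \hat y')) \geq \kappa M / 2 > 0$ for small enough $\varepsilon, \alpha$. The Hessian difference coming from $\tfrac{\sigma^2}{2}\partial_{xx}$ is controlled by the Ishii matrix inequality, while the first-order $B_i$-terms decompose via Lipschitz continuity of $(y,p) \mapsto B_i(y,p)$ as
\[
B_i(\hat y, q_u) p_u - B_i(\hat y', q_v) p_v = B_i(\hat y, q_u)(p_u - p_v) + \bigl[B_i(\hat y, q_u) - B_i(\hat y', q_v)\bigr] p_v,
\]
with $q_u, q_v$ the scaled $\partial_x$-components of $p_u, p_v$. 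The first piece is $O(\alpha)$ by boundedness of $B_i$, the second is bounded by $L(|\hat y - \hat y'| + |q_u - q_v|)|p_v|$, which is absorbed via AM--GM into the vanishing quantities $|\hat y - \hat y'|^2/\varepsilon + \varepsilon|p_v|^2$ (and an analogous $x$-contribution). The source term $e^{\kappa t} e^x f(y) B_3(y,\cdot)$ is bounded and continuous on the relevant compact set, so its difference also vanishes. Collecting estimates and passing to the limits $\varepsilon \to 0$, then $\alpha \to 0$, $\eta \to 0$ yields $\kappa M \leq 0$, a contradiction, whence $u \leq v$.

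The principal obstacle is handling the semilinear dependence of the $B_i$ on the gradient, which in the doubled problem couples the penalty gradient nonlinearly into the subtracted inequality; the absence of a second-order $y$-term means no ellipticity is available in that direction, so a priori large $|p_v|$ must be absorbed by first-order estimates alone. The crucial ingredient is the \emph{joint} Lipschitz continuity of $(y, p) \mapsto B_i(y, p)$ together with the identity $p_u - p_v = O(\alpha)$, which together allow all gradient-times-penalty products to be bounded via AM--GM by the vanishing quadratic penalties $|\hat x - \hat x'|^2/\varepsilon$ and $|\hat y - \hat y'|^2/\varepsilon$. The invocation of \Cref{lemma:comparison reformulation} (yielding the $\kappa\varphi$ monotonicity) is then what closes the contradiction.
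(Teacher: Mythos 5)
Your high-level plan — Crandall--Ishii--Lions doubling of variables, localization, Ishii's lemma, Lipschitz estimates absorbed by AM--GM, and the $\kappa\varphi$-monotonicity from \Cref{lemma:comparison reformulation} to close the contradiction — is the same framework as the paper's proof. However, two of your technical choices create gaps that the paper's own argument does not have.

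First, the logarithmic substitution $x = \log s$ is counterproductive here. In \eqref{eq:modified pde comparison} the term $\lambda(y)\,B_1(y,e^{-\kappa t}\partial_s\varphi)\,\partial_s\varphi$ has a \emph{bounded} coefficient in the original $s$-coordinate (no prefactor $s$), so under $\partial_s = e^{-x}\partial_x$ it becomes $\lambda(y)\,B_1(\cdot)\,e^{-x}\partial_x\tilde\varphi$, with a coefficient growing like $e^{-x}$ as $x\to -\infty$. Your quadratic localization $\alpha(x^2+\dots)$ only yields $\alpha\hat x^2 \to 0$, i.e.\ $|\hat x| = o(\alpha^{-1/2})$, which does not prevent $e^{-\hat x}\,\alpha\hat x$ from diverging (e.g.\ if $\hat x_\alpha\sim -c/\sqrt\alpha$, then $\alpha\hat x_\alpha e^{-\hat x_\alpha}\sim c\sqrt\alpha\,e^{c/\sqrt\alpha}\to\infty$). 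The same problem hits the source term $e^{\kappa t + x}f(y)B_3(\cdot)$ on the other side $x\to+\infty$. By contrast, in the original $s$-coordinate every first-order coefficient grows at most linearly in $s$, the source linearly, and the second-order coefficient quadratically, all of which \emph{are} compatible with the quadratic penalty $\alpha(s^2+y^2)$; this is why the paper (and \cite{BLZ16a}) works directly in $(t,s,y)$ and simply restricts to a compact box $\overline{\mathcal O}_R\times[-R,R]$ rather than using a global change of variables.

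Second, the paper's proof deliberately does \emph{not} double the time variable, and this is more than cosmetics. With a single $t^n$ and a single spatial penalty gradient $p^n=n(s_1^n-s_2^n)$, the argument $e^{-\kappa t^n}p^n$ of the $B_i$'s is \emph{literally identical} at the sub- and supersolution points, so the differences $B_i(y_1^n,e^{-\kappa t^n}p^n)-B_i(y_2^n,e^{-\kappa t^n}p^n)$ are controlled by Lipschitz continuity in $y$ alone, producing precisely the estimates listed at the end of the paper's proof. Your doubling of $t$ (with the \emph{same} $\varepsilon$ as the spatial penalty) forces $q_u\neq q_v$ via the prefactor $e^{-\kappa\hat t}$ vs.\ $e^{-\kappa\hat t'}$, so $|q_u-q_v|$ contains a piece $\sim|\hat t - \hat t'|\,|p_v|$ and the resulting cross-term $|\hat t-\hat t'|\,|p_v|^2 \sim |\hat t-\hat t'|\,|\hat s-\hat s'|^2/\varepsilon^2$ does not vanish merely from $\varepsilon^{-1}(|\hat t-\hat t'|^2+|\hat s-\hat s'|^2)\to 0$. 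This is fixable either by using the parabolic form of Ishii's lemma (Theorem~8.3 in \cite{viscosity_guide}, as the paper does) so that $t$ is never doubled, or by introducing a separate penalty parameter for time and sending it to zero first; but as written your AM--GM absorption does not account for it, and in fact removing the time-doubling lets you recover the paper's key simplification that the $p$-slot of the $B_i$'s never needs a Lipschitz bound.

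In short: the decomposition, the role of Lipschitz $B_i$, and the use of \Cref{lemma:comparison reformulation} are all correctly identified, but you should drop the $\log s$ change of variable, do the localization in the original $(s,y)$ coordinates, and treat time via the parabolic Ishii lemma without doubling it, which is exactly the combination the paper employs.
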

	\begin{proof}
		To prove the claim by contradiction, let us suppose that
		\[
		\sup_{(t,s,y)\in [0,T]\times \RR_+\times \RR} (u - v)(t,s,y) > 0.
		\] 
		Then we can find $R > 1$ such that 
		$$ \sup_{(t,s,y)\in [0,T]\times \mathcal{O}_R\times [-R, R]} (u - v)(t,s,y) > 0,$$ 
		where $\mathcal{O}_R := (1/R,R)$. In particular, there exists $\delta > 0$ and $(t_0,s_0, y_0)\in \overline{\mathcal{O}}_R\times [-R, R]$ such that $(u - v)(t_0,s_0, y_0) = \delta > 0$.

		Now consider, for $n\in \mathbb{N}$, the bounded upper-semicontinuous function
		\[
		\Phi_n(t,s_1,s_2,y_1, y_2) := u(t,s_1, y_1) - v(t,s_2,y_2) - \frac{n}{2}(s_1-s_2)^2 - \frac{n}{2}(y_1-y_2)^2 .
		\]
		It attains its maximum at some $(t^n, s^n_1, s^n_2, y^n_1, y^n_2)\in [0,T]\times \overline{\mathcal{O}}_R^2\times [-R, R]^2$ by compactness of the set, and we clearly have
		\begin{equation}\label{eq:comp one asset gen contr}
			\Phi_n(t^n,s^n_1,s^n_2,y^n_1, y_2^n) \ge \delta \quad \text{for all } n\in \NN.
		\end{equation}
		By the arguments as in \cite[proof of Lemma~3.11]{BLZ16a} one obtains (after possibly passing to a subsequence)
		that
		\begin{equation}\label{doubling variable single asset gen}
			n(s^n_1-s^n_2)^2 +n (y^n_1-y^n_2)^2\to 0\  \text{ as } n\to \infty.
		\end{equation}
		Note that \eqref{doubling variable single asset gen} also implies $n(s^n_1-s^n_2)(y^n_1-y^n_2) \to 0$ as $n\to \infty$.
		
		Now, by Ishii's lemma as stated in \cite[Theorem~8.3]{viscosity_guide},
		there exist $(b^n, X^n, Y^n)\in \RR\times S_2\times S_2$ such that with $p^n = n(s_1^n - s_2^n)$ and $q^n = n(y_1^n - y_2^n)$ we have
		\begin{align*}
			&(b^n, (p^n, q^n), X^n)\in \bar{\mathcal{P}}_{\O_a}^{2,+} u(t^n, s_1^n, y^n_1),\\
			&(b^n, (p^n, q^n), Y^n)\in \bar{\mathcal{P}}_{\O_a}^{2,-} v(t^n, s_2^n, y^n_2),
		\end{align*}
		where $X^n$ and $Y^n$ satisfy
		\begin{equation}\label{eq:Ishii cond for X and Y}
			\begin{pmatrix}
				X^n & 0\\
				0 	& - Y^n
			\end{pmatrix}
			\leq 
			3n
			\begin{pmatrix}
				I_2 & - I_2\\
				-I_2		&  I_2
			\end{pmatrix},
		\end{equation}
		and where $S_2$ denotes the set of $2\times 2$ symmetric non-negative matrices and  $I_2\in S_2$ is the identity matrix. 
		Using  the viscosity property of $u$ and $v$ at $(t^n, s_1^n, y^n_1)$ and $(t^n, s_2^n, y^n_2)$ respectively, we have
		\begin{align*}
			&\kappa u(t^n, s_1^n, y^n_1)  - b_n - \tfrac{1}{2}\sigma^2  (s_1^n)^2 X^n_{11} + L(s_1^n, y_1^n, p^n, q^n) \le 0 \\
			&\kappa v(t^n, s_2^n, y^n_2)  - b_n - \tfrac{1}{2}\sigma^2  (s_2^n)^2 Y^n_{11} + L(s_2^n, y_2^n, p^n, q^n)\ge 0,
		\end{align*}
		where 
		\[
		L(t,s,y,p,q) := - B_1(y, e^{-\kappa t}p) q + \lambda(y) B_1(y, e^{-\kappa t}p) p - s B_2(y,e^{-\kappa t}p) p - e^{\kappa t} sf(y) B_3(y, e^{-\kappa t}p). 
		\] 
		As a consequence,
		\begin{align}
			0<\kappa \delta &< \kappa(u(t^n, s_1^n, y^n_1) - v(t^n, s_2^n, y^n_2))  \leq \notag\\
			&\le  - \tfrac{1}{2}\sigma^2  (s_2^n)^2 Y^n_{11} +  \tfrac{1}{2}\sigma^2  (s_1^n)^2 X^n_{11} + \notag \\
			&\qquad \qquad+ L(t^n, s_2^n, y_2^n, p^n, q^n) - L(t^n, s_1^n, y_1^n, p^n, q^n).
			\label{eq:comparison singl ass gen contr}
		\end{align}
		On the other hand, by \eqref{eq:Ishii cond for X and Y} we get that  
		$$\tfrac{1}{2}\sigma^2  (s_1^n)^2 X^n_{11} - \tfrac{1}{2}\sigma^2  (s_2^n)^2 Y^n_{11} \le \tfrac{3}{2}\sigma^2 n(s_1^n - s_2^n)^2,$$
		what  converges to $0$ for $n\to \infty$ due to \eqref{doubling variable single asset gen}.
		Let us now analyze the difference $ L(t^n, s_2^n, y_2^n, p^n, q^n)$ $- L(t^n, s_1^n, y_1^n, p^n, q^n) $.
		With  $C$ (resp.~$C_R$) denoting a Lipschitz constant (depending on $R$), that may change from line to line,
		we get estimates for the corresponding terms as follows:
		\[
		|B_1(y^n_1, e^{-\kappa t^n}p^n) q^n -  B_1(y^n_2, e^{-\kappa t}p^n) q^n|  \le C |y^n_1 - y^n_2||q^n|,
		\]
		\[
		| \lambda(y_1^n) B_1(y_1^n, e^{-\kappa t^n}p^n) p^n - \lambda(y_2^n) B_1(y_2^n, e^{-\kappa t^n}p^n) p^n|  \le C |y_1^n - y_2^n||p^n|,
		\]
		\[
		|s^n_1 B_2(y^n_1,e^{-\kappa t^n}p^n) p^n - s^n_2 B_2(y^n_2,e^{-\kappa t^n}p^n) p^n|  \leq  C|(s^n_1 - s^n_2) p^n| + C_{R}|(y^n_1 - y^n_2)p^n|,
		\]
		\[
		|e^{\kappa t^n} s_1^nf(y_1^n) B_3(y_1^n, e^{-\kappa t^n}p^n) - e^{\kappa t^n} s_2^n f(y_2^n) B_3(y_2^n, e^{-\kappa t^n}p^n)| \leq C_{R}( |s_1^n - s_2^n| + |y_1^n - y_2^n|).
		\]
		As all estimates from above 
		vanish for $n\to \infty$, the right-hand side in \eqref{eq:comparison singl ass gen contr} is bounded by something that converges to $0$ as $n\to \infty$. But this yields a contradiction for large $n$.
	\end{proof}

	Because of lack of a precise reference, we provide a comparison result also in the case of delta constraints leading to the variational inequality \eqref{eq:pde const lambda}.
	
	\begin{theorem}\label{thm:comparison}
		Suppose that the resilience function $h$ is Lipschitz continuous and \cref{as:semicontinuous limits are finite} holds.	Let $u$ (resp.~$v$) be bounded upper- (resp.~lower-) semicontinuous viscosity subsolution (resp.~supersolution) of the variational inequality \eqref{eq:pde const lambda} with the terminal condition \eqref{eq:pde const lambda BC}. Then $u\le v$ on $[0,T]\times \RR_+ \times \RR$.
	\end{theorem}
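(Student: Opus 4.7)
The plan is to adapt the doubling-of-variables argument from the proof of \cref{thm:comparison multipl general} to the variational inequality \eqref{eq:pde const lambda}, adding a perturbation that rules out the degenerate case where the gradient-constraint branch $\mathcal{H}_\K$ is the active one in the sub-solution inequality. First, I would exponentially rescale $\tilde u(t,s,y) := e^{\kappa t} u(t,s,y)$ and $\tilde v(t,s,y) := e^{\kappa t} v(t,s,y)$ with $\kappa > 0$ large, which inserts a strict zero-order term $\kappa \tilde u$ (resp.\ $\kappa \tilde v$) that will drive the final contradiction. Arguing by contradiction, assume $M := \sup(\tilde u - \tilde v) > 0$; after subtracting a localizer of the form $\varepsilon(s + 1/s + y^2)$ (harmless as $\varepsilon \downarrow 0$ and only perturbing the equation by controlled $O(\varepsilon)$ amounts), the positive supremum is attained on a bounded box $[0,T]\times[1/R, R]\times[-R, R]$.

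The key new ingredient, compared with \cref{thm:comparison multipl general}, is to replace $\tilde u$ by the shifted function $\tilde u^\varepsilon(t,s,y) := \tilde u(t,s,y) - \varepsilon s$. Translating jets shows that $(b,(p,q),X) \in \bar{\mathcal{P}}^{2,+}\tilde u^\varepsilon(t,s,y)$ if and only if $(b,(p+\varepsilon,q),X) \in \bar{\mathcal{P}}^{2,+}\tilde u(t,s,y)$. I would then double variables as in \cref{thm:comparison multipl general} via
\[
    \Phi_n(t, s_1, s_2, y_1, y_2) := \tilde u^\varepsilon(t, s_1, y_1) - \tilde v(t, s_2, y_2) - \tfrac{n}{2}(s_1 - s_2)^2 - \tfrac{n}{2}(y_1 - y_2)^2,
\]
pick a maximizer $(t^n, s_1^n, s_2^n, y_1^n, y_2^n)$ on the compact box, obtain the standard estimate $n[(s_1^n - s_2^n)^2 + (y_1^n - y_2^n)^2] \to 0$, and apply Ishii's lemma to produce jets $(b^n,(p^n, q^n), X^n) \in \bar{\mathcal{P}}^{2,+}\tilde u^\varepsilon(t^n, s_1^n, y_1^n)$ and $(b^n,(p^n, q^n), Y^n) \in \bar{\mathcal{P}}^{2,-}\tilde v(t^n, s_2^n, y_2^n)$ with $X^n \le Y^n$ up to the usual $O(n(s_1^n-s_2^n)^2)$ error.

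The super-solution property of $\tilde v$ yields $\lambda p^n + 1 - e^{-\lambda K} \ge 0$. Applied to the shifted jet of $\tilde u$, the constraint branch evaluates to $\lambda(p^n + \varepsilon) + 1 - e^{-\lambda K} \ge \lambda \varepsilon > 0$, so the sub-solution inequality at $\tilde u(t^n, s_1^n, y_1^n)$ must realize through the parabolic branch $\mathcal L^{\theta[\,\cdot\,]}$. Subtracting the super-solution inequality from it gives
\[
    \kappa\bigl(\tilde u(t^n, s_1^n, y_1^n) - \tilde v(t^n, s_2^n, y_2^n)\bigr) \le \tfrac{1}{2}\sigma^2\bigl[(s_1^n)^2 X_{11}^n - (s_2^n)^2 Y_{11}^n\bigr] + \bigl(h(y_2^n + \theta(p^n)) - h(y_1^n + \theta(p^n+\varepsilon))\bigr) q^n,
\]
whose right-hand side vanishes as $n \to \infty$: the matrix term via the Ishii estimate $X^n - Y^n = O(n(s_1^n - s_2^n)^2)$, the $h$-term via $|y_1^n - y_2^n| \to 0$, boundedness of $q^n$, and Lipschitz continuity of $h$ and of $\theta[\,\cdot\,]$ on the admissible range $\{\lambda w_S + 1 \ge e^{-\lambda K}\}$, up to an $O(\varepsilon)$ residual from the shift in $\theta$ and the localizer. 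Since the left-hand side converges to at least $\kappa M > 0$ (after accounting for the $\varepsilon s$-shift), first letting $n \to \infty$ and then $\varepsilon \downarrow 0$ yields $\kappa M \le 0$, a contradiction. The boundary condition \eqref{eq:pde const lambda BC} at $t = T$ is handled in parallel: the same perturbation excludes the constraint branch and reduces to $u(T,\cdot) \le H \le v(T,\cdot)$, equally impossible.

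I expect the main obstacle to be the joint calibration of the perturbation $\varepsilon$, the rescaling $\kappa$, and the localizer so that the $O(\varepsilon)$-residuals introduced into the $\mathcal{L}^{\theta[\,\cdot\,]}$-branch by the shift and by the localizer are strictly dominated by $\kappa M$, uniformly in $n$. This is delicate because $\theta[w] = \lambda^{-1}\log(\lambda w_S + 1)$ is singular at $\lambda w_S + 1 = 0$, but the super-solution's constraint keeps $\lambda p^n + 1 \ge e^{-\lambda K} > 0$ bounded away from the singularity, so Lipschitz estimates on $\theta[\,\cdot\,]$ and $h$ give uniform control on the compact region. A secondary technical point, the degeneracy of \eqref{eq:pde const lambda} in the $y$-variable (no $\partial_{yy}$), is benign because $y \mapsto h(y+\theta)$ is Lipschitz uniformly in $\theta$ on the effective range, so standard penalization in $y$ closes the argument as in \cref{thm:comparison multipl general}.
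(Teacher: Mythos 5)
Your architecture matches the paper's own proof: exponential rescaling $e^{\kappa t}$ to create a strict zero-order term, doubling of variables on a compact region, Ishii's lemma with shared first-order jet components $(p^n,q^n)$, using the supersolution's constraint branch $\lambda p^n+1-e^{-\lambda K}\ge 0$ to force the subsolution onto the parabolic branch, and a separate treatment of the case $t_n=T$ via continuity of $H$ (the paper's Case 1). Your $\varepsilon s$-shift is actually \emph{more} careful than the paper on the branch-selection step: the paper passes from $\min\{A,B\}\le 0$ and $B\ge 0$ directly to $A\le 0$, which strictly speaking requires $B>0$, and your perturbation supplies exactly that strictness.

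However, the way you close the estimate contains a genuine error: $q^n=n(y_1^n-y_2^n)$ is \emph{not} bounded. The doubling estimate only gives $n|y_1^n-y_2^n|^2\to 0$, hence $|q^n|=\sqrt{n}\cdot\sqrt{n|y_1^n-y_2^n|^2}$ may grow like $o(\sqrt{n})$ but unboundedly. The paper never needs boundedness of $q^n$ because in its version the coefficient multiplying $q^n$ differs between the two inequalities only through $y$, so the error is $C|y_1^n-y_2^n|\,|q^n|=Cn|y_1^n-y_2^n|^2\to 0$. Your shift breaks this product structure: the subsolution's parabolic branch carries $h\bigl(y_1^n+\theta(p^n+\varepsilon)\bigr)q^n$ while the supersolution's carries $h\bigl(y_2^n+\theta(p^n)\bigr)q^n$, and the Lipschitz bound on $\theta[\cdot]$ gives a residual of order $\varepsilon|q^n|$, which under your stated order of limits (first $n\to\infty$, then $\varepsilon\downarrow 0$) can diverge. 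The gap is reparable — e.g.\ couple the perturbation to $n$ by taking $\varepsilon_n$ with $\varepsilon_n\sqrt{n}\to 0$ (strict positivity $\lambda e^{-\kappa t}\varepsilon_n>0$ of the constraint branch is only needed pointwise in $n$, not uniformly), after which the residual $\varepsilon_n|q^n|\to 0$ and the contradiction with $\kappa M>0$ goes through — but as written the step ``the right-hand side vanishes as $n\to\infty$ up to an $O(\varepsilon)$ residual'' is false.
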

	\begin{proof}
		
		We argue by contradiction. For any $a > 0$, set  $\O_a := [a,\infty) \times [-1/a, 1/a]$. Suppose that 
		\(\sup_{(t,s,y)\in [0,T]\times \RR_+\times \RR} (u - v) > 0.\)
		Then there exists some $a > 0$ such that $\sup_{(t,s,y)\in [0,T]\times \O_a} (u - v) > 0$. For $\kappa > 0$, consider $\tilde{u}:= e^{\kappa t}u$ and $\tilde{v}:= e^{\kappa t}v$. Then $\tilde{u}$ (resp.~$\tilde{v}$) is a viscosity sub- (resp.~super-) solution of 
		\begin{align*}
			\min \{\kappa \varphi + \tilde{\mathcal{L}} [\varphi], \mathcal{H}_{\K,t}\varphi  \} = 0
		\end{align*}	 
		with the boundary condition $\min\{\varphi(T, \cdot) - H(\cdot), \mathcal{H}_{\K,T} \varphi\} = 0$, where 
		\[
		\tilde{\mathcal{L}} [\varphi](t, s, y) = -\partial_t \varphi + h(y + 1/\lambda\log(\lambda e^{-t\kappa}\partial_s\varphi + 1)) \partial_y\varphi - 1/2 \sigma^2 s^2 \partial_{ss} \varphi
		\]
		and $\mathcal{H}_{\K,t}\varphi = \lambda e^{-\kappa t}\partial_s\varphi + 1 - e^{-\lambda K}$ for $t\in [0,T]$.
		
		Consider 
		\[
		\Theta_n := \sup_{(t,x_1,x_2)\in [0,T]\times \O_a^2} \tilde{u}(t,x_1) - \tilde{v}(t,x_2) - \frac{n}{2}|x_1-x_2|^2. 
		\]	
		We have $\Theta_n > \iota$ for some $\iota>0$. Since $\tilde{u} - \tilde{v}$ is upper-semicontinuous,
		for any $n$ the supremum is attained    as a  maximum   at some $(t_n, x^n_1, x^n_2)$ 
		in the compact set $[0,T]\times \O_a^2$. By  arguments as in \cite[proof of Lemma~3.11]{BLZ16a}, after possibly passing to a subsequence, we obtain
		\begin{align}
			\label{eq*conclusion 1.}
			&  \text{$\lim_{n\rightarrow\infty}\Theta_n = \sup_{(t,s,y)\in [0,T]\times \O_a} (\tilde{v} - \tilde{u}) \geq \iota >  0$, and}
			\\
			\label{eq*conclusion 2.}
			&  \text{ $n|x_1^n- x_2^n|^2 \rightarrow 0$ as $n\rightarrow \infty$.}
		\end{align}
		Note also  that 
		\begin{equation}\label{eq:comp contrad 1}
			\lim_{n\rightarrow \infty}\tilde{u}(t_n, x_1^n) - \tilde{v}(t_n, x_2^n) \geq \iota.
		\end{equation}
		
		\textbf{Case 1: } Suppose, after passing to a subsequence, that $t_n = T$ for all $n$. Then Ishii's lemma together with the viscosity property of $\tilde{u}$ and $\tilde{v}$ give
		\begin{align*}
			&\min\left\{ \tilde{u}(T, x_1^n) - H(x_1^n), \ \lambda e^{-\kappa T} p_n + 1 - e^{-\lambda K}\right\} \leq 0,\\
			&\min\left\{ \tilde{v}(T, x_2^n) - H(x_2^n), \ \lambda e^{-\kappa T} p_n + 1 - e^{-\lambda K}\right\} \geq 0,
		\end{align*}
		where $p_n = n(s_1^n - s_2^n).$
		Hence we conclude that $ \tilde{u}(T, x_1^n) \leq H(x_1^n)$ for all $n$. However, in this case since $ \tilde{v}(T, x_2^n) \geq  H(x_2^n)$ for all $n$ we have 
		\[
		\tilde{v}(T, x_2^n) \geq  H(x_2^n) \geq H(x_2^n) - H(x_1^n) + \tilde{u}(T, x_1^n),
		\]
		which contradicts \eqref{eq:comp contrad 1} for large $n$ by continuity of $H$.
		
		\textbf{Case 2: } We can now assume (after passing to a subsequence) that $t_n < T$ for all $n$. Set
		\[
		p_n := n(s^n_1 - s^n_2),\quad  q_n := n(y^n_1 - y^n_2).
		\]
		By Ishii's lemma, see \cite[Theorem~8.3]{viscosity_guide}, using the viscosity property of $\tilde{u}$ and $\tilde{v}$, there exist $a_n\in \RR$ and symmetric $2\times 2$ matrices $A_n, B_n$ (that satisfy a bound like in \eqref{eq:Ishii cond for X and Y})  with
		\[
		(a_n, (p_n, q_n), A_n)\in \bar{\mathcal{P}}_{\O_a}^{2,+} \bar{u}(t_n, x_1^n), \ \ (a_n, (p_n, q_n), B_n)\in \bar{\mathcal{P}}_{\O_a}^{2,-} \bar{v}(t_n, x_2^n),
		\] 
		such that
		\begin{align*}
			&\min\left\{ -a_n +  L(t_n, x_1^n,\tilde{u}(t_n, x_1^n), p_n,q_n, A_n), \ \lambda e^{-\kappa t_n} p_n + 1 - e^{-\lambda K}\right\} \leq 0,\\
			&\min\left\{ -a_n + L(t_n, x_2^n,\tilde{v}(t_n, x_2^n), p_n,q_n, B_n), \ \lambda e^{-\kappa t_n} p_n + 1 - e^{-\lambda K}\right\} \geq 0,
		\end{align*}
		where for $t\in [0,T]$, $x = (x_1, y_1)\in \RR^2$, $\ell, p,q\in \RR$ and a $2\times 2$ matrix $A$, we define
		\[
		L(t, x = (x_1, y_1), \ell, p,q, A) :=  \kappa \ell  + h(y_1 + 1/\lambda \log(\lambda e^{-\kappa t}p + 1))q - 1/2 \sigma^2 x_1^2 A_{11}.
		\]
		Therefore, we have
		$-a_n +  L(t_n, x_1^n,\tilde{u}(t_n, x_1^n), p_n, q_n, A_n) \leq 0.$
		
		On the set $\{(t, y, p)\in [0,T]\times \RR\times \RR\mid \lambda e^{-\kappa t} p + 1 - e^{-\lambda K} \geq 0\}$, the function
		\[
		(t, y, p)\mapsto h(y + 1/\lambda \log(\lambda e^{-\kappa t}p + 1))
		\]
		is Lipschitz continuous. Thus, we are exactly in the setup the proof of \cref{thm:comparison multipl general} and a contradiction argument yields the claim like in the proof there:  One gets the estimate
		\[
		\kappa(\tilde{u}(t_n, x_1^n) - \tilde{v}(t_n, x_2^n)) \leq C \left( n|x_1^n - x_2^n|^2   + 1/n\right)
		\]
		for a constant $C> 0$ not depending on $n$, what contradicts 
		\eqref{eq*conclusion 1.} for $n$ large.
	\end{proof}
	
	\begin{remark}\label{rmk:proof of uniqueness}
		By Theorems \ref{thm:supersolution} and \ref{thm:subsolution},
		$w_*$ (resp.~$w^*$) is a supersolution (subsolution) of \eqref{eq:pde const lambda} with boundary condition  \eqref{eq:pde const lambda BC}. By \cref{thm:comparison}, we have $w_*\geq w^*$ on $[0,T]\times \RR_+\times \RR$. It is clear by definition that $w_*\leq w^*$, hence $w_* = w^*$ on $[0,T]\times \RR_+\times \RR$. On the other hand, $w_*\leq w \leq w^*$ on 
		$[0,T)\times \RR_+\times \RR$. To show equality also at $t = T$, note that the super-/sub-solution property of $w_{*}$/$w^{*}$, respectively, 
		implies $w_*(T,\cdot)\ge H(\cdot)$ and $w_*(T,\cdot)\le H(\cdot)$. So $w_*$ equals $H$ at $T$. Since 
		also $H(\cdot) = w(T,\cdot)$, the equality $w_* = w^* = w$ holds on $\{T\}\times \RR_+\times \RR$. Hence $w_* = w^* = w$ on  $[0,T]\times \RR_+\times \RR$, what implies continuity.
		
		The same conclusion holds for  \eqref{eq:pricing pde bounded f}  with the boundary condition \eqref{eq:def of H}.
	\end{remark}

\end{appendix}

{\footnotesize 

\bibliographystyle{spmpsci}  		

\bibliography{paper}
}
\end{document}